
\documentclass[12pt]{amsart}
\usepackage{amssymb,amscd}
\usepackage{verbatim}

\usepackage{xcolor}

\usepackage{amsmath,amssymb,graphicx,mathrsfs}   
\usepackage{enumerate}
\usepackage[colorlinks=true,allcolors = blue]{hyperref} 

\usepackage{tikz}
\usetikzlibrary{matrix}

\usepackage[all]{xy}


\usepackage{amssymb,amsfonts,amsthm,amsmath,calligra}
\usepackage{slashed}
\usepackage{yfonts}
\usepackage{mathrsfs,pifont}
\usepackage{float}

\usepackage[all]{xy}

\usepackage{tikz}

\usepackage{graphicx}
\usepackage{xcolor}

\usepackage{amssymb,amsfonts,amsthm,amsmath}
\usepackage[all]{xy}
\usepackage{slashed}
\usepackage{yfonts}
\usepackage{mathrsfs,pifont}


\textwidth 6.5truein
\textheight 8.67truein
\oddsidemargin 0truein
\evensidemargin 0truein
\topmargin 0truein

\let\frak\mathfrak

\def\>{\relax\ifmmode\mskip.666667\thinmuskip\relax\else\kern.111111em\fi}
\def\<{\relax\ifmmode\mskip-.333333\thinmuskip\relax\else\kern-.0555556em\fi}
\def\vsk#1>{\vskip#1\baselineskip}
\def\vv#1>{\vadjust{\vsk#1>}\ignorespaces}
\def\vvn#1>{\vadjust{\nobreak\vsk#1>\nobreak}\ignorespaces}

  \let\ssize\scriptstyle
\let\sssize\scriptscriptstyle

\let\Medskip\medskip
\def\medskip{\par\Medskip}
\let\Bigskip\bigskip
\def\bigskip{\par\Bigskip}

\let\Maketitle\maketitle
\def\maketitle{\Maketitle\thispagestyle{empty}\let\maketitle\empty}

\newtheorem{thm}{Theorem}[section]
\newtheorem{cor}[thm]{Corollary}
\newtheorem{lem}[thm]{Lemma}
\newtheorem{prop}[thm]{Proposition}

\newtheorem{defn}[thm]{Definition}

\newtheorem{exmp}[thm]{Example}

\theoremstyle{definition}                                  
\numberwithin{equation}{section}

\theoremstyle{definition}
\newtheorem*{rem}{Remark}
\newtheorem*{example}{Example}

\let\mc\mathcal
\let\nc\newcommand

\let\la\lambda

\let\phi\varphi

\let\om\omega

\let\der\partial

\let\geq\geqslant

\let\leq\leqslant

\let\on\operatorname
\let\bi\bibitem
\let\bs\boldsymbol

\def\C{{\mathbb C}}
\def\Z{{\mathbb Z}}

\def\F{{\mathbb F}}

\def\+#1{^{\{#1\}}}

\def\Gr{\on{Gr}}

\def\beq{\begin{equation}}
\def\eeq{\end{equation}}
\def\be{\begin{equation*}}
\def\ee{\end{equation*}}

\nc{\bea}{\begin{eqnarray*}}
\nc{\eea}{\end{eqnarray*}}
\nc{\bean}{\begin{eqnarray}}
\nc{\eean}{\end{eqnarray}}

\let\ga\gamma

\nc{\Il}{{\mc I_{\bs\la}}}
\nc{\bla}{{\bs\la}}
\nc{\Fla}{\F_\bla}
\nc{\tfl}{{T^*\Fla}}
\nc{\GL}{{GL_n(\C)}}
\nc{\GLC}{{GL_n(\C)\times\C^*}}

\let\sd s 

\def\ddk_#1{\kk_{#1}\<\>\frac\der{\der\<\>\kk_{#1}}}

\def\bul{\mathbin{\raise.2ex\hbox{$\sssize\bullet$}}}
\def\intt{\mathchoice
{\mathop{\raise.2ex\rlap{$\,\,\ssize\backslash$}{\intop}}\nolimits}
{\mathop{\raise.3ex\rlap{$\,\sssize\backslash$}{\intop}}\nolimits}
{\mathop{\raise.1ex\rlap{$\sssize\>\backslash$}{\intop}}\nolimits}
{\mathop{\rlap{$\sssize\<\>\backslash$}{\intop}}\nolimits}}

\let\kk q 
\let\cc c

\let\Ko K

\def\GZ/{Gelfand-Zetlin}
\def\KZ/{{\slshape KZ\/}}
\def\qKZ/{{\slshape qKZ\/}}
\def\XXX/{{\slshape XXX\/}}

\nc{\A}{{\mc A}}

\nc{\hsl}{\widehat{{\frak{sl}_2}}}

\nc{\BC}{{ \mathbb C}}
\nc{\lra}{\longrightarrow}
\nc{\CO}{{\mathcal{O}}}
\nc{\BZ}{{ \mathbb Z}}
\nc{\hfn}{\hat{\frak{n}}}
\nc\Zs{{\Z/p^s\Z}}
\nc\Zo{{\Zs[z]^0}}
\nc\gr{{\on{gr}}}

\nc\fD{{\frak D}}

\newcommand{\Mas}{\Phi}
\newcommand{\Ghost}{\mathsf{L}}

\newcommand{\T}{\mathsf{T}}
\newcommand{\Ver}{\mathsf{V}}

\newcommand{\matC}{\mathbb{C}}
\newcommand{\matN}{\mathbb{N}}

\newcommand{\matQ}{\mathbb{Q}}
\newcommand{\matP}{\mathbb{P}}
\newcommand{\matZ}{\mathbb{Z}}
\newcommand{\matF}{\mathbb{F}}

\newcommand{\qm}{\mathsf{QM}}

\newcommand{\bv}{\mathsf{v}}
\newcommand{\bw}{\mathsf{w}}

\newcommand{\G}{\mathsf{G}}

\usepackage{tikz}

\usetikzlibrary{decorations}
\usetikzlibrary{decorations.pathmorphing}
\usetikzlibrary{calc}


\newsavebox{\Ipm}
\savebox{\Ipm}{%
\begin{tikzpicture}[baseline= {($(current bounding box.base)-(0pt,-20pt)$)}]
\begin{scope}
\draw[dashed,line width=0.35mm] (0,0)-- (10,0) ;
\node[circle,draw,minimum size=4mm,fill=black] (c) at (0,0){};
\node[circle,draw,minimum size=4mm,fill=black] (c) at (2,0){};
\node[circle,draw,minimum size=4mm,fill=black] (c) at (4,0){};
\node[circle,draw,minimum size=4mm,fill=black] (c) at (6,0){};
\node[circle,draw,minimum size=4mm,fill=black] (c) at (8,0){};
\node[circle,draw,minimum size=4mm,fill=black] (c) at (10,0){};
\draw [-to,line width=0.5mm](0,0) -- (1.8,0);
\draw [-to,line width=0.5mm](8,0) -- (9.8,0);
\node[rectangle,draw,minimum size=4mm,fill=black] (c) at (4,-2){};
\draw [-to,line width=0.5mm](4,-2) -- (4,-0.2);
\node[rectangle,draw,minimum size=4mm,fill=black] (c) at (6,-2){};
\draw [-to,line width=0.5mm](6,-2) -- (6,-0.2);
\node at (0,0.5) { $\bv_1$};
\node at (2,0.5) { $\bv_2$};
\node at (4,0.5) { $\bv_k$};
\node at (6,0.5) { $\bv_{n-k}$};
\node at (8,0.5) { $\bv_{n-2}$};
\node at (10,0.5) { $\bv_{n-1}$};
\node at (4,-2.6) { $z_1$};
\node at (6,-2.6) { $z_2$};
\end{scope}
\end{tikzpicture}}

\begin{document}

\hrule width0pt
\vsk->

\title[The $p\,$-adic approximations of vertex functions 
via $3D$-mirror symmetry]
{The $p\,$-adic approximations of vertex functions via $3D$-mirror symmetry}

\author
[Andrey Smirnov and Alexander Varchenko]
{Andrey Smirnov$^{\star}$ and Alexander Varchenko$^{\diamond}$}

\maketitle

\begin{center}
{ Department of Mathematics, University
of North Carolina at Chapel Hill\\ Chapel Hill, NC 27599-3250, USA\/}

\end{center}

\vsk>
{\leftskip3pc \rightskip\leftskip \parindent0pt \Small
{\it Key words\/}:  Superpotentials; Vertex Functions; 
Dwork-Type Congruences, Ghosts.

\vsk.6>
{\it 2020 Mathematics Subject Classification\/}: 
\par}


{\let\thefootnote\relax
\footnotetext{\vsk-.8>\noindent
$^\star\<${\sl E\>-mail}:\enspace asmirnov@email.unc.edu
\\
$^\diamond\<${\sl E\>-mail}:\enspace  anv@email.unc.edu}}

\begin{abstract}

Using the $3D$ mirror symmetry we construct a system of polynomials $\T_s(z)$ with integral coefficients which solve 
the quantum differential equitation of $X=T^{*}\Gr(k,n)$ modulo $p^s$, where $p$ is a prime number. 
We show that  the sequence $\T_s(z)$ converges in the $p$-adic norm to the
Okounkov's vertex function of $X$ as $s\to \infty$. We prove that $\T_s(z)$ satisfy Dwork-type congruences
which lead to a new infinite product presentation of the vertex function modulo $p^s$.

\end{abstract}


\setcounter{footnote}{0}
\renewcommand{\thefootnote}{\arabic{footnote}}

\section{Introduction}

\subsection{}
The vertex functions are among the main objects studied in enumerative geometry of Nakajima's quver varieties \cite{Oko17}. These functions are analogs of  Givental's 
 $J$-functions in quantum cohomology \cite{Giv96}. The vertex functions are defined as power series
$$
\Ver(z) = \sum\limits_{d=0}^{\infty}\, c_d\, z^d  \in \matQ[[z]]
$$
where the coefficient $c_d$ counts the number of degree $d$ rational curves in a quiver variety $X$. 
More precisely, $c_d$ is given by the regularized integral of the virtual fundamental class~$\omega^{vir}$
$$
c_d: =\int\limits_{\qm_d(X,\infty)} \, \omega^{vir} 
$$
over the moduli space $\qm_d(X,\infty)$ of degree $d$ quasimaps from a rational curve $C\cong \matP^1$ to $X$ with prescribed behaviour at $\infty \in C$, see Section 7 of \cite{Oko17} for definitions. 

\subsection{}
In this paper we initiate a study of arithmetic properties of~$c_d$. 
For this goal, we consider the vertex function $\Ver(z)$ for the simplest Nakajima quiver variety, given by the cotangent bundle over the Grassmannian,  $X=T^{*}\Gr(k,n)$.

For a prime number $p$, we construct a sequence of polynomials $\T_s(z) \in \matZ[z]$, $s=0,1,\dots$, 
starting from $\T_0(z)=1$ 
which converges to the vertex function, 
$$
\lim_{s\to \infty}\, \T_s(z)=\Ver(z).
$$
The convergence is understood in the $p$-adic norm, see Theorem \ref{TtoVthm}. We refer to the polynomials $\T_s(z)$ as the {\it $p$-adic approximations of} \, $\Ver(z)$. 

We find that, unlike the vertex functions themselves, their $p$-adic approximations
satisfy a number of remarkable congruences: 
\begin{thm}[Theorem \ref{dworkth}]
{\it The $p$-adic approximations $\T_s(z)$ satisfy the Dwork-type congruences:
\bean \label{dwcong}
\dfrac{\T_{s+1}(z)}{\T_s(z^p)}=\dfrac{\T_{s}(z)}{\T_{s-1}(z^p)} \mod p^s
\eean
with $s=1,2,\dots$.} 
\end{thm}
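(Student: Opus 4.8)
The plan is to express each $\T_s(z)$ explicitly as a truncation of the vertex function series and then reduce the Dwork-type congruence \eqref{dwcong} to a combinatorial congruence among binomial-type coefficients. Concretely, I would start from the hypergeometric presentation of $\Ver(z)$ for $X=T^*\Gr(k,n)$, where the coefficients $c_d$ are ratios of products of Pochhammer symbols (in the K\"ahler parameter $z$ and the equivariant/K\"ahler variables). The $p$-adic approximation $\T_s(z)$ should be, up to the overall normalization forcing $\T_0=1$, the polynomial obtained by keeping only those terms $c_d z^d$ with $d < p^s$ (or a closely related truncation dictated by the $3D$-mirror construction in the preceding sections). The first step is therefore to pin down this explicit formula for $\T_s(z)$ and record that $\T_s(z) \equiv \T_{s+1}(z) \bmod p^s$ as power series, which is what lets us divide by $\T_s(z^p)$ and $\T_{s-1}(z^p)$ in the ring $\matZ_p[[z]]/p^s$ (these are units since their constant terms are $1$).

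Next I would rewrite \eqref{dwcong} in the equivalent cross-multiplied form
\[
\T_{s+1}(z)\,\T_{s-1}(z^p) \;\equiv\; \T_{s}(z)\,\T_{s}(z^p) \pmod{p^s},
\]
which avoids inverting anything and is a congruence purely among polynomials in $\matZ[z]$. Expanding both sides and comparing the coefficient of $z^m$ for each $m$, the identity becomes a family of congruences of the shape $\sum c_a c_b \equiv \sum c_{a'} c_{b'} \bmod p^s$ where the indices run over decompositions $m = a + pb$. The standard engine here is the Dwork congruence lemma: if the coefficient sequence $c_d$ of a hypergeometric series has the property that the truncations $\Ver_{<p^s}(z)$ satisfy $\Ver_{<p^{s+1}}(z) / \Ver_{<p^s}(z^p) \equiv \Ver_{<p^s}(z)/\Ver_{<p^{s-1}}(z^p) \bmod p^s$, then one is done; and this property follows from Dwork's criterion on the coefficients, namely $c_{m+np^s}/c_{\lfloor (m+np^s)/p\rfloor \dots}$ stabilizing, which in turn follows from $p$-adic estimates on Pochhammer symbols $(a)_{d}$ (Dwork's $p$-adic Gamma function and the classical lemma $c_{d} \equiv c_{d_0} c_{d_1}\cdots \bmod p$ for the base-$p$ digits, strengthened to mod $p^s$). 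So the middle of the proof is: verify that the specific coefficients $c_d$ arising from $T^*\Gr(k,n)$ meet the hypotheses of the Dwork congruence lemma (integrality — already known since $\T_s\in\matZ[z]$ — plus the digit-factorization congruence).

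The main obstacle I expect is \emph{not} the formal manipulation but establishing the key $p$-adic coefficient congruence for the vertex function of $T^*\Gr(k,n)$ uniformly in $k$ and $n$: the coefficients involve a product over the $k$ Chern roots, so one must control a ratio of several Pochhammer symbols simultaneously and show the relevant ``Dwork stability'' $c_{d}\cdot c_{\lfloor d/p^{s-1}\rfloor} \equiv c_{\lfloor d/p\rfloor}\cdot c_{\lfloor d/p^{s}\rfloor}\bmod p^s$-type relation holds. This is where the $3D$-mirror symmetry input should be essential: the mirror (the vertex function of the dual variety, or the Bethe-ansatz/$q$-difference description) gives an alternative product formula for the same series whose $p$-adic valuations are more transparent, and matching the two presentations is what upgrades the mod-$p$ digit congruence to the mod-$p^s$ statement needed for \eqref{dwcong}. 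Once that coefficient congruence is in hand, the passage to \eqref{dwcong} is the routine application of the Dwork congruence lemma together with the observation that dividing by $\T_{s-1}(z^p)$ and $\T_s(z^p)$ is legitimate in $(\matZ/p^s)[[z]]$.
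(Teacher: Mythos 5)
Your proposal has a genuine gap at its very first step, and then defers the mathematical heart of the argument to a black-box lemma whose applicability is never established.

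The first problem is the identification of $\T_s(z)$. You claim that $\T_s(z)$ is obtained ``by keeping only those terms $c_d z^d$ with $d<p^s$'' in the vertex series $\Ver(z)$, i.e.\ a truncation of $\Ver(z)$. This is not what $\T_s(z)$ is. By \eqref{definitionofT} it is the coefficient of $x^{dp^s-1}$ in the \emph{polynomial} superpotential $\Mas_s(x,z)=\Delta(x)\,\bar\Phi(x,z)^{1-p^s}$. In the $k=1$, $n=2$, $\om=1/2$ example of the paper one gets $\T_s(z)=\sum_{d=0}^{(p^s-1)/2}\binom{(p^s-1)/2}{d}^2 z^d$, whereas $\Ver(z)=\sum_{d\ge0}\binom{-1/2}{d}^2 z^d$: the coefficients are \emph{not} the same for small $d$; they only agree in the $p$-adic limit $s\to\infty$. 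The paper even warns about this explicitly in the introduction (``close to the truncations \dots\ considered by Dwork \dots\ but not the same''). So the object you want to manipulate is not a truncation of $\Ver$, it is a different polynomial whose coefficients are products of binomials of the form $\binom{(p^s-1)r/q}{i}$ and $\binom{1+(p^s-1)(q-2r)/q}{i}$ subject to degree constraints coming from the residue computation. Your subsequent claim that $\T_s(z)\equiv\T_{s+1}(z)\bmod p^s$ ``as power series'' is therefore also not automatic and is never proved.

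The second problem is that, even setting aside the mis-identification, you then invoke ``the standard engine \dots\ the Dwork congruence lemma'' on the coefficient sequence of a hypergeometric-type series in several Pochhammer ratios. Verifying the hypotheses of such a lemma for the coefficients of $T^*\Gr(k,n)$ uniformly in $k$ and $n$ \emph{is} the theorem, and you explicitly flag it as ``the main obstacle I expect'' and leave it unresolved, gesturing vaguely at 3D-mirror symmetry. In the paper, the congruence is \emph{not} deduced from a coefficientwise Dwork criterion for the series. It is proved by a completely different mechanism: one introduces ghost polynomials $\Ghost_s(x,z)$ with $\Ghost_s\equiv0\bmod p^s$, uses the skew-symmetry of $\Mas_s$ under $\frak S_{k,n}=\frak S_{\bv_1}\times\cdots\times\frak S_{\bv_{n-1}}$ together with the degree bound $\deg_{x_{m,i}}\Mas_s<(\bv_m+1)p^s-1$ (Lemmas \ref{lemmaforT}, \ref{lemmaForG}) to show that only permuted shifts $u=\sigma(d)$ of the single degree vector $d$ can contribute, and thereby obtains the ghost expansion $\T_s(z)=\sum_{m=1}^s \G_m(z)\T_{s-m}(z^{p^m})$ with $\G_m\equiv0\bmod p^{m-1}$. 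The congruence \eqref{dwcong} then follows by a telescoping induction. This symmetry/degree-bound mechanism is precisely what bypasses the Pochhammer-symbol analysis you are hoping for, and is the content your plan is missing.

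In short: the plan rests on a factually incorrect description of $\T_s(z)$, and the single hard step (the coefficient congruence) is delegated to a ``standard lemma'' whose hypotheses are not checked and which, in the form you cite, does not obviously apply. To fix it you would essentially need to rediscover the ghost expansion and the skew-symmetry argument, or to establish a genuinely new uniform Pochhammer congruence for all $k,n$; neither is done here.
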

This type of congruences played an important role in the work of Dwork \cite{Dwo69}, which laid foundation of the theory of $p$-adic hypergeometric equations.
In fact, for $X=T^{*}\matP^1$ our $T_s(z)$ are close to the truncations of the hypergeometric 
function ${}_2 F_{1}(\frac{1}{2},\frac{1}{2},1;z)$ considered by Dwork as his primary example,
but not the same.

Among other things, Theorem \ref{dworkth} implies that modulo $p^s$, the vertex function has the following infinite product presentation. 

\begin{thm}[Theorem \ref{infprodrepr}]
{\it The vertex function of 
$X=T^{*}\Gr(k,n)$ has the infinite product presentation:
 $$
\Ver(z)=\prod\limits_{i=0}^{\infty}\, \dfrac{\T_m(z^{p^{i}})}{\T_{m-1}(z^{p^{i+1}})} \mod p^m, \ \ m=1,2 \dots
 $$  
in particular, for $m=1$ we obtain
$$
\Ver(z)=\prod\limits_{i=0}^{\infty}\,\T_1(z^{p^{i}}) \mod p.
$$
 }

\end{thm}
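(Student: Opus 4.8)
The plan is to distill from the Dwork-type congruences of Theorem~\ref{dworkth} a single power series $R_{\infty}(z)$ that is simultaneously the $p$-adic limit of the ratios $\T_s(z)/\T_{s-1}(z^p)$ and the quotient $\Ver(z)/\Ver(z^p)$, and then to unwind the functional equation $\Ver(z)=R_{\infty}(z)\,\Ver(z^p)$ into an infinite product.

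First I would set $R_s(z):=\T_s(z)/\T_{s-1}(z^p)$ for $s\ge 1$. Since the constant term of $\T_{s-1}$ equals $1$ (which holds for all $s$ by the construction, starting from $\T_0=1$), the polynomial $\T_{s-1}(z^p)$ is a unit in $\matZ[[z]]$, so $R_s\in\matZ[[z]]$ and $R_s(0)=1$; in particular $R_1(z)=\T_1(z)$. Theorem~\ref{dworkth} is precisely the statement $R_{s+1}(z)\equiv R_s(z)\pmod{p^s}$, so the sequence $(R_s)$ is coefficientwise Cauchy in the $p$-adic topology and converges to some $R_{\infty}(z)\in\matZ_p[[z]]$ with
\[
R_{\infty}(z)\equiv R_m(z)\pmod{p^m}\qquad\text{for all }m\ge 1 .
\]

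Next I would identify $R_{\infty}$. By Theorem~\ref{TtoVthm} the polynomials $\T_s(z)$ converge $p$-adically to $\Ver(z)$, hence also $\T_{s-1}(z^p)\to\Ver(z^p)$; since $\Ver(z^p)$ has constant term $1$, inversion is continuous, and therefore $R_s(z)=\T_s(z)\,\T_{s-1}(z^p)^{-1}$ converges to $\Ver(z)\,\Ver(z^p)^{-1}$. By uniqueness of limits $R_{\infty}(z)=\Ver(z)/\Ver(z^p)$, i.e. $\Ver(z)=R_{\infty}(z)\,\Ver(z^p)$. Substituting $z\mapsto z^{p^i}$ and composing these equalities for $i=0,1,\dots,N-1$ yields the exact identity
\[
\Ver(z)=\Big(\prod_{i=0}^{N-1}R_{\infty}\big(z^{p^i}\big)\Big)\,\Ver\big(z^{p^N}\big)
\]
in $\matZ_{(p)}[[z]]$. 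Letting $N\to\infty$ in the $z$-adic topology, the tail $\Ver(z^{p^N})$ tends to $1$ while the partial products stabilize on every coefficient, so $\Ver(z)=\prod_{i\ge 0}R_{\infty}(z^{p^i})$.

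Finally I would reduce this identity modulo $p^m$. Fixing a degree $d$, only the finitely many factors with $p^i\le d$ affect the coefficient of $z^d$, and each of them is congruent to $R_m(z^{p^i})$ modulo $p^m$; hence $\Ver(z)\equiv\prod_{i\ge 0}R_m(z^{p^i})\pmod{p^m}$. Since $R_m(z^{p^i})=\T_m(z^{p^i})/\T_{m-1}(z^{p^{i+1}})$, this is the asserted product, and for $m=1$ it collapses to $\Ver(z)\equiv\prod_{i\ge 0}\T_1(z^{p^i})\pmod p$ because $R_1=\T_1$. The one genuinely delicate point is juggling the two completions simultaneously --- the $p$-adic one, in which $\T_s$ and $R_s$ converge, and the $z$-adic one, which governs the iterated substitution and the vanishing of $\Ver(z^{p^N})$ --- together with the verification that passing to $(\matZ/p^m\matZ)[[z]]$ commutes with the infinite product; both are handled coefficient by coefficient, using that $\T_s(0)=1$ makes every division legitimate already over $\matZ[[z]]$.
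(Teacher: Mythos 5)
Your proof is correct, and it reaches the same goal via a slightly different organization than the paper's. The paper's proof is a one-liner: in Corollary~\ref{corrprod} it has already telescoped Theorem~\ref{dworkth} into the finite product congruence \eqref{recur}, and it then passes to the double limit $s,m\to\infty$ with $s-m=a$ fixed, invoking Theorem~\ref{TtoVthm} to replace the left side by $\Ver(z)$ while the right side stabilizes to the infinite product modulo $p^a$. You instead work directly from Theorem~\ref{dworkth}, bypass Corollary~\ref{corrprod}, and first identify the $p$-adic limit $R_\infty(z)=\lim_s \T_s(z)/\T_{s-1}(z^p)$ as $\Ver(z)/\Ver(z^p)$; you then unwind the exact functional equation $\Ver(z)=R_\infty(z)\Ver(z^p)$ into $\Ver(z)=\prod_{i\ge0}R_\infty(z^{p^i})$ and only at the end reduce modulo $p^m$ using $R_\infty\equiv R_m\pmod{p^m}$. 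The ingredients (Theorems~\ref{dworkth} and~\ref{TtoVthm}, plus the simultaneous use of the $p$-adic and $z$-adic topologies) are the same, but your route is a bit longer and also a bit more illuminating: it isolates the exact identity $\Ver(z)=R_\infty(z)\Ver(z^p)$ as the structural content, which the paper does not state in this proof but effectively rediscovers later in Section~6 (the limit $I(z)=\lim_s I_s(z)$ in Theorem~\ref{thmconverg} is exactly your $R_\infty$). So both are correct; the paper's is shorter because it reuses Corollary~\ref{corrprod}, and yours buys a cleaner separation between the exact functional-equation step and the modular reduction step. One small remark: you should make explicit, as you implicitly do, that $\T_s(0)=1$ is guaranteed by the sign normalization in \eqref{definitionofT}; without that, $\T_{s-1}(z^p)$ might fail to be a unit in $\matZ[[z]]$ and the definition of $R_s$ would need adjustment.
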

To prove the congruences (\ref{dwcong}) we use the technique of {\it ghosts} rooted in \cite{Mel09, MeVl16} and developed further in \cite{VZ21,Var22b}. An important difference with the previous papers is that our approach here does not require working with the
whole Hasse-Witt matrices. Due to internal symmetry of the functions we consider here, only a 
 specific matrix elements of these matrices play a role. So, an alternative title of this paper could be {\it Dwork type congruences with symmetries}.

\subsection{} 
The construction of $p$-adic approximations $\T_s(z)$ is inspired by the idea of $p$-adic approximations of hypergeometric solutions of the KZ equations in \cite{SV19} and 
by the idea of $3D$-mirror symmetry, in the spirit of \cite{RSVZ19, RSVZ21}.
In Section \ref{mirrsymsect} we consider a quiver variety $X^{!}$, known as  a {$3D$-mirror} $X$. From the quiver of $X^{!}$ for a
 choice of 
a prime $p$ and $s\in \matN$ we construct a polynomial
$$
\Phi_{s}(x,z) \in \matZ[x,z].
$$
The auxiliary variables $x=(x_{i,j})$ play a role of the Chern roots of the tautological bundle
over the quiver variety $X^{!}$. The polynomial $\Mas_{s}(x,z)$ can be understood as a $p$-adic polynomial approximation of the {\it superpotential} of the $3D$-quantum field theory
with the Higgs branch $X^{!}$. We then define $\T_s(z)$ as a specific $x$-coefficient in $\Phi_{s}(x,z)$
\bean \label{cofsdef}
\T_s(z) =  \mathrm{coeff}_{x^{d p^s -1}} \Big(\Mas_s(x,z) \Big),
\eean
see Section \ref{sectionT} for details.  This definition  is natural in the sense that the operator of taking coefficients (\ref{cofsdef}) behave in many respects similar to the integration over a closed cycle in the complex setting. This operation can be viewed as an $\matF_{p^s}$ -
version of integration, see \cite{SV19, Var22a, RV21, RV22}.

 The normalized vertex function $\Ver(z)$  associated with a quiver variety, can be characterized as a unique analytic solution of the {\it quantum differential equation} which governs the quantum cohomology of $X$. For instance, for $X=T^{*} \matP^n$, $n=1,2,\dots$, these are the standard generalized hypergeometric equations.  It can be shown that the coefficient (\ref{cofsdef}) is a solution of these equations {\it modulo $p^s$}, which explains the motivation for definition~(\ref{cofsdef}).  

We also note that for our running example $X=T^{*} \matP^1$, the polynomial $\T_1(z)$ is the Hasse-Witt invariant of an elliptic curve, which was first observed to be a modulo $p$ solution to the Gauss hypergeometric differential equation by Igusa \cite{Igu58}.

\subsection{}
Among other things, congruences (\ref{dwcong}) mean that
 $I_s(z)=\T_{s+1}(z)/\T_{s}(z^p)$ is a Cauchy sequence which converges uniformly to a $\matZ_p$-valued
 analytic  function $I(z)$ in a large domain $\frak{D} \subset \matZ_p$\,. 
 That function $I(z)$ is the $p$-adic analytic continuation to $\frak D$ of the function
 $\Ver(z)/\Ver(z^p)$ defined as a ratio of power series in a neighborhood of $z=0$. 
  For points in $\frak D$ we have a modular transformation identity
 $$
z^{d} I(1/z) =  I(z)
 $$
 where $d$ is a constant depending on the choice of $p$,  see 
 Theorem \ref{ancontfunc}. This property of  $\Ver(z)/\Ver(z^p)$ differs drastically from the properties of the vertex functions
  over $\matC$, which have much more non-trivial analytic continuation.

\subsection{} 
The results of the present paper have several straightforward generalizations in the number of obvious directions. First, the quiver variety $X=T^{*}\Gr(k,n)$ which we only consider here, can be,  with some extra work, replaced by the cotangent bundles over partial flag varieties. Second, the idea of $p$-adic approximations of vertex functions can be straightforwardly applied to the vertex functions with {\it descendents}. These functions are solutions to a number of enumerative and geometric problems. For instance, as shown in \cite{Oko17} for the special choice of the descendent insertions, given by the stable envelope functions \cite{AO16, MO19}, the descendent vertex functions are equal to the {\it capping operators}. In enumerative geometry these functions count the quasimaps in $X$ with relative boundary conditions, see Section 7.4 of \cite{Oko17}. At the same time, as it was 
recently shown by Danilenko \cite{Dan22}, the capping operators can be understood as the fundamental solution matrices of the quantum Knizhnik-Zamolodchikov equations associated with mirror varieties.
Our approach suggests a natural $p$-adic approximations of all these objects. We plan to return to these ideas in separate papers.

\subsection*{Acknowledgements}

Work of A. Smirnov is partially supported by NSF grant DMS - 2054527 and by
the RSF under grant 19-11-00062.
Work of A. Varchenko is partially supported by NSF grant DMS - 1954266.

\section{Vertex functions of $T^{*}\Gr(k,n)$ \label{vertexfun}}

\subsection{}
The vertex function of the cotangent bundle over Grassmannian $X=T^{*}\Gr(k,n)$ is given by the power series:
\bean \label{verfun}
\Ver(z)=\sum_{d=0}^{\infty} \, c_{d}(u_1,\dots, u_n,\hbar) \, z^d 
\eean
with the coefficients $c_{d}(u_1,\dots, u_n,\hbar)\in \matQ(u_1,\dots,u_n,\hbar,\epsilon)$ given by:
\begin{small}
\bean \label{coeffsd}
 c_{d}(u_1,\dots, u_n,\hbar)=\sum\limits_{{d_1,\dots,d_k:}\atop {d_1+\dots+d_k=d}}\,  \Big(\prod\limits_{i,j=1}^{k} \dfrac{(\epsilon-u_i+u_j)_{d_i-d_j}}{(\hbar-u_i+u_j)_{d_i-d_j}} \Big) \Big(\prod\limits_{j=1}^{n} \prod\limits_{i=1}^{k}\, \dfrac{(\hbar+u_j-u_i)_{d_i}}{(\epsilon+u_j-u_i)_{d_i}} \Big),
\eean
\end{small}
where $(x)_d$ denotes the Pochhammer symbol with step $\epsilon$:
$$
(x)_d =\left\{ \begin{array}{rr}
x (x+\epsilon)\dots (x+(d-1) \epsilon), & d>0\\
1, & d=0\\
\dfrac{1}{(x-\epsilon)(x-2 \epsilon) \dots (x +d \epsilon)  }, & d<0
\end{array}\right. 
$$
The degree $d$ coefficient of this series counts (equivariantly) the number of degree $d$ rational curves in $X$. More precisely, it is given by the equivariant integral
\bean \label{cinteg}
c_{d}(u_1,\dots, u_n,\hbar) = \int\limits_{ [\qm_d(X,\infty)]^{\textrm{vir}}} \, \omega^{vir}
\eean 
over the virtual fundamental class on moduli space $\qm_d(X,\infty)$ of quasimaps from $\mathbb{P}^1$ to $X$, which send $\infty \in \mathbb{P}^1$ to a prescribed torus fixed point in $X$, see Section 7.2 of \cite{Oko17} for definitions.
 Using the equivariant localization,  the integral (\ref{cinteg}) reduces to the sum over the torus fixed points on $\qm_d(X,\infty)$ which gives the sum (\ref{coeffsd}). We refer to Section 4.5 of \cite{PSZ16} where this computation is done in some details.

The parameters $u_1,\dots, u_n,\hbar,\epsilon$ are the equivariant parameters of the torus $T=(\mathbb{C}^{\times})^{n}\times \mathbb{C}^{\times}_{\hbar} \times  \mathbb{C}^{\times}_{\epsilon}$ acting on the moduli space $\qm_d(X,\infty)$ in the following way: 
\begin{itemize}
    \item $(\mathbb{C}^{\times})^{n}$ acts on $W=\mathbb{C}^n$ in a natural way, scaling the coordinates with weights $u_1,\dots , u_n$. This  induces an action of $T$ on $X\cong T^{*} \Gr(k,W)$ which, in turn, induces an action of $T$ on $\qm_d(X,\infty)$. 
    \item $\mathbb{C}^{\times}_{\hbar}$ acts on $X$ by scaling the cotangent fibers with weight $\hbar$ which induces an action of $T$ on $\qm_d(X,\infty)$.
    \item $\mathbb{C}^{\times}_{\epsilon}$ acts on the source of the quasimaps $C\cong\mathbb{P}^{1}$ 
     fixing the points $0,\infty \in \mathbb{P}^{1}$. The parameter $\epsilon$ denotes the corresponding weight of the tangent space $T_{0}\, \mathbb{P}^{1}$.
\end{itemize}

\begin{exmp} In the simplest case $k=1,n=2$ corresponding to the cotangent bundle over the protective space, $X=T^{*} \mathbb{P}^1$, the vertex function (\ref{verfun}) is the  Gauss hypergeometric function:
$$
\Ver(z)={}_2 F_{1}\Big(\dfrac{\hbar}{\epsilon}, \dfrac{u_2-u_1+\hbar}{\epsilon}; \dfrac{u_2-u_1+\epsilon}{\epsilon}; z \Big).
$$
\end{exmp}

\subsection{} 

In this paper we always study the vertex function (\ref{verfun}) with the following specialization of the equivariant parameters:
\bean \label{specializ}
&& 
u_1=\dots = u_n =0,\qquad \hbar/\epsilon=\om, \qquad \om\in\C.
\eean
Later we  fix $\om$ to be a rational number, $0<\om\leq 1/2.$

In this case the coefficient (\ref{cinteg}) computes the equivariant integral in the case when  the torus $(\mathbb{C}^{\times})^{n}$ acts trivially, while $\mathbb{C}^{\times}_\hbar$ and
$\mathbb{C}^{\times}_\epsilon$ act with weights for which $\hbar/\epsilon=\om$. Since evaluation maps are proper over $\qm^{d}(X,\infty)^{\mathbb{C}^{\times}_\hbar\times \mathbb{C}^{\times}_\epsilon}$, the specialization of the vertex function at (\ref{specializ}) is well defined.

\begin{exmp} \label{examtp1}
Continuing the previous example with $k=1,\,n=2$ the specialized vertex function has the form
\bean \label{hypertp1}
\Ver(z)={}_2 F_{1}\big(\om, \om; 1; z \big) = \sum\limits_{d=0}^{\infty}\, \binom{-\om}{m}^{\!\! 2} z^d\,.
\eean
If $\om=1/2$, then the first several coefficients of this power series are :
\bean
 \label{vert12}
\Ver(z)=1+{\frac {1}{4}}z+{\frac {9}{64}}{z}^{2}+{\frac {25}{256}}{z}^{3}+{
\frac {1225}{16384}}{z}^{4}+\mc O \left( {z}^{5} \right). 
\eean
\end{exmp}

\section{$3D$-mirror symmetry and integral representations of cohomological vertex functions \label{mirrsymsect}}

\subsection{} 
Among other things, the $3$-dimensional mirror symmetry provides an integral representations of the vertex functions. To a symplectic variety $X$ this symmetry  associates a $3d$-mirror variety $X^{!}$ and a function $\Phi(x,z)$, called {\it superpotential} of $X^{!}$. One of the physically inspired predictions of $3d$-mirror symmetry is that the vertex functions of $X$ then can be represented as
$$
\Ver(z) = \int_{\gamma}\, \Phi(x,z)\, dx
$$
for an appropriate  choice of a multidimensional contour $\gamma$. In this section we give a mathematically precise statement of this construction for the case $X=T^{*}\Gr(k,n)$.

\subsection{} 
Assume that $n\geq 2k$. To a pair $(k,n)$ we associate an $A_{n-1}$ framed quiver as in Fig.\ref{quivpic}. This quiver only has non-trivial one-dimensional framings  at vertices $k$ and $n-k$ (which are represented by the squares in the figure). We define the dimension vector by the formula:
$$
\bv_i=\left\{\begin{array}{ll}
i, & i< k,\\
k, & k \leq i \leq n-k,\\
n-i, & n-k < i,
\end{array}\right. 
$$
Let $X^{!}$ be the Nakajima's quiver variety associated to these data \cite{Nak94, Nak98, MO19}. 
It is known that $X^{!}$ is a $3D$-mirror of $X=T^{*}\Gr(k,n)$, which means that the corresponding vertex functions of $X$ and $X^{!}$ coincide \cite{Din20}. Alternative (but equivalent) definition of $3D$-mirror symmetry requires coincidence of elliptic stable envelope classes for $X$ and $X^{!}$, we refer to \cite{RSVZ19,RSVZ21} for this approach.

\subsection{} 
To a vertex $i$ with dimension $\bv_i$ in a quiver we associate a collection of variables $x_{i,j}$, $j=1,\dots, \bv_i$. In algebraic topology, these variables can be thought of as the Chern roots of the $i$-th tautological bundle over the corresponding quiver variety. 
To  a framing vertex with dimension $\bw_i$ we associate a collection of variables $z_{i,j}$, $j=1,\dots, \bw_i$. The superpotential of a quiver variety is then read of its quiver using the procedure:

\begin{itemize}
    \item To an arrow from a vertex $j$ to a vertex $i$ we associate a factor
    \bean \label{arrowcontr}
    \prod_{a=1}^{\bv_i}   \prod_{b=1}^{\bv_j}\, (x_{i,a}-x_{j,b})^{-\om} 
    \eean 
    \item To a vertex $m$ of the quiver we associate a factor
    \bean
     \label{verterm}
    \prod\limits_{1\leq i<j \leq \bv_m}  (x_{m,i}-x_{m,j})^{2\om}
    \eean
    \item To a vertex $m$ of the quiver we associate a factor:
    \bean \label{detterms}
    \Big( \prod\limits_{j=1}^{\bv_m}\, x_{m,j} \Big)^{-1+\om}
    \eean
\end{itemize}

\noindent 
For the quiver in Fig.\ref{quivpic}, representing the mirror variety $X^{!}$,
 these rules give the following superpotential:
\bean 
\label{superpot}
\Phi(x,z)
&=&
\Big(\prod\limits_{i=1}^{n-1} \prod\limits_{j=1}^{\bv_i}\, x_{i,j} \Big)^{-1+\om} 
\Big(\prod\limits_{m=1}^{\bv_m}\, \prod\limits_{1\leq i<j \leq \bv_m}  (x_{m,j}-x_{m,i})\Big)^{2\om}
\\
\notag
&\times&
\Big(\prod\limits_{i=1}^{n-2} \prod\limits_{a=1}^{\bv_i} \prod\limits_{b=1}^{\bv_{i+1}} (x_{i,a}-x_{i+1,b})\Big)^{-\om} 
\Big(\prod\limits_{i=1}^{k} (z_{k,1}-x_{k,i}) (z_{n-k,1}- x_{n-k,i})\Big)^{-\om} .
\eean

The superpotentials constructed in this way are called the {\it master functions} in the theory of integral representations of the trigonometric Knizhnik-Zamolodchikov equations.  In particular, (\ref{superpot}) corresponds to
the KZ equations associated with the weight subspace of weight [1,\dots,1] in the tensor product the $k$-th and $(n-k)$-th fundamental representations of $\frak{gl}_n$, see \cite{SV91,MV02}. 

\begin{figure}[h!]
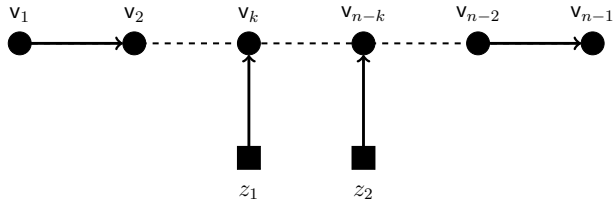

$$
\resizebox{0.5\textwidth}{!}{%
\usebox{\Ipm}}
$$
\caption{The quiver description of the mirror variety $X^{!} \label{quivpic} $} 
\end{figure}

\subsection{}
The total number of variables $x_{i,j}$ equals $\dim \Gr(k,n) = k(n-k)$. Of interest to us are the functions arising as integrals of the superpotential (\ref{superpot}) over 
real $k(n-k)$-dimensional cycles $\gamma \subset \matC^{k(n-k)}$. For such an integral to be well defined, the  superpotential must
have a single-valued branch on $\gamma$. 

For a small real number $0<\epsilon\ll 1$ let us define
$
\epsilon_{i,j}= (|i-k|+2 j-1) \epsilon. 
$
We have an ordering on the pairs $(i,j)$ corresponding to:
\bean
 \label{ordere}
\epsilon_{k,1}< \epsilon_{k-1,1} =  \epsilon_{k+1,1} < \dots  <\epsilon_{n-k,k}
\eean 
ranging from $\epsilon_{k,1}=\epsilon$ to $\epsilon_{n-k,k}=(n-1)\epsilon$. Define the torus $\gamma_{k,n} \subset \matC^{\dim X-1}$ by the system of equations
$
|x_{i,j}|= \epsilon_{i,j}
$
where $i, j$ run through all possible values.

\begin{prop}

 \label{branchprop}

 Assume that $|z_{k,1}|<\epsilon$ and $(n-1) \epsilon <|z_{n-k,1}|$, then  the superpotential (\ref{superpot}) has 
a single-valued branch on the torus $\gamma_{k,n}$, which is distinguished in the proof and which 
will be used in the paper.

\end{prop}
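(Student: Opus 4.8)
The plan is to exploit that the superpotential \eqref{superpot} is a product of powers of affine-linear forms, $\Phi(x,z)=\prod_\alpha \ell_\alpha(x)^{c_\alpha}$, where $\ell_\alpha$ runs over the coordinates $x_{i,j}$ (exponent $-1+\om$), the same-vertex differences $x_{m,j}-x_{m,i}$ with $i<j$ (exponent $2\om$), the edge differences $x_{i,a}-x_{i+1,b}$ (exponent $-\om$), and the forms $z_{k,1}-x_{k,i}$, $z_{n-k,1}-x_{n-k,i}$ (exponent $-\om$). Such a product determines a rank-one local system $\mathcal L$ on the complement of the arrangement $\bigcup_\alpha\{\ell_\alpha=0\}$, whose flat sections are the branches of $\Phi$, and $\Phi$ has a single-valued branch on $\gamma_{k,n}$ precisely when $\mathcal L|_{\gamma_{k,n}}$ is trivial. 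First I would check that $\gamma_{k,n}$ avoids every hyperplane $\ell_\alpha=0$: for the coordinate forms because $\epsilon_{i,j}>0$; for the same-vertex differences because $\epsilon_{i,j}$ is strictly increasing in $j$; for the edge differences because $|i-k|$ and $|i+1-k|$ have opposite parity, so $\epsilon_{i,a}\neq\epsilon_{i+1,b}$; and for the two $z$-forms exactly because of the hypotheses $|z_{k,1}|<\epsilon\le\epsilon_{k,i}$ and $|z_{n-k,1}|>(n-1)\epsilon=\epsilon_{n-k,k}\ge\epsilon_{n-k,i}$. The same strict inequalities show that no two variables occurring in one $\ell_\alpha$ have equal modulus on $\gamma_{k,n}$, which is what makes the winding numbers below well defined.

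Since $\gamma_{k,n}$ is a product of circles, $\pi_1(\gamma_{k,n})\cong H_1(\gamma_{k,n};\matZ)$ is free abelian, generated by the loops $L_{i_0,j_0}$ along which $x_{i_0,j_0}$ runs once counterclockwise around its circle while all other coordinates stay fixed. So it is enough to compute the holonomy of $\mathcal L$ along each $L_{i_0,j_0}$: it equals $\exp(2\pi i\,m_{i_0,j_0})$ with $m_{i_0,j_0}=\sum_\alpha c_\alpha\,w_\alpha$, where $w_\alpha\in\{0,1\}$ is the winding number of $\ell_\alpha$ about $0$ along $L_{i_0,j_0}$ — equal to $1$ exactly when $\ell_\alpha=x_{i_0,j_0}$, or $\ell_\alpha=\pm(x_{i_0,j_0}-v)$ for a coordinate or parameter $v$ of strictly smaller modulus than $\epsilon_{i_0,j_0}$. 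Reading this off \eqref{superpot}: the $\om$-free part of $m_{i_0,j_0}$ comes only from $x_{i_0,j_0}^{-1+\om}$ and equals $-1$; the same-vertex factors contribute $2\om\,(j_0-1)$; the edge factors contribute $-\om\,N_{\mathrm{adj}}$, where $N_{\mathrm{adj}}$ counts the neighbours $x_{i_0\pm1,b}$ with $\epsilon_{i_0\pm1,b}<\epsilon_{i_0,j_0}$; and the $z$-factors contribute $-\om\,N_z$. Hence $m_{i_0,j_0}=-1+\om\,(2j_0-1-N_{\mathrm{adj}}-N_z)$.

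What the whole argument then reduces to is the combinatorial identity $N_{\mathrm{adj}}+N_z=2j_0-1$ for every vertex $i_0$ and every $1\le j_0\le\bv_{i_0}$: granting it, $m_{i_0,j_0}=-1\in\matZ$, every holonomy is trivial, $\mathcal L|_{\gamma_{k,n}}$ is trivial, and a single-valued branch exists. For $N_z$ the two $z$-hypotheses give $N_z=1$ if $i_0=k$ and $N_z=0$ otherwise (when $n=2k$ the extra form $z_{n-k,1}-x_{k,i}$ involves a parameter of large modulus and is not counted). For $N_{\mathrm{adj}}$ I would substitute $\epsilon_{i,j}=(|i-k|+2j-1)\epsilon$ and $\bv_i=\min(i,k,n-i)$ and split into the ranges $i_0<k$, $i_0=k$, $k<i_0<n-k$, $i_0=n-k$, $i_0>n-k$; in each range the number of neighbours with smaller label on a given side of $i_0$ is $j_0$ or $j_0-1$ (the split governed by whether $|i_0-k|$ increases or decreases towards that neighbour), and the two sides together with the $N_z$-correction at $i_0=k$ add up to $2j_0-1$.

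Finally I would pin down the distinguished branch by prescribing the germ of $\Phi=\exp\!\big(\sum_\alpha c_\alpha\log\ell_\alpha\big)$ at the real base point $x^0=(\epsilon_{i,j})_{i,j}$ — taking for each $\log\ell_\alpha$ the value $\log|\ell_\alpha(x^0)|$ or $\log|\ell_\alpha(x^0)|+i\pi$ according to the sign of $\ell_\alpha(x^0)$ — and then continuing it over $\gamma_{k,n}$, which is unambiguous by the holonomy computation. The only real work is the last paragraph: the case analysis for $N_{\mathrm{adj}}$ is elementary but is the step most prone to an off-by-one slip, so I would carry out each range of $i_0$ explicitly, with a separate glance at the degenerate cases $k=1$ and $n=2k$ where some neighbouring vertices are absent.
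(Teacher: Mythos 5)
Your proposal is correct, and it takes a genuinely different route from the paper. The paper's proof is an explicit algebraic rewriting: it factors out of each binomial $(x_{i,a}-x_{j,b})^{\pm\om}$ the variable of larger modulus, collects all the extracted monomials together with the original factor $\big(\prod x_{i,j}\big)^{-1+\om}$, and asserts (formula \eqref{Lfacts}) that these monomial contributions combine into the integer power $\prod x_{i,j}^{-1}$, while each remaining binomial is of the form $(1-u)^{c}$ or $(u-1)^{c}$ with $|u|<1$ on $\gamma_{k,n}$ and hence has a manifest single-valued branch given by its convergent binomial series. The branch of $\Phi$ is then defined to be the product of those series. Your proof instead treats $\Phi$ as a flat section of a rank-one local system on the complement of the hyperplane arrangement, reduces single-valuedness on the torus to the vanishing mod $\Z$ of the holonomy exponents $m_{i_0,j_0}$ along each circle factor, and isolates the combinatorial identity $N_{\mathrm{adj}}+N_z=2j_0-1$ as the crux. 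This identity is exactly the cancellation that the paper's formula \eqref{Lfacts} encodes implicitly (the extracted powers of $x_{i_0,j_0}$ sum to $\om$, cancelling the $+\om$ in the exponent of $x_{i_0,j_0}$ to leave $-1$), but the paper never spells it out; your version makes the combinatorics explicit and verifiable. What the paper's route buys is an immediately usable explicit branch, which is needed later when computing residues and when defining the normalization constant $\alpha=e^{\pi\sqrt{-1}N\om}$ in Theorem~\ref{verttheorem}: that constant is read off from the number of $L$-factors of the form $(x_{j,b}/x_{i,a}-1)^{-\om}$. If one adopted your branch — defined abstractly by prescribing the germ at the base point $x^0=(\epsilon_{i,j})$ — one would still need to match it against the paper's product of series (or independently recompute the prefactor in the integral formula), a step your proof does not address but which falls outside the literal statement of the Proposition. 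One small point on the pinning-down step: for the two $z$-factors the sign of $\ell_\alpha(x^0)$ is not determined, since $z_{k,1},z_{n-k,1}$ are complex parameters in an annular region rather than fixed real points; for those factors one should instead normalize the branch by its value at $z_{k,1}=0$, resp. $z_{n-k,1}=\infty$, as the paper implicitly does via its series expansions.
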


\begin{proof}
 Let us denote 
 $$
 L(x_{i,a},x_{j,b})=\left\{\begin{array}{ll}
 (1-x_{i,a}/x_{j,b})^{-\om}, & \epsilon_{i,a} < \epsilon_{j,b}, \  i\ne j\\
(x_{j,b}/x_{i,a}-1)^{-\om}, & \epsilon_{i,a} > \epsilon_{j,b},  \   i\ne j.
 \end{array}\right.
 $$
Each of these 
ratios $x_{i,a}/x_{j,b}$,
$x_{j,b}/x_{i,a}$ restricted to $\gamma_{k,n}$
has absolute value less than 1.
We replace  $ (1-x_{i,a}/x_{j,b})^{-\om}$ on $\gamma_{k,n}$
with 
$\sum_{m=0}^\infty\binom{-\om}{m}(-x_{i,a}/x_{j,b})^m$ and replace
\\
$(x_{j,b}/x_{i,a}-1)^{-\om}$ with 
$e^{-\pi \sqrt{-1}\om} \sum_{m=0}^\infty\binom{-\om}{m}(-x_{j,b}/x_{i,a})^m$.

Next, we denote 
$L(z_{k,1},x_{k,a})= (1-z_{k,1}/x_{k,a})^{-\om}$ and 
$L(z_{n-k,1},x_{k,a})=
\\
 (1-x_{n-k,a}/z_{n-k,1})^{-\om }$.
On  $\gamma_{k,n}$ we have $|x_{k,i}|\geq \epsilon$, and $|x_{n-k,i}|\leq |x_{n-k,k}|= n \epsilon$, therefore
$|z_{k,i}/x_{k,i}|<1$ and $|x_{n-k,i}/z_{n-k,i}|<1$. 
We replace on $\gamma_{k,n}$ the factor
$(1-z_{k,1}/x_{k,a})^{-\om }$ with 
$\sum_{m=0}^\infty\binom{-\om}{m}(-z_{k,1}/x_{k,a})^m$ and the factor 
$(1-x_{n-k,a}/z_{n-k,1})^{-\om}$ with
\\
$\sum_{m=0}^\infty\binom{-\om}{m}(-x_{n-k,a}/z_{n-k,1})^m$.

Finally, we denote $ L(x_{m,i},x_{m,j})=
 (1-x_{m,i}/x_{m,j})^{2\om }$ for $1\leq i<j\leq \bv_m$.
On $\ga_{k,n}$ we have $|x_{m,i}/x_{m,j}|<1$.
We replace on $\gamma_{k,n}$ the factor
$(1-x_{m,i}/x_{m,j})^{\om}$ with 
$\sum_{m=0}^\infty\binom{2\om}{m}(-x_{m,i}/x_{m,j})^m$.

In these notations we have:

 \bean \label{Lfacts}
 \\
 \notag
 \Phi(x,z)=  \frac{\Big(\prod\limits_{i=1}^{n-1} \prod\limits_{a<b}\Delta(x_{i,a},x_{i,b})\Big)\,
\Big(\prod\limits_{i=1}^{n-2} \prod\limits_{a=1}^{\bv_i} \prod\limits_{b=1}^{\bv_{i+1}} L(x_{i,a},x_{i+1,b})\Big) 
\Big(\prod\limits_{i=1}^{k} L(z_{1},x_{k,i})  L(z_{2},x_{n-k,i})\Big)}{ \prod\limits_{i=1}^{n-1}  \prod\limits_{j=1}^{\bv_i} x_{i,j}  } ,
 \eean

and for each $L$-factor, a single-valued branch is chosen
 by replacing that factor with the corresponding power series. The product of those power series distinguishes a single-valued branch of 
 $\Phi(x,z)$ on $\gamma_{k,n}$.
\end{proof}

We consider the specialization $z_{k,1}=z$ and $z_{n-k,1}=1$, 
which we always assume unless otherwise is stated\footnote{If $n-k=k$ we assume $z_{k,1}=z$ and $z_{k,2}=1$.}. 

\begin{example} For $X=T^*\Gr(2,4)$ we have
\bea
\Phi(x,z) &=& (x_{11}x_{21}x_{22}x_{31})^{-1+\om} (x_{22}-x_{21})^{2\om} 
\\
&\times &
\big( (x_{21}-x_{11}) (x_{31}-x_{21}) (z-x_{21})(1-x_{22})\big)^{-\om}
\\
&\times &
\big( (x_{22}-x_{11}) (x_{31}-x_{22}) (z-x_{22})(1-x_{22})\big)^{-\om}
\\
 &=& (x_{11}x_{21}x_{22}x_{31})^{-1} (1-x_{21}/x_{22})^{2\om} 
\\
&\times &
\big( (x_{21}/x_{11}-1) (1-x_{21}/x_{31}) (z/x_{21}-1)(1-x_{21})\big)^{-\om}
\\
&\times &
\big( (1-x_{11}/x_{22}) (x_{31}/x_{22}-1) (z/x_{22}-1)(1-x_{22})\big)^{-\om}.
\eea

\end{example}

From the previous proposition, the integral of $\Phi(x,z)$ over $\gamma_{k,n}$ is an analytic function of $z$ in the disc $|z|<\epsilon$. This function is represented by a power series in $z$ with complex coefficients, which has the following form:

\begin{thm} 
\label{verttheorem}
The vertex function of $X=T^{*}\Gr(k,n)$ with the equivariant parameters specialized at (\ref{specializ})  has the following 
integral representation
\bean 
\label{vertexInt}
\Ver(z) = \dfrac{\alpha}{(2 \pi \sqrt{-1})^{k(n-k)}} \oint\limits_{\gamma_{k,n}}\, \Phi(x,z) \, \bigwedge\limits_{i,j} dx_{i,j} 
\eean 
where $\Phi(x,z)$ is the branch of superpotential function (\ref{superpot})
 on the torus $\gamma_{k,n}$ chosen in Proposition \ref{branchprop}, and $\alpha=e^{\pi\sqrt{-1}N\om}$ 
 is a normalization constant  where $N$ is the number  of factors 
in \eqref{Lfacts} having the form $(x_{j,b}/x_{i,a}-1)^{-\om}$. 
\end{thm}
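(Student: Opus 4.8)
The plan is to verify the identity \eqref{vertexInt} by direct residue computation: expand the chosen single-valued branch of $\Phi(x,z)$ on $\gamma_{k,n}$ into a multiple power series, compute the iterated contour integral term by term, and match the result against the explicit formula \eqref{coeffsd} for $c_d$ specialized via \eqref{specializ}. Since each $x_{i,j}$ runs over a circle $|x_{i,j}|=\epsilon_{i,j}$, the integral $\frac{1}{(2\pi\sqrt{-1})^{k(n-k)}}\oint_{\gamma_{k,n}}(\cdots)\bigwedge dx_{i,j}$ simply extracts the coefficient of $\prod_{i,j} x_{i,j}^{-1}$ in the Laurent expansion; because of the overall factor $\big(\prod x_{i,j}\big)^{-1}$ already present in \eqref{Lfacts}, this amounts to extracting the constant term in the remaining product of $L$-factors and $\Delta$-factors. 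So the first step is to write that product as an explicit multiple series indexed by the exponents $m$ appearing in the geometric-type expansions from Proposition \ref{branchprop}.

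The second step is to organize the constant-term extraction hierarchically, following the ordering \eqref{ordere} of the radii. Integrating out the variables attached to the "outermost" vertices first (those with the largest $\epsilon_{i,j}$, namely the $x_{n-k,j}$ near the framing at $n-k$) and proceeding inward toward $x_{k,1}$, each integration forces a linear relation among summation indices. Because the quiver is of type $A_{n-1}$ with framings only at $k$ and $n-k$, these relations telescope: the degrees attached to a chain of vertices between a framing and the "middle block" $k\le i\le n-k$ get successively determined, and the only genuinely free data that survive are $k$ nonnegative integers $d_1,\dots,d_k$ (the "parts" of the degree $d$) coming from the middle block, exactly as in the sum \eqref{coeffsd}. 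This is the combinatorial heart of the matching; I would set it up as an induction on the length of the $A$-tail, checking that the Vandermonde-type factors $\Delta(x_{i,a},x_{i,b})^{2\om}$ at each vertex produce the ratios $(\epsilon - u_i + u_j)_{d_i-d_j}/(\hbar - u_i + u_j)_{d_i-d_j}$ (here with $u=0$), and that the arrow factors $L(x_{i,a},x_{i+1,b})^{-\om}$ together with the determinant twist $\big(\prod x_{m,j}\big)^{-1+\om}$ produce the second product $\prod_{j,i}(\hbar + u_j - u_i)_{d_i}/(\epsilon + u_j - u_i)_{d_i}$. The binomial coefficients $\binom{-\om}{m}$, $\binom{2\om}{m}$ reassemble into Pochhammer symbols via $\binom{-\om}{m} = (-1)^m (\om)_m^{\mathrm{step}\,1}/m!$ after the specialization $\hbar/\epsilon=\om$; keeping careful track of the step-$\epsilon$ convention in $(x)_d$ is where sign and phase bookkeeping enters.

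The third step is to pin down the normalization constant $\alpha$. Each factor of the form $(x_{j,b}/x_{i,a}-1)^{-\om}$ was replaced in Proposition \ref{branchprop} by $e^{-\pi\sqrt{-1}\om}\sum_m\binom{-\om}{m}(-x_{j,b}/x_{i,a})^m$, contributing a fixed phase $e^{-\pi\sqrt{-1}\om}$ independent of the summation indices; there are $N$ such factors (with $N$ as defined in the statement), and none of the other replaced factors carries a phase. Hence the chosen branch differs from the "naive" product of series by exactly $e^{-\pi\sqrt{-1}N\om}$, and multiplying by $\alpha = e^{\pi\sqrt{-1}N\om}$ cancels it, leaving precisely the rational series $\Ver(z)$. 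Finally, one notes that convergence of the $z$-series for $|z|<\epsilon$ is already guaranteed by Proposition \ref{branchprop}, so the formal identity of coefficients upgrades to an identity of analytic functions.

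\textbf{Main obstacle.} I expect the telescoping of the summation indices through the $A$-type tails — i.e.\ proving cleanly that, after integrating out all tail variables, the surviving sum is exactly $\sum_{d_1+\dots+d_k=d}$ with the correct hypergeometric summand — to be the technical crux. This is essentially a multidimensional residue / constant-term identity of Selberg–Mehta–Dyson flavor for the $\gl_n$ master function, and while it is "known" in the KZ-integral literature (cf.\ \cite{SV91,MV02}), carrying it out with the precise step-$\epsilon$ Pochhammer normalization and the precise phase conventions so that the answer matches \eqref{coeffsd} on the nose requires care; the framing-vertex factors $(z_1 - x_{k,i})^{-\om}$ and $(z_2 - x_{n-k,i})^{-\om}$ are what inject the variable $z$ and the constant $1$ into the degrees, and tracking how they interact with the inward integration order is the delicate part.
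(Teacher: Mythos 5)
Your approach matches the paper's: the paper verifies \eqref{vertexInt} for $k=1$ in Example \ref{hypergeom_exam} by exactly the residue/constant-term computation you outline (expand each $L$-factor as a binomial series on $\gamma_{k,n}$, extract the coefficient of $\prod x_{i,j}^{-1}$ by iterated residues in the order of Remark \ref{intresidues}, and collect the phases $e^{-\pi\sqrt{-1}\om}$ from the $N$ factors of the form $(x_{j,b}/x_{i,a}-1)^{-\om}$ to cancel $\alpha$), and then matches the result to the known vertex of $T^*\PP^{n-1}$. For general $k,n$ the paper does not carry out the computation at all — it states only that the proof rests on the KZ-integral machinery of \cite{SV91,MV02} and defers to the companion paper \cite{SmV} — and the telescoping/Selberg-type constant-term identity you correctly flag as the crux (reducing the multidimensional residue to the sum over $d_1+\dots+d_k=d$ with the Pochhammer ratios of \eqref{coeffsd}) is precisely the step the paper omits here, so your plan is the right strategy but, like the paper's text, does not amount to a complete proof for $k\ge 2$.
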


 The proof of the theorem found by the authors for arbitrary $k, n$ is
based on the papers \cite{SV91,MV02} , where the integral representations for solutions of
KZ equations were obtained.  Since it  does not pertain directly to the result of the present paper, we give a proof for arbitrary $k$ and $n$ in a separate publication \cite{SmV}. The example below gives a proof of this theorem for $k=1$. 

\begin{example}

 \label{hypergeom_exam}

The case of $k=1$ corresponds to $X=T^{*}\matP^{n-1}$. In this case,
\bea
\Phi(x,z) 
&=& 
(x_{1,1}\dots x_{n-1,1})^{-1+\om} \big((x_{1,1}-x_{2,1})\dots (x_{n-2,1}-x_{n-1,1})(z-x_{1,1})(1-x_{n-1,1})\big)^{-\om}
\\
&=&  (x_{1,1}\dots x_{n-1,1})^{-1}
\Big(\Big(\frac{x_{1,1}}{x_{2,1}}-1\Big)\dots \Big(\frac{x_{n-2,1}}{x_{n-1,1}}-1\Big)
  \Big(\frac{z}{x_{1,1}}-1\Big)  \Big(1-x_{n-1,1}\Big)\Big)^{-\om}.
\eea
The integral (\ref{vertexInt}) takes the form:
\bean 
\label{tpninteg}
\dfrac{\alpha}{(2 \pi \sqrt{-1})^{n-1}}\oint_{\gamma}\, \Phi(x,z) \, dx_{1,1} \wedge \dots \wedge dx_{n-1,1}
\eean
where $\gamma \in \matC^{n-1}$ is the torus given by equations
$
|x_{i,1}|=i \epsilon,  \ i=1,\dots, n-1.
$
We replace each binomial by  a power series using the formulas:
\bea
(1-a)^{-\om}=\sum\limits_{m=0}^{\infty}\, \binom{-\om}{m} (-a)^m,
\qquad
(a-1)^{-\om}=e^{-\pi \sqrt{-1}/2} \sum\limits_{m=0}^{\infty}\, \binom{-\om}{m} (-a)^m.
\eea
Using these expansions and noting that $\alpha=e^{(n-1) \pi \sqrt{-1}\om}$ for  (\ref{tpninteg}),
 we obtain that the integral equals
\begin{small}
$$
\dfrac{1}{(2 \pi \sqrt{-1})^{n-1}} \sum\limits_{m_1,\dots,m_n\geq 0}\,  (-1)^{m_1+\dots+m_n}  \binom{-\om}{m_1}
\dots \binom{-\om}{m_n} \oint\limits_{\gamma} \dfrac{\Big(\frac{z}{x_{1,1}}\Big)^{m_1}\Big(\frac{x_{1,1}}{x_{2,1}}\Big)^{m_2}\dots x_{n-1,1}^{m_n}}{x_{1,1} \dots x_{n-1,1}}
$$
\end{small}
The last integral is computed by evaluating the residues consequently from $x_{1,1}=0$ to $x_{n-1,1}=0$. For instance, the residue at $x_{1,1}=0$ is non zero only if $m_1=m_2$. The residue at $x_{2,1}=0$ in non-zero only if $m_2=m_3$ and so on. Thus, $m_1=m_2=\dots = m_{n}$ and the result is the power series
$$
\sum\limits_{d=0}^{\infty}\, (-1)^{n d} \binom{-\om}{d}^n\! z^{d}.
$$
The last sum is the well-known expansion of the generalized hypergeometric function:
\bean
 \label{hypergeom_ex}
{}_{n-1} F_{n}\big(\om,\dots, \om; 1,\dots,1; z\big).
\eean
This function coincides with the vertex function $\Ver(z)$ for $T^{*}\matP^{n-1}$
computed in Section 6.2  of \cite{AO16} 
(which is done there for generic values of parameters). We note also that for $n=1$, we obtain the function from Example \ref{examtp1}. 
\end{example}

\begin{rem} 
\label{intresidues}
For general values of $k,n$,\, the integral (\ref{vertexInt}) can be evaluated as in the previous example, i.e., by expanding the integrand into power series in $x_{i,j}$ and $z$ as in Proposition \ref{branchprop} and then computing the residues at $x_{i,j}=0$.  By definition of the torus
 $\gamma_{k,n}$, the residues are to be computed in order compatible with (\ref{ordere}). 

\end{rem}

\section{$p$-adic approximations of vertex functions}

\subsection{Polynomial superpotentials} 

 In the remainder of the paper we assume that 
\bean
\label{omrq}
\om = r/q, \qquad r,\,q\,\on{positive \,integers}, \qquad r/q\leq 1/2.
\eean
Let $p$ be an odd  prime number of the form
\bean
\label{pq form}
p=\ell q+1, \qquad\ell\,\, \on{a\, \,positive\,\,integer}.
\eean

It is useful to rearrange the factors of the superpotential.
We have
\bean
 \label{Phi}
 \Phi(x,z) \,=\, \Delta(x) \,\bar \Phi(x,z)
 \eean
 where
 \bean 
 \label{deltfact}
\Delta(x)  =\prod\limits_{m=1}^{\bv_m}\, \prod\limits_{1\leq i<j \leq \bv_m}  (x_{m,j}-x_{m,i})
\eean
and
\footnote{In the case $k=n-k$ the pair $(z_{k,1},z_{n-k,1})$ becomes $(z_{k,1}, z_{k,2})$, which we assume throughout.} 
\bean
\label{bar Phi}
\phantom{aaa}
\bar\Phi(x,z)
&=&\Big(\prod\limits_{i=1}^{n-1} \prod\limits_{j=1}^{\bv_i}\, x_{i,j} \Big)^{-1+r/q} 
\Big(\prod\limits_{m=1}^{\bv_m}\, \prod\limits_{1\leq i<j \leq \bv_m}  (x_{m,j}-x_{m,i})\Big)^{-1+2r/q}
\\
&\times&
\Big(\prod\limits_{i=1}^{n-2} \prod\limits_{a=1}^{\bv_i} \prod\limits_{b=1}^{\bv_{i+1}} (x_{i,a}-x_{i+1,b})\Big) ^{-r/q}
\Big(\prod\limits_{i=1}^{k} (z_{k,1}-x_{k,i}) (z_{n-k,1}- x_{n-k,i})\Big)^{-r/q} .
\notag
\eean

For any integer $s\geq 1$, we define the following polynomial approximation of the superpotential function (\ref{superpot}):
\bean
 \label{polsuper}
\Mas_{s}(x,z)= \Delta(x)  \bar\Phi(x,z)^{1-p^s}.
\eean
Notice that  the power $1-p^s$  
approaches $1$ in the $p$-adic norm for large $s$. 
We denote
\bean
\label{bar Phi}
 \bar \Phi_s (x,z)  = \bar \Phi (x,z)^{1-p^s}.
\eean

\begin{lem}
\label{lem 4.1}

${}$

\begin{itemize}
\item $ \bar \Phi_s (x,z)$ is a polynomial.

\item For any $a=1,\dots,n-1$, the polynomial  $\bar \Phi_s (x,z)$
is symmetric with respect to permutation of the variables $x_{a,1},\dots,x_{a,\bv_a}$.

\end{itemize}

\end{lem}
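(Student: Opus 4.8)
The plan is to analyze the exponents appearing in $\bar\Phi(x,z)^{1-p^s}$ and show that raising to the power $1-p^s$ turns every fractional exponent into a nonnegative integer, and then to track the symmetry that is manifestly present before the power is taken.

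First I would compute the exponent of each type of factor in $\bar\Phi_s(x,z) = \bar\Phi(x,z)^{1-p^s}$. From \eqref{bar Phi} the exponents in $\bar\Phi$ are $-1+r/q$ on the linear monomials $x_{i,j}$, $-1+2r/q$ on the Vandermonde-type differences $x_{m,j}-x_{m,i}$, and $-r/q$ on the edge factors $x_{i,a}-x_{i+1,b}$ and on the $z$-factors $z_{k,1}-x_{k,i}$, $z_{n-k,1}-x_{n-k,i}$. Multiplying each by $1-p^s$ gives $(-1+r/q)(1-p^s)$, $(-1+2r/q)(1-p^s)$, and $(-r/q)(1-p^s)$ respectively. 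The key arithmetic input is the hypothesis \eqref{pq form}, $p = \ell q + 1$, which gives $p^s \equiv 1 \pmod q$, hence $q \mid (p^s - 1)$; write $p^s - 1 = q m$ with $m = \ell(p^{s-1} + \dots + 1) \in \matZ_{>0}$. Then $(-r/q)(1-p^s) = r m \in \matZ_{\ge 0}$; $(-1+r/q)(1-p^s) = (p^s-1) - rm = qm - rm = (q-r)m \ge 0$ since $r/q \le 1/2 < 1$; and $(-1+2r/q)(1-p^s) = (p^s - 1) - 2rm = (q - 2r)m \ge 0$ since $r/q \le 1/2$. So every exponent is a nonnegative integer, and therefore $\bar\Phi_s(x,z)$, being a finite product of polynomials, is itself a polynomial. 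This proves the first bullet.

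For the second bullet, fix $a \in \{1,\dots,n-1\}$ and consider the action of $S_{\bv_a}$ permuting $x_{a,1},\dots,x_{a,\bv_a}$. I would observe that $\bar\Phi(x,z)$ itself is symmetric under this action: the factor $\bigl(\prod_{j} x_{a,j}\bigr)^{-1+r/q}$ is a symmetric function of the $x_{a,j}$ to a fixed power; the factor $\bigl(\prod_{i<j}(x_{a,j}-x_{a,i})\bigr)^{-1+2r/q}$ is, up to the sign $(-1)^{\binom{\bv_a}{2}}$ incurred under a transposition, raised to the same power for every permutation, and since that power is an integer (namely $(q-2r)m \ge 0$, as just computed — so actually I should phrase this argument directly for $\bar\Phi_s$, where the exponent is a genuine integer, rather than for $\bar\Phi$), the sign contributions are well-defined and in fact the whole factor is a symmetric polynomial (it is a power of the square of the Vandermonde when the exponent is even, and the Vandermonde times such a power otherwise — in either case invariant or anti-invariant, but here we want symmetry, so I note $(q-2r)m$ need not be even; instead the cleanest statement is that $\prod_{i<j}(x_{a,j}-x_{a,i})^2$ is symmetric and the exponent $(q-2r)m$ may be split, so let me just say: the set of edge and vertex factors involving row $a$ is permuted among themselves under $S_{\bv_a}$); the edge factors $\prod_{a',b}(x_{a',*}-x_{a'+1,*})^{-r/q}$ with $a' \in \{a-1,a\}$ run over the full index set $b = 1,\dots,\bv_a$ and are thus permuted among themselves; and the $z$-factors, present only when $a \in \{k, n-k\}$, likewise run over $i = 1,\dots,k = \bv_a$ and are permuted. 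Hence each grouped block of factors is $S_{\bv_a}$-invariant as an unordered product, so $\bar\Phi_s(x,z)$ is symmetric in $x_{a,1},\dots,x_{a,\bv_a}$.

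\textbf{Main obstacle.} The only delicate point is the sign bookkeeping for the Vandermonde-type factor $\prod_{1\le i<j\le \bv_m}(x_{m,j}-x_{m,i})$ raised to an odd integer power: under a transposition it picks up a sign $(-1)^{\bv_m - 1}$ or so, and one must be careful to conclude \emph{symmetry} rather than anti-symmetry. The clean resolution is that in \eqref{polsuper} the polynomial $\Mas_s = \Delta(x)\,\bar\Phi_s(x,z)$ separates out one copy of $\Delta(x) = \prod_m \prod_{i<j}(x_{m,j}-x_{m,i})$, so that the exponent of the analogous factor remaining inside $\bar\Phi_s$ is exactly $(q-2r)m = (p^s-1) - 2rm$, and one checks directly from $p^s - 1 = qm$ that this equals $(q-2r)m$; whether this is even or odd, the point for the second bullet is that $\bar\Phi_s$ alone (not $\Mas_s$) is being claimed symmetric, and I will verify that the total power of $x_{m,j}-x_{m,i}$ appearing in $\bar\Phi_s$ makes that factor a symmetric polynomial — concretely, $\prod_{i<j}(x_{m,j}-x_{m,i})^{2}$ is symmetric, so it suffices that the exponent $(q-2r)m$ be even, which holds because $m = \ell(p^{s-1}+\dots+1)$ is a sum of $s$ terms each $\equiv 1 \bmod 2$ times $\ell$... — since this parity claim may fail in general, the honest statement is that $\bar\Phi_s$ is symmetric up to the sign $(-1)^{(q-2r)m}$ times the sign of the permutation, i.e.\ it is $S_{\bv_m}$-(anti)invariant, and for the purposes of the paper one passes to $\Mas_s$; I would present the argument so that the claim as literally stated (symmetry of $\bar\Phi_s$) is reduced to checking that $(q-2r)(p^s-1)/q$ is even, which I expect follows from $q \mid p^s - 1$ together with $p$ odd, and I would spell out that short parity computation as the one genuinely non-formal step.
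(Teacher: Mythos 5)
Your approach is the same as the paper's: expand the exponents of $\bar\Phi_s(x,z) = \bar\Phi(x,z)^{1-p^s}$, observe that each becomes a nonnegative integer because $q \mid p^s - 1$ (so the first bullet holds), and then note that the only factor not manifestly symmetric in $x_{a,1},\dots,x_{a,\bv_a}$ is the Vandermonde-type product $\prod_{i<j}(x_{a,j}-x_{a,i})$ raised to the power $(p^s-1)(q-2r)/q$, which is symmetric precisely when that power is even. You correctly identify this parity check as the crux and correctly discard your first guess that $m=(p^s-1)/q$ itself is even (it need not be).

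But you stop there: you say you \emph{expect} the parity to work out and that you \emph{would spell out} the computation, without actually doing it. That is the one genuine gap, and it is the entire non-formal content of the second bullet. Here is the missing line. Write
\begin{equation*}
\frac{(p^s-1)(q-2r)}{q} \;=\; (p^s-1) \;-\; 2r\,\frac{p^s-1}{q}.
\end{equation*}
Both summands are even: $p^s-1$ is even because $p$ is odd (hypothesis \eqref{pq form}), and $2r(p^s-1)/q$ is even because $(p^s-1)/q$ is an integer. Hence the exponent is even, $\prod_{i<j}(x_{a,j}-x_{a,i})^{(p^s-1)(q-2r)/q}$ is a power of the squared Vandermonde and therefore a symmetric polynomial, and the second bullet follows. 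With this line inserted, your argument is complete and coincides with the paper's proof.
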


\begin{proof}  We have
\bean
\label{bps}
&&
\\
\notag
\bar\Phi_s(x,z)
&=&\Big(\prod\limits_{i=1}^{n-1} \prod\limits_{j=1}^{\bv_i}\, x_{i,j} \Big)^{(p^s-1)(q-r)/q} 
\Big(\prod\limits_{m=1}^{\bv_m}\, \prod\limits_{1\leq i<j \leq \bv_m}  (x_{m,j}-x_{m,i})\Big)^{(p^s-1)(q-2r)/q}
\\
&\times&
\Big(\prod\limits_{i=1}^{n-2} \prod\limits_{a=1}^{\bv_i} \prod\limits_{b=1}^{\bv_{i+1}} (x_{i,a}-x_{i+1,b})\Big) ^{(p^s-1)r/q}
\Big(\prod\limits_{i=1}^{k} (z_{k,1}-x_{k,i}) (z_{n-k,1}- x_{n-k,i})\Big)^{(p^s-1)r/q}.
\notag
\eean
Notice that $(p^s-1)(q-r)/q, \,(p^s-1)(q-2r)/q,\,(p^s-1)r/q$ are positive integers by
assumptions \eqref{omrq} and \eqref{pq form}, moreover, the integer
 $(p^s-1)(q-2r)/q$ is even.

The first, third and fourth products are clearly symmetric with respect to permutations
of $x_{a,1},\dots,x_{a,\bv_a}$. The second product is symmetric
since  $(p^s-1)(q-2r)/q$ is even. 
\end{proof}

To keep track of degrees of polynomials in the variables $x_{i,j}$ we will use $(n-1)$-tuples of degree vectors
$u=(u^{(1)},\dots,u^{(n-1)})$ with $u^{(i)}=(u^{(i)}_1,\dots, u^{(i)}_{\bv_i}) \in \mathbb{N}^{\bv_i}$. With this notation 
$x^{u}$ denotes the monomial
$$
x^{u}= \prod\limits_{i=1}^{n-1} \prod\limits_{j=1}^{\bv_i}\, x_{i,j}^{u^{(i)}_{j}}.
$$

\subsection{The $p$-adic approximations of the vertex function \label{sectionT}} 
Let us define a degree vector $d$ by:
\bean \label{defn}
d^{\,(i)}_{j}  = j
\eean
for  $i=1,\dots,n-1$ and   $j=1,\dots, \bv_{i}$. The following polynomials are the main objects of this paper.

\begin{defn}
 Define the polynomials $\T_s(z) \in \matZ[z]$ by the formula
\bean \label{definitionofT}
\T_s(z) := 
(-1)^{\frac{(p^s-1)r}{q} N} \, 
\mathrm{coeff}_{x^{d p^s -1}} \Big(\Mas_s(x,z) \Big) 
\eean
where $ \mathrm{coeff}_{x^{d p^s -1}}$ denotes the coefficient of the monomial
$
x^{d p^s -1} = \prod\limits_{i=1}^{n-1} \prod\limits_{j=1}^{\bv_i}\, x_{i,j}^{j p^s -1 }
$
in the  polynomial $\Mas_s(x,z)$  and $N$ is the number from Theorem \ref{verttheorem}.  
\end{defn}

A simple degree count shows that 
$\T_s(z)$ has degree
$(p^s-1)kr/q$ in $z$. The prefactor $(-1)^{\frac{(p^s-1)r}{q} N} $ in the definition is 
introduced to fix the constant term of this polynomial,
 $\T_s(0)=1$, as explained in the lemma below:
\begin{lem}
Let $\Mas_s(x,0)$ be the polynomial superpotential  with $z=0$, then
 $$
 \mathrm{coeff}_{x^{d p^s -1}} \Big(\Mas_s(x,0) \Big)= (- 1)^{\frac{(p^s-1)r}{q} N}.
 $$
\end{lem}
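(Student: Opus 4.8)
The plan is to set $z=0$ in the polynomial superpotential $\Mas_s(x,0)=\Delta(x)\,\bar\Phi_s(x,0)$ and compute the coefficient of $x^{dp^s-1}$ directly. First I would observe that at $z=0$ the factor $\bigl(\prod_i(z_{k,1}-x_{k,i})(z_{n-k,1}-x_{n-k,i})\bigr)^{(p^s-1)r/q}$ in \eqref{bps} specializes, since $z_{k,1}=z=0$ and $z_{n-k,1}=1$ (or $z_{k,2}=1$ in the case $n-k=k$), to $\bigl(\prod_i(-x_{k,i})\cdot\prod_i(1-x_{n-k,i})\bigr)^{(p^s-1)r/q}$. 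The key point is a \emph{degree-counting / extremality} argument: the monomial $x^{dp^s-1}=\prod_{i,j}x_{i,j}^{\,jp^s-1}$ is the top-degree monomial in each variable $x_{i,j}$ that can occur with the prescribed degree profile, so among the several product-factors of $\Mas_s(x,0)$ only the highest-degree term of each factor can contribute. I would carry out the degree bookkeeping vertex by vertex: the factor $(\prod x_{i,j})^{(p^s-1)(q-r)/q}$ contributes a fixed power to every variable; $\Delta(x)\bigl(\prod(x_{m,j}-x_{m,i})\bigr)^{(p^s-1)(q-2r)/q}$ is a product over each vertex of a power of the Vandermonde-type factor, whose relevant extremal monomial is governed by the staircase $(1,2,\dots,\bv_m)$; and the arrow factors $\bigl(\prod(x_{i,a}-x_{i+1,b})\bigr)^{(p^s-1)r/q}$ between consecutive vertices similarly have a unique extremal monomial once the target degrees $jp^s-1$ are fixed.

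Concretely, I expect that matching degrees forces the following: in $\Delta(x)$ one must take, at vertex $m$, the monomial $\prod_{i<j}x_{m,j}$, i.e.\ $\prod_j x_{m,j}^{\,j-1}$ up to sign $+1$ (this is the standard Vandermonde leading term, with sign $(-1)^0=1$ for the chosen ordering $(x_{m,j}-x_{m,i})$ with $i<j$); in the power $\bigl(\prod_{i<j}(x_{m,j}-x_{m,i})\bigr)^{(p^s-1)(q-2r)/q}$ one must again take the all-$x_{m,j}$ term, giving sign $+1$; in the arrow factors one is forced to take the term $\prod(x_{i+1,b})$ or $\prod(x_{i,a})$ with a sign that is a power of $(-1)$ and, crucially, in the framing factors $\prod_i(-x_{k,i})^{(p^s-1)r/q}$ one picks up exactly $(-1)^{(p^s-1)r/q\cdot k}$-type contributions while $\prod_i(1-x_{n-k,i})^{(p^s-1)r/q}$ contributes, at the extremal monomial $\prod_i x_{n-k,i}^{\,(p^s-1)r/q}$, a sign $(-1)^{(p^s-1)r/q\cdot(\#\text{factors})}$. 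I would then sum all these sign exponents and check that the total is precisely $\frac{(p^s-1)r}{q}N$, where $N$ is the number of factors of the form $(x_{j,b}/x_{i,a}-1)^{-\om}$ appearing in \eqref{Lfacts} — this is exactly the combinatorial quantity Theorem~\ref{verttheorem} already isolates, so the bookkeeping should close. The non-sign part of the coefficient must come out to $1$ because at the extremal monomial each factor contributes its term with coefficient $\pm1$ (products of linear forms $x_a-x_b$, extremal monomials have unit coefficient) and there is a unique way to distribute degrees, so no combinatorial multiplicity arises.

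I would organize the verification of uniqueness of the degree distribution as the technical core: fix the exponent of $x_{n-1,1}$ (the variable at the last genuine vertex with $\bv_{n-1}=1$, target exponent $p^s-1$), work inward, and at each stage show the remaining degree freedom is zero. This is essentially the same residue-extraction mechanism described in Remark~\ref{intresidues} and in Example~\ref{hypergeom_exam} for $k=1$, now run at the level of a single monomial coefficient rather than an iterated residue; in fact for $k=1$ the computation literally reduces to the chain $m_1=m_2=\dots=m_n$ there with all $m_i=p^s-1$, which gives coefficient $(-1)^{n(p^s-1)}\binom{-\om}{p^s-1}^n$ — but here we are only extracting the top monomial of the \emph{polynomial} $\bar\Phi_s$, so instead of the binomial series we get the single leading coefficient of $(1-x)^{(p^s-1)r/q}$-type factors, which is $\pm1$.

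\textbf{Main obstacle.} The hard part will be the sign bookkeeping: one must carefully track, for the Vandermonde-type power $\bigl(\prod_{i<j}(x_{m,j}-x_{m,i})\bigr)^{(p^s-1)(q-2r)/q}$ raised to an \emph{even} power (here Lemma~\ref{lem 4.1} is used — the exponent is even, so this factor contributes sign $+1$ regardless of ordering conventions), versus the arrow and framing factors raised to the power $(p^s-1)r/q$ which need not be even, and reconcile the resulting total sign exponent with the definition of $N$ from Theorem~\ref{verttheorem} so that the prefactor $(-1)^{\frac{(p^s-1)r}{q}N}$ exactly cancels it. Getting the indexing of $N$ right (it counts precisely the factors that in the branch-choice of Proposition~\ref{branchprop} had the form $(x_{j,b}/x_{i,a}-1)^{-\om}$, i.e.\ the ``reversed'' arrows in the ordering \eqref{ordere}) is where I expect to have to be most careful; everything else is a routine degree count forced by extremality.
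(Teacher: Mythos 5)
Your ``extremality'' premise is false, and it produces a wrong sign. The monomial $x^{dp^s-1}=\prod_{i,j}x_{i,j}^{\,jp^s-1}$ is \emph{not} a top-degree monomial: by the bound \eqref{degM}, $\deg_{x_{m,1}}\Mas_s$ can be as large as $(\bv_m+1)p^s-2$, whereas the target exponent of $x_{m,1}$ is only $p^s-1$, far smaller when $\bv_m>1$. So ``only the highest-degree term of each factor can contribute'' is incorrect, and in particular your claim that $\prod_i(1-x_{n-k,i})^{(p^s-1)r/q}$ contributes its extremal monomial with an attached sign is wrong: what contributes there is the \emph{constant} term, sign $+1$. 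You can check this already for $k=1$, $n=2$: $\Mas_s(x,0)=(-1)^{(p^s-1)r/q}\,x_{1,1}^{p^s-1}(1-x_{1,1})^{(p^s-1)r/q}$, and the coefficient of $x_{1,1}^{p^s-1}$ picks up the constant term of $(1-x_{1,1})^{(p^s-1)r/q}$, giving $(-1)^{(p^s-1)r/q}$. Your scheme would instead take the top term of $(1-x_{1,1})^{(p^s-1)r/q}$ and output $(-1)^{2(p^s-1)r/q}=1$, which contradicts the Lemma already for $p=3$, $q=2$, $r=1$, $s=1$.

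The paper's actual proof is an iterated coefficient extraction in the specific order \eqref{ordere}, starting from $x_{k,1}$, not from $x_{n-1,1}$ as you propose. At each step one writes $\Mas_s(x,0)=m_l(x)\,b_l(x)$ with $m_l$ a pure monomial and $b_l$ a product of binomials, and a direct degree count shows that the current variable $x_{a,b}$ already occurs in $m_l$ with the full target exponent $bp^s-1$; hence $b_l$ must be evaluated at $x_{a,b}=0$, which turns further binomials into monomials and carries the induction forward. The identity to check at step $(a,b)$ is
\begin{equation*}
(2b-1)\tfrac{(p^s-1)r}{q}+(b-1)\big(\tfrac{(p^s-1)(q-2r)}{q}+1\big)+\tfrac{(p^s-1)(q-r)}{q}=bp^s-1,
\end{equation*}
and it holds precisely because the order \eqref{ordere} feeds exactly $2b-1$ arrow/framing substitutions and $b-1$ Vandermonde substitutions into $m_l$ before $x_{a,b}$ is reached. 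Your ``work inward from $x_{n-1,1}$'' plan would stall at the first step: for $k=1$, $n\ge 3$, the degree of $x_{n-1,1}$ in the initial monomial part is only $(p^s-1)(q-r)/q<p^s-1$, and both $(x_{n-2,1}-x_{n-1,1})^{(p^s-1)r/q}$ and $(1-x_{n-1,1})^{(p^s-1)r/q}$ can still contribute the missing degree in many ways, so nothing is forced. The sign then comes only from the binomials that become monomials upon substituting the smaller variable to $0$ --- exactly the $N$ of Theorem~\ref{verttheorem} --- not from any extremal terms of the remaining binomials.
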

\begin{proof}
The proof is by direct computation of coefficients of $x^{j p^s-1}_{i,j}$ in the order on pairs $(i,j)$ given by (\ref{ordere}). In the first step, we need to compute the coefficients of 
$x_{k,1}^{p^s-1}$ in 
$$
\Phi_s(x,0) = m_1(x) \,b_1(x)
$$
where we separated the part $m_1(x)$ given by the product of monomials and part $b_1(x)$ given by the product of binomial factors of the form $(a-b)^c$. The variable  $x_{k,1}$ enters the monomial part $m_1(x)$ as a factor $x_{k,1}^{p^s-1}$.  This is already the full degree we need, and therefore the binomial part can only contribute a constant factor in $x_{k,1}$, i.e., we have:
$$
 \mathrm{coeff}_{x^{{p^s-1}}_{k,1}} (\Phi_s(x,0))= \left.m_1(x)\right|_{x_{k,1}=1}  \left.b_1(x)\right|_{x_{k,1}=0} 
$$
The effect of substituting $x_{k,1}=0$ into the binomial part
$b_1(x)$ is that the binomial factors containing $x_{k,1}$, which are of the form $(s-x_{k,1})^c$ or $(x_{k,1}-s)^{c}$
turn into the monomials $(s)^c$ or  $(-s)^{c}$ respectively. Again, separating all factors into a product of monomials and binomials we obtain
\bean 
\label{stepp2}
\left.m_1(x)\right|_{x_{k,1}=1}  \left.b_1(x)\right|_{x_{k,1}=0} = m_2(x) b_2 (x)
\eean
In the second step, we need to compute the coefficient of 
$x^{p^s-1}_{k-1,1}$ in (\ref{stepp2}). A simple computation shows that the variable $x_{k-1,1}$ enters the monomial part as $x_{k-1,1}^{p^s-1}$, i.e., it again has the full degree and the whole process is repeating. We claim that the same is true for any step in the sequence.

Indeed, assume that after $l-1$ steps we arrived at $m_l(x) b_l(x)$, where $m_l(x)$ and $b_l(x)$ denote the monomial and the binomial parts as before. Assume that for $l$-th step we need to compute the coefficient of  $x_{a,b}^{b p^s-1}$ in this expression. 

Let us compute the degree of $x_{a,b}$ in the monomial part
$m_l(x)$: there are $(2 b-1)$ factors in $\Phi_s(x,z)$ of the form $(x_{a-1,d}-x_{a,b})^{(p^s-1) r/q}$ or $(x_{a,b}-x_{a+1,d})^{(p^s-1) r/q}$ for which we have already substituted $x_{a-1,d}=0$, $x_{a+1,d}$ in the previous steps (these are the factors with  $(a-1,d)<(a,b)$ or $(a+1,d)<(a,b)$ in the order (\ref{ordere})). Each of these factors contributes $\pm x_{a,b}^{(p^s-1) r/q}$ to $m_l(x)$. 
Similarly, there are exactly $b-1$ factors of the form
$(x_{a,b}-x_{a,b'})^{(p^s-1)(q-2 r)/q+1}$ in which we substituted $x_{a,b'}=0$ in the previous steps (these are factors with $b'<b$). Each of these factors contributes the monomial $x_{a,b}^{(p^s-1)(q-2 r)/q+1}$ in $m_l(x)$. Finally, there is a factor $x_{a,b}^{(p^s-1)(q-r)/q}$, which was already in $m_1(x)$ in the very first step.  In total, we obtain that the degree of $x_{a,b}$ in the monomial $m_l(x)$ is
$$
(2b-1) \frac{(p^s-1) r}{q} + (b-1)\Big( \dfrac{(p^s-1)(q-2 r)}{q} +1 \Big) + \frac{(p^s-1)(q-r)}{q}  = b p^s-1
$$
which gives the full degree for the monomial $x_{a,b}$ and we obtain that
$$
 \mathrm{coeff}_{x^{{b p^s-1}}_{a,b}} (m_l(x) b_l(x)) =\left.m_{l}(x)\right|_{x_{a,b}=1} \left.b_l(x)\right|_{x_{a,b}=0} = m_{l+1}(x) b_{l+1}(x)
$$
where in the last step we again separated the monomial part $m_{l+1}(x)$ and the binomial part $b_{l+1}(x)$. 

Repeating these calculations, after $(n-k)k$ steps we arrive at the last variable $x_{n-k,k}$. Clearly the binomial part must be trivial $b_{(n-k)k}=1$ and the monomial part has full degree as we proved before, i.e., $m_{k (n-k)}(x) = \pm x^{k p^s-1}_{n-k,k}$. Thus we conclude
$$
\mathrm{coeff}_{x^{d p^s -1}} \Big(\Mas_s(x,0) \Big)=\pm 1.
$$
To commute the sign, we note that the monomial parts are multiplied by $(-1)^{(p^s-1) r/q}$ whenever we substitute $x_{a,b}=0$  to one of the factors in the binomial parts
which is of the form $(x_{a,b}-x_{a+1,c})^{(p^s-1) r/q}$ where $(a,b)<(a+1,c)$ in the order (\ref{ordere}). 
The number of such pairs is exactly what we denoted by $N$ in the Theorem \ref{verttheorem}. 
\end{proof}

The polynomials $\T_s(z)$ can be viewed as {\it $p$-adic approximations of the vertex functions} since $\T_s(z) \to \Ver(z)$ as $s\to \infty$ in the following sense.
\begin{thm} 
\label{TtoVthm}
 Consider the expansions: 
$$
\Ver(z)=\sum\limits_{m=0}^{\infty}\, c_m\, z^{m}, \ \ \ 
\T_s(z)=\sum\limits_{m=0}^{\deg \T_s(z)}\, c_{s,m}\, z^{m}.
$$
Then for any $m\geq 0$, the sequence of integers $c_{s,m}$,  converges in the $p$-adic norm and
\bean \label{limitpart}
\lim\limits_{s\to \infty}\, c_{s,m} = c_m \,.
\eean

\end{thm}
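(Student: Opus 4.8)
\textbf{Proof proposal for Theorem \ref{TtoVthm}.}

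The plan is to compare the $z$-expansion of $\T_s(z)$, as produced by the coefficient-extraction recipe of Remark \ref{intresidues}, term by term with the $z$-expansion of the contour integral \eqref{vertexInt} that computes $\Ver(z)$. The key observation is that both computations are, structurally, the \emph{same} finite residue/coefficient extraction, performed once over $\C$ and once over $\Z$, differing only by the replacement of each binomial $(1-a)^{-\om}$ (with $\om=r/q$) by its truncated counterpart coming from $\bar\Phi(x,z)^{1-p^s}$, i.e. by a finite polynomial in $a$. So the first step is bookkeeping: expand $\bar\Phi_s(x,z)$ as a polynomial in the ratios $x_{i,a}/x_{j,b}$, $z/x_{k,i}$, $x_{n-k,i}$ (the same ratios that appear in the proof of Proposition \ref{branchprop}), and observe that $\mathrm{coeff}_{x^{dp^s-1}}$ applied to $\Mas_s(x,z)=\Delta(x)\bar\Phi_s(x,z)$ performs exactly the iterated residue of Remark \ref{intresidues} in the order \eqref{ordere}. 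The $\Delta(x)$ factor is what converts the symmetric residue over $\gamma_{k,n}$ into the single $x^{dp^s-1}$-coefficient; this is the standard passage from the Selberg-type integral to an alternating sum, and one should check it produces exactly the combinatorial sum \eqref{coeffsd}.

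The second step is the arithmetic comparison of the individual summands. Each coefficient $c_m$ of $\Ver(z)$ is a finite $\Q$-linear combination of products of binomial coefficients $\binom{-\om}{j}=\binom{-r/q}{j}$ (this is visible already in \eqref{coeffsd}, \eqref{hypertp1}), while each $c_{s,m}$ is the corresponding finite $\Z$-linear combination of products of binomial coefficients $\binom{(p^s-1)r/q}{j}$ and $\binom{-(p^s-1)(q-2r)/q}{j}$ etc. — the exponents appearing in \eqref{bps}. Since $p^s-1\equiv -1 \pmod{p^s}$ and, more to the point, for a \emph{fixed} $j$ one has $(p^s-1)r/q \equiv -r/q \pmod{p^s}$ in $\Z_p$ (using $q\mid p-1$ from \eqref{pq form}, so $q$ is a unit in $\Z_p$ and $r/q\in\Z_p$ makes sense), each binomial coefficient $\binom{(p^s-1)r/q}{j}$ converges $p$-adically to $\binom{-r/q}{j}$ as $s\to\infty$, with the rate of convergence depending only on $j$. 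Running this through the finite sum for $c_{s,m}$ — whose \emph{shape} (which products of which binomials, with which integer multiplicities) is independent of $s$ for $s$ large enough that $\deg_z\T_s \ge m$ — gives $c_{s,m}\to c_m$ in $\Z_p$, and in particular shows $(c_{s,m})_s$ is $p$-adically Cauchy. One also must note $c_m\in\Z_p$: this follows because it is the $p$-adic limit of the integers $c_{s,m}$, or independently because $q$ is invertible mod $p$.

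The step I expect to be the main obstacle is making rigorous the claim that the residue extraction "is the same computation" — i.e. that $\mathrm{coeff}_{x^{dp^s-1}}\big(\Delta(x)\bar\Phi_s(x,z)\big)$ literally equals the $\bmod\ p^s$ truncation of the multinomial sum giving $c_m$ in \eqref{coeffsd}, summand by summand. One has to track carefully how the exponent of each ratio-variable is forced by the residue conditions (as in Example \ref{hypergeom_exam}, where the residues forced $m_1=\dots=m_n$), confirm that the degree constraint $\deg_z \T_s(z)=(p^s-1)kr/q$ does not truncate the low-order $z$-coefficients (for fixed $m$ and $s$ large this is automatic), and handle the factors $(x_{n-k,i}/z_{n-k,1})^{(p^s-1)r/q}$ and the $\Delta(x)$-expansion — the latter reshuffles the Chern roots and is responsible for the antisymmetrization that turns the sum over all $d_i$ into the sum over $d_1\ge\dots\ge d_k$ (or produces the Pochhammer ratios $(\epsilon-u_i+u_j)_{d_i-d_j}/(\hbar-u_i+u_j)_{d_i-d_j}$ after specialization \eqref{specializ}). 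Once this identification is in place, the $p$-adic limit statement is immediate from the elementary fact about binomial coefficients above; so essentially all the work is in the combinatorial-geometric identification, and the number theory is a one-line observation.
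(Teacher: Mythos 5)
Your proposal follows essentially the same route as the paper's proof: expand both the iterated residue computing $\Ver(z)$ and the coefficient extraction defining $\T_s(z)$ into finite sums of products of binomial coefficients subject to the same degree constraints (forced by the residues, respectively by hitting the monomial $x^{dp^s-1}$), and then appeal to the elementary fact that $\binom{(p^s-1)r/q}{j}\to\binom{-r/q}{j}$ (and likewise for the Vandermonde-type exponent) in $\Z_p$ for fixed $j$. One small slip: the second family of binomials carries exponent $+(p^s-1)(q-2r)/q$ (or $1+(p^s-1)(q-2r)/q$ if you absorb $\Delta(x)$ into $\bar\Phi_s$ as the paper does, giving the limit $\binom{2r/q}{j}$), not $-(p^s-1)(q-2r)/q$ — but this is a typo that does not affect the argument.
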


\noindent
Before we proceed to the proof, let us consider an example.

\begin{example}
Let  $k=1$, $n=2$, $\om =1/2$. In this case
$\T_s(z)$ is the coefficient of $x^{p^s-1}$ in the polynomial
$$
\Mas_s(x,z)=(-1)^{\frac{p^s-1}{2}} \Big(  x(x-1)(x-z)  \Big)^{\frac{p^s-1}{2}}.
$$
An elementary computation gives
$$
\T_s(z)=\sum\limits_{d=0}^{\frac{p^s-1}{2}}\, \binom{\frac{p^s-1}{2}}{d}^2\, z^d.
$$
In the $p$-adic norm we have
$$
\lim\limits_{s\to \infty}\, \binom{\frac{p^s-1}{2}}{d} = \binom{\frac{-1}{2}}{d},
$$
and in the limit $s\to \infty$ we arrive at the hypergeometric series (\ref{hypertp1}) with $\om=1/2$.

\end{example}

\begin{proof}
The proof, essentially, generalizes the computation in the previous example to the case of several variables. 
As in Remark \ref{intresidues}, the coefficients of the $z$-power series given by the integral (\ref{vertexInt}) can be evaluated by expanding 
$\Phi(x,z)$
into power series in $x_{i,j}$ and $z$ and computing residues at $x_{i,j}=0$ in a certain order. 
Using the Newton binomial theorem we find that the coefficient of a
$z^{d}$ is given by a finite sum of products of the binomial coefficients of the form
$\binom{2r/q}{i}$
or
$\binom{-r/q}{i}$
with certain degree constrains on the indices $i$. These constrains come from computing the residues, i.e., the sum of the degrees $i$ must give the monomial $x^{-1}=\prod_{i,j}\, x_{i,j}^{-1}$.

Similarly, the coefficients  of the polynomial $\T_s(z)$ are  computed 
by expanding $\Phi_{s}(x,z)$
into power series in $z$ using  the Newton binomial theorem
and then computing the coefficient of $x^{d p^s -1}$. We find that
the coefficient of a $z^d$  is a sum of products of binomial coefficients of the form
$\binom{(p^s-1)r/q}{i}$
or
$\binom{1+(p^s-1)(q-2r)/q}{i}$
with certain constrains on the indices~$i$.   
Namely, the sum of degrees $i$ corresponds to the monomial $x^{d p^s-1}=\prod_{i,j}\, x_{i,j}^{j p^s-1}$.

As $s\to \infty$, the degree constrains to compute the coefficient of a $z^d$ coincide,
 and in the $p$-adic norm
we have  $\binom{(p^s-1)r/q}{i}\to \binom{-r/q}{i}$
and $\binom{1+(p^s-1)(q-2r)/q}{i}\to \binom{2r/q}{i}$,
 which gives (\ref{limitpart}). 
\end{proof}

\section{Dwork-type congruences for $\T_s(z)$  }

\subsection{Dwork-type congruences}
The goal of this section is to prove our main theorem:

\begin{thm} 
\label{dworkth}

 The polynomials $\T_s(z)$ satisfy the Dwork-type congruences:
\bean
 \label{dworkid}
\dfrac{\T_{s+1}(z)}{\T_s(z^p)}=\dfrac{\T_{s}(z)}{\T_{s-1}(z^p)} \mod p^s
\eean
with $s=1,2,\dots$ and $\T_{0}(z)=1$. 
\end{thm}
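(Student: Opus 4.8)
The plan is to realize $\T_s(z)$, up to an explicit sign, as a distinguished alternating coefficient of an iterated Dwork (Cartier) operator, and then to run the \emph{ghost} argument of \cite{Mel09,MeVl16,VZ21,Var22b}, using Lemma \ref{lem 4.1} to collapse an a priori matrix-valued recursion to the scalar recursion \eqref{dworkid}. The first step is to record the multiplicative recursion that drives everything. Since $q\mid p-1$ by \eqref{pq form}, $F:=\bar\Phi_1=\bar\Phi^{\,1-p}$ lies in $\matZ[x,z]$ (this is Lemma \ref{lem 4.1} for $s=1$), and from $\bar\Phi_s=\bar\Phi^{\,1-p^s}$ together with $1-p^s=(1-p)(1+p+\cdots+p^{s-1})$ one gets $\bar\Phi_s=F\cdot F^{p}\cdots F^{p^{s-1}}$, equivalently the one-step relation $\bar\Phi_{s+1}=\bar\Phi_s\,F^{p^s}$. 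Hence, writing $1$ for the all-ones degree vector,
$$
\T_s(z)=\pm\,\mathrm{coeff}_{x^{d p^s-1}}\Bigl(\Delta(x)\,F(x,z)\,F(x,z)^{p}\cdots F(x,z)^{p^{s-1}}\Bigr),
$$
and one would then use the Dwork operator $\psi$ on Laurent polynomials in the $x$-variables (keep the monomials whose $x$-exponent is divisible by $p$ in every coordinate, then divide that exponent by $p$; this satisfies $\psi(g(x^p)h(x))=g(x)\psi(h(x))$), together with Dwork's lemma $F(x,z)^{p^s}\equiv F(x^p,z^p)^{p^{s-1}}\pmod{p^s}$, to rewrite $\mathrm{coeff}_{x^{d p^{s+1}-1}}(\Delta\bar\Phi_{s+1})$, modulo $p^s$, in terms of the coefficients of $\Delta\bar\Phi_s$ twisted by $\psi$ and by the Frobenius $z\mapsto z^p$ acting on a power of $F$. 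The non-uniform index \eqref{defn}, $d^{(i)}_j=j$, contributes an additional monomial twist $x\mapsto x^{d}$ that has to be carried along.

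The next ingredient is the ghost. Following \cite{VZ21,Var22b}, there is a finite index set $\mathcal S$ of $x$-exponent vectors (a box depending only on $k,n$, not on $s$, cut out by the Newton polytope of $\Delta\cdot F$) and a family of Hasse--Witt matrices $H_s(z)$ built from the coefficients $\mathrm{coeff}_{x^{p u-v}}(\Delta F)$ and their higher iterates, one distinguished alternating combination of whose entries equals $\pm\,\T_s(z)$. The ghost $\Ghost$, introduced through $F(x,z)^{p}=F(x^p,z^p)+p\,\Ghost(x,z)$ with $\Ghost\in\matZ[x,z]$ together with its higher analogues, measures the failure of $\psi$ to commute with raising to the $p$-th power; the congruences of \cite{VZ21,Var22b} then take the form of a cocycle relation $H_{s+t}(z)\equiv H_s(z)\,\phi^{s}\!\bigl(H_t(z)\bigr)\pmod{p^{\min(s,t)}}$, with the ghost terms supplying the error, and the standard manipulation extracts from it the relation $H_{s+1}\,\phi(H_{s-1})\equiv H_s\,\phi(H_s)\pmod{p^s}$, where $\phi$ denotes the Frobenius $z\mapsto z^p$.

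The new point, and where Lemma \ref{lem 4.1} is essential, is that the symmetry of $\bar\Phi_s$ in each block $x_{a,1},\dots,x_{a,\bv_a}$, played against the alternation of $\Delta$, forces $\mathrm{coeff}_{x^{d p^s-1}}(\Delta\bar\Phi_s)$ to equal the alternating sum over $S_{\bv_1}\times\cdots\times S_{\bv_{n-1}}$ of coefficients of $\bar\Phi_s$ at the permuted ``staircase-shifted'' exponents $x_{a,j}\mapsto j\,p^s-\sigma_a(j)$; because $p^s$ dwarfs every block size $\bv_a$, the alternating structure propagates diagonally through the recursion, so the combination of entries of $H_s(z)$ attached to the index $d$ evolves on its own, with no input from the remaining combinations, i.e.\ by a closed scalar recursion. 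Thus the matrix cocycle degenerates to the scalar identity $\T_{s+1}(z)\,\T_{s-1}(z^p)\equiv\T_s(z)\,\T_s(z^p)\pmod{p^s}$ (the ``Dwork-type congruences with symmetries'' of the introduction), and, since $\T_s(0)=1$ makes $\T_{s-1}(z^p)$ and $\T_s(z^p)$ units in $\matZ_p[[z]]$, this is equivalent to \eqref{dworkid}. The base case $s=1$, with $\T_0=1$, reduces to the Frobenius congruence $F(x,z)^{p}\equiv F(x^p,z^p)\pmod p$ read off the distinguished coefficient.

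I expect the main obstacle to be twofold. First, one must prove the ghost/cocycle congruences to the full precision $p^s$ rather than merely modulo $p$: crude term-by-term estimates only give modulus $p$, so the ghost expansion has to be organized (using that $F$ is a product of linear forms with controlled exponents and that $d p^s-1$ is precisely the ``balanced'' index singled out in the computation of $\T_s(0)$) so that the relevant matrix entries genuinely gain a full factor $p^s$ at each stage. Second, one has to check that the symmetry of Lemma \ref{lem 4.1} is exactly strong enough to decouple the distinguished alternating combination of entries from the rest of $H_s(z)$; this decoupling is what allows the argument to avoid working with the whole Hasse--Witt matrix, as advertised in the introduction.
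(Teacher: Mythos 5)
Your plan has the right ingredients --- a ghost expansion in the spirit of \cite{Mel09,MeVl16,VZ21,Var22b}, and the observation that the skew-symmetry of $\Delta\cdot\bar\Phi_s$ under $\frak S_{k,n}$ should isolate a single scalar quantity --- and those are indeed the two pillars of the paper's argument. But as written the proposal is a roadmap with two gaps, both of which you yourself flag, and they are precisely the substance of the proof; neither is a routine check.

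The first gap is the ``decoupling.'' You assert that the alternating combination of Hasse--Witt entries attached to the index $d$ ``evolves on its own''; this is not automatic and is the content of the paper's Lemma~\ref{lemmaforT} (and its ghost counterpart, Lemma~\ref{lemmaForG}). The mechanism is concrete: the sharp degree bound $\deg_{x_{m,i}}\Mas_s(x,z) < (\bv_m+1)p^s-1$, read off from \eqref{bps}, together with skew-symmetry forces $\mathrm{coeff}_{x^{u p^s-1}}(\Mas_s)$ to vanish unless $u=\sigma(d)$, in which case it is $(-1)^\sigma\T_s(z)$. Your appeal to ``$p^s$ dwarfs every block size'' gestures at this but does not give the bound, and without it there is no reason the sum over $\mathcal S$ collapses to a single orbit of $d$. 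In particular, the heavier scaffolding you import --- the Dwork/Cartier operator $\psi$, the full Hasse--Witt matrix $H_s(z)$, and a matrix cocycle congruence $H_{s+t}\equiv H_s\,\phi^s(H_t)$ --- is not only unverified at the precision you need but turns out to be unnecessary: the paper never forms $H_s$, it works with the scalar $\T_s$ from the start.

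The second gap is the precision. Your ``standard manipulation extracts $H_{s+1}\phi(H_{s-1})\equiv H_s\phi(H_s)\pmod{p^s}$'' hides the actual inductive argument. What the paper does is prove (via Lemmas~\ref{lemmaforT}, \ref{lemmaForG} and the inductive ghost definition \eqref{inddefgh}) the exact scalar identity
$$
\T_s(z)\;=\;\sum_{m=1}^{s}\G_m(z)\,\T_{s-m}\!\bigl(z^{p^m}\bigr),\qquad \G_m\equiv 0 \pmod{p^{m-1}},
$$
and then runs an induction on $s$: from the inductive hypothesis one derives, by substituting $z\mapsto z^{p^k}$ and telescoping, the auxiliary congruence \eqref{modrel1}, and subtracting the two ghost expansions for $\T_{s+1}/\T_s(z^p)$ and $\T_s/\T_{s-1}(z^p)$ one sees that the $m$th term picks up $p^{m-1}$ from $\G_m$ and $p^{s-m+1}$ from \eqref{modrel1}, giving $p^s$ overall, while the extra $\G_{s+1}$ term already vanishes mod $p^s$. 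This bookkeeping --- trading off the ghost valuation $p^{m-1}$ against the telescoped precision $p^{s-m+1}$ --- is where the full modulus $p^s$ actually comes from, and it is not captured by a generic ``cocycle'' statement. If you write out the ghost expansion in the form above and carry through the induction with the two valuation sources, you will arrive at the paper's proof; as it stands the proposal identifies the right strategy but stops short of the two steps that make it work.
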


Before we proceed with the proofs, we discuss several consequences.

\begin{cor} \label{corrprod}
 For $m=0,\dots, s-1$, the polynomials $\T_s(z)$ satisfy the following congruences:
\bean \label{recur}
\T_{s}(z)=\dfrac{\T_{s-m}(z)\T_{s-m}(z^p)\dots \T_{s-m}(z^{p^m})}{\T_{s-m-1}(z^p)\dots \T_{s-m-1}(z^{p^m})} \mod p^{s-m}
\eean
In particular, for $m=s-1$, we have
$$
\T_{s}(z)=\T_1(z) \T_1(z^p) \dots \T_1(z^{p^{s-1}}) \mod p
$$

\end{cor}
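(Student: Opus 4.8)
The plan is to deduce the congruences \eqref{recur} from Theorem \ref{dworkth} by a short induction on $m$. All congruences should be read in the ring $\Z_p[[z]]$: since $\T_\ell(0)=1$ for every $\ell$ (in particular $\T_0\equiv 1$), each $\T_\ell(z^{p^j})$ is a unit in $\Z_p[[z]]$, so the indicated ratios of $\T$'s are well-defined power series over $\Z_p$, and a congruence $A/B\equiv C/D\pmod{p^n}$ between two such ratios (with $B,D$ units) is equivalent to $AD\equiv CB\pmod{p^n}$. Consequently these congruences may be freely cross-multiplied, multiplied with one another, and composed with the substitution $z\mapsto z^{p^j}$, all of which preserve congruence modulo a power of $p$.

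The base case $m=0$ of \eqref{recur} is the tautology $\T_s(z)\equiv\T_s(z)\pmod{p^s}$. For the inductive step, suppose \eqref{recur} holds for some $m$ with $0\le m\le s-2$; reducing it to the weaker modulus $p^{\,s-m-1}$ gives
$\T_s(z)\equiv \prod_{j=0}^{m}\T_{s-m}(z^{p^j})\big/\prod_{j=1}^{m}\T_{s-m-1}(z^{p^j})\pmod{p^{\,s-m-1}}$.
Now invoke Theorem \ref{dworkth} with its running index ``$s$'' set equal to $s-m-1$ --- which is $\ge 1$ because $m\le s-2$, while $\T_{s-m-2}$ is still defined because $s-m-2\ge 0$ --- and with ``$z$'' replaced by $z^{p^j}$, for each $j=0,\dots,m$. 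This yields $\T_{s-m}(z^{p^j})\equiv \T_{s-m-1}(z^{p^j})\,\T_{s-m-1}(z^{p^{j+1}})\big/\T_{s-m-2}(z^{p^{j+1}})\pmod{p^{\,s-m-1}}$. Substituting these $m+1$ congruences into the numerator product and cancelling telescopically --- each factor $\T_{s-m-1}(z^{p^j})$ for $1\le j\le m$ now occurs with exponent $2$ in the numerator, and the denominator $\prod_{j=1}^{m}\T_{s-m-1}(z^{p^j})$ inherited from the inductive hypothesis brings it back down to exponent $1$ --- collapses the right-hand side to $\prod_{j=0}^{m+1}\T_{s-m-1}(z^{p^j})\big/\prod_{j=1}^{m+1}\T_{s-m-2}(z^{p^j})$, which is precisely \eqref{recur} with $m$ replaced by $m+1$. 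The special case $m=s-1$ then reads $\T_s(z)\equiv\T_1(z)\T_1(z^p)\cdots\T_1(z^{p^{s-1}})\pmod p$ because $\T_0\equiv 1$.

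I do not anticipate a substantive obstacle: once the congruences are interpreted in $\Z_p[[z]]$ the manipulations are purely formal, and the only bookkeeping that requires care is the arithmetic of the moduli and subscripts --- each use of Theorem \ref{dworkth} costs one power of $p$ (whence $p^{\,s-m}$ drops to $p^{\,s-m-1}$ in the step), and one must confirm that every subscript stays $\ge 0$ and that the hypothesis $s\ge 1$ of Theorem \ref{dworkth} holds at each invocation, both of which are exactly guaranteed by running the step over $0\le m\le s-2$. (A slicker packaging: set $J_\ell(z):=\T_{\ell+1}(z)/\T_\ell(z^p)$, so that iterating $\T_\ell(w)=J_{\ell-1}(w)\,\T_{\ell-1}(w^p)$ down to $\T_0\equiv 1$ gives the exact identity $\T_s(z)=\prod_{j=0}^{s-1}J_{s-1-j}(z^{p^j})$; Theorem \ref{dworkth} reads $J_\ell\equiv J_{\ell-1}\pmod{p^\ell}$, so replacing the first $m$ factors $J_{s-1-j}$ by $J_{s-m-1}$ costs only a passage to modulus $p^{\,s-m}$, while the remaining factors reassemble exactly into $\T_{s-m}(z^{p^m})$, giving \eqref{recur} directly.)
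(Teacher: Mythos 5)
Your proof is correct and follows essentially the same line as the paper's: both induct on $m$, repeatedly invoking Theorem \ref{dworkth} with shifted index $s\mapsto s-m-1$ and argument $z\mapsto z^{p^j}$ to rewrite each numerator factor $\T_{s-m}(z^{p^j})$, and then cancel telescopically to land on the $(m+1)$ case. Your version is a bit more explicit about the modulus bookkeeping (first weakening to $p^{\,s-m-1}$) and about why the ratios are legitimate in $\Z_p[[z]]$, but the underlying argument coincides with the paper's.
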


\begin{proof}
By substituting $s\to s-1$ we rewrite relation (\ref{dworkid}) in the form
 \bean
  \label{step0}
 \T_s(z)=\dfrac{\T_{s-1}(z) \T_{s-1}(z^p)}{\T_{s-2}(z^p)} \mod p^{s-1}
 \eean 
 which gives (\ref{recur}) for $m=1$.  For the second step, we use this relation to substitute the factors $\T_{s-1}(z)$ in the numerator of (\ref{step0}) by
 $$
 \T_{s-1}(z)=\dfrac{\T_{s-2}(z) \T_{s-2}(z^p)}{\T_{s-3}(z^p)} \mod p^{s-2}
 $$
which gives
 $$
 \T_s(z)=\dfrac{\T_{s-2}(z) \T_{s-2}(z^p) \T_{s-2}(z^{p^2})}{\T_{s-3}(z^p) \T_{s-3}(z^{p^2})} \mod p^{s-2}
 $$
 i.e., (\ref{recur}) for $m=2$. Continuing by induction, after $m$ steps we arrive at~(\ref{recur}). 
\end{proof}

\begin{thm} \label{infprodrepr}

For $a\geq 1$, the vertex function of $X=T^{*}\Gr(k,n)$ has the following infinite product presentation
modulo $p^a$, 
 $$
\Ver(z)=\prod\limits_{i=0}^{\infty}\, \dfrac{\T_a(z^{p^{i}})}{\T_{a-1}(z^{p^{i+1}})} \mod p^a
 $$  
which means that the coefficients of the Taylor series
of both sides at $z=0$ are equal modulo~$p^a$. In particular, for $a=1$ we obtain
$$
\Ver(z)=\prod\limits_{i=0}^{\infty}\,\T_1(z^{p^{i}}) \mod p.
$$
 
\end{thm}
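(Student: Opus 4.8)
The plan is to derive Theorem~\ref{infprodrepr} as a direct limit of the finite congruences already established, combining Corollary~\ref{corrprod} with the $p$-adic convergence statement of Theorem~\ref{TtoVthm}. First I would fix $a\geq 1$ and, for each $s\geq a$, apply Corollary~\ref{corrprod} with $m=s-a$ (so $s-m=a$) to obtain
\bean
\label{finiteprod}
\T_s(z)=\frac{\T_a(z)\,\T_a(z^p)\cdots\T_a(z^{p^{s-a}})}{\T_{a-1}(z^p)\cdots\T_{a-1}(z^{p^{s-a}})}\mod p^a.
\eean
This identity is to be interpreted coefficient-wise in the Taylor expansion at $z=0$: since $\T_{a-1}(0)=1$ the denominator is invertible as a formal power series over $\matZ$, and because all the factors $\T_a(z^{p^j})$, $\T_{a-1}(z^{p^j})$ have constant term $1$, each Taylor coefficient of the right-hand side is a fixed integer polynomial in the coefficients of $\T_a,\T_{a-1}$ independent of $s$ once $s-a$ is large enough that the omitted factors $\T_a(z^{p^j})$, $\T_{a-1}(z^{p^j})$ with $j>s-a$ no longer influence that coefficient. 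In other words, for each fixed power $z^m$ the right-hand side of \eqref{finiteprod} stabilizes as $s\to\infty$ to the corresponding Taylor coefficient of the infinite product $\prod_{i=0}^\infty \T_a(z^{p^i})/\T_{a-1}(z^{p^{i+1}})$, which is a well-defined element of $\matZ[[z]]$.

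Next I would take the limit $s\to\infty$ on the left-hand side of \eqref{finiteprod}. By Theorem~\ref{TtoVthm}, for each fixed $m$ the integer $c_{s,m}=\mathrm{coeff}_{z^m}\T_s(z)$ converges $p$-adically to $c_m=\mathrm{coeff}_{z^m}\Ver(z)$. Reducing the congruence \eqref{finiteprod} modulo $p^a$ and passing to the limit, the left side gives $c_m\bmod p^a$ while the right side gives the (eventually constant) Taylor coefficient of the infinite product modulo $p^a$; hence
$$
\Ver(z)=\prod_{i=0}^\infty\frac{\T_a(z^{p^i})}{\T_{a-1}(z^{p^{i+1}})}\mod p^a,
$$
coefficient-by-coefficient, which is exactly the asserted statement. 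The case $a=1$ follows immediately since $\T_0(z)=1$, collapsing the product to $\prod_{i=0}^\infty \T_1(z^{p^i})$.

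The only genuinely delicate point — and the one I would write out carefully — is the interchange of the two limiting processes: the $p$-adic limit in $s$ on the left and the stabilization of Taylor coefficients on the right. Both are in fact harmless because for each fixed $m$ only finitely many $s$ and finitely many product factors are relevant, so no uniformity is needed; still, one must check that the stabilized value of a Taylor coefficient of the right-hand side of \eqref{finiteprod} is reached at the same threshold in $s$ modulo $p^a$ as the congruence \eqref{finiteprod} requires, i.e. that the factors $\T_a(z^{p^j})\equiv 1$, $\T_{a-1}(z^{p^j})\equiv 1 \bmod z^{m+1}$ for $p^j>m$. This is clear since $\T_a$, $\T_{a-1}$ have constant term $1$, so substituting $z\mapsto z^{p^j}$ with $p^j>m$ leaves the coefficients of $z^0,\dots,z^m$ unchanged. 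With this observation in place the argument is essentially a bookkeeping limit, and no new input beyond Corollary~\ref{corrprod} and Theorem~\ref{TtoVthm} is required.
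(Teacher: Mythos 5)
Your proof is correct and is essentially the paper's own argument: the paper also obtains the result by taking the limit $s\to\infty$ in Corollary~\ref{corrprod} with $s-m=a$ fixed and invoking Theorem~\ref{TtoVthm} for $\T_s\to\Ver$. You simply spell out the coefficient-wise stabilization of the right-hand side, which the paper leaves implicit.
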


\begin{proof}
 In (\ref{recur}) we consider the limit as $s\to \infty$, $m\to \infty$ such that $m-s=a$ is fixed.
  By Theorem \ref{TtoVthm} this limit converges to the vertex function $\Ver(z)$.  
\end{proof}

\subsection{Ghosts, cf. \cite{VZ21,Var22b}} 
Define the polynomials $\Ghost_s(x,z)$, $s\geq 0$, recursively:
$$
\Ghost_0(x,z)=\Mas_1(x,z)
$$
and 
\bean
 \label{inddefgh}
\Ghost_s(x,z)= \Mas_{s+1}(x,z) - \sum\limits_{j=1}^{s}\, \Ghost_{j-1}(x,z)\, \overline{\Mas}_{s-j+1}(x^{p^j}, z^{p^j}),
\qquad s\geq 1.
\eean
For example,
$$
\Ghost_1(x,z)=\Mas_{2}(x,z) - \Mas_{1}(x,z) \overline{\Mas}_{1}(x^p,z^p)
$$
Note that 
$$
\Mas_{2}(x,z) = \Delta(x) \overline{\Mas}_{1}(x,z)^{1+p},  \ \ \Mas_{1}(x,z) = \Delta(x) \overline{\Mas}_{1}(x,z)
$$
Therefore
$$
\Ghost_1(x,z)= \Delta(x)\overline{\Mas}_{1}(x,z) (\overline{\Mas}_{1}(x,z)^p-\overline{\Mas}_{1}(x^p,z^p))
$$
It is clear from the last formula that
$$
\Ghost_1(x,z) = 0 \mod p.
$$
One can  easily generalize this property using induction on $s$:

\begin{lem}
 The polynomials $\Ghost_s(x,z)$ satisfy $\Ghost_{s}(x,z) = 0 \mod p^s$.
 \qed
\end{lem}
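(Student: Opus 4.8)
\medskip

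The plan is to prove $\Ghost_s\equiv 0\pmod{p^s}$ by induction on $s$ (the case $s=0$ being vacuous and $s=1$ being the computation just carried out), the engine being a closed formula that reconstructs $\Ghost_s$ from the lower ghosts in which the full power $p^s$ appears termwise: a factor $p^{\,j-1}$ coming from $\Ghost_{j-1}$ by the inductive hypothesis, and a complementary factor $p^{\,s-j+1}$ coming from a Frobenius defect of $\overline{\Mas}_1$.

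First I would record the multiplicativity of the polynomial superpotentials. Since $\overline{\Mas}_m=\bar\Phi(x,z)^{\,1-p^m}$ is a genuine polynomial by Lemma \ref{lem 4.1}, the exponent identity $1-p^{m}=(1-p^{m-1})+p^{m-1}(1-p)$ yields, for $m\ge 1$ and with $\overline{\Mas}_0=1$,
\bea
\overline{\Mas}_m(x,z)&=&\overline{\Mas}_{m-1}(x,z)\,\overline{\Mas}_1(x,z)^{p^{m-1}},\\
\Mas_m(x,z)&=&\overline{\Mas}_1(x,z)^{p^{m-1}}\,\Mas_{m-1}(x,z),
\eea
the second line following from $\Mas_m=\Delta(x)\,\overline{\Mas}_m$. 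Now rewrite the defining recursion \eqref{inddefgh} as $\Mas_m=\sum_{j=1}^{m}\Ghost_{j-1}(x,z)\,\overline{\Mas}_{m-j}(x^{p^j},z^{p^j})$ (the term $j=m$ being $\Ghost_{m-1}$, using $\overline{\Mas}_0=1$). Applying this at level $m=s+1$ gives one expression for $\Mas_{s+1}$; expanding instead $\Mas_{s+1}=\overline{\Mas}_1(x,z)^{p^s}\Mas_s$ via the recursion at level $m=s$ gives another. Subtracting, solving for $\Ghost_s$, and using the multiplicativity identity once more to factor $\overline{\Mas}_{s+1-j}(x^{p^j},z^{p^j})=\overline{\Mas}_{s-j}(x^{p^j},z^{p^j})\,\overline{\Mas}_1(x^{p^j},z^{p^j})^{p^{s-j}}$, one arrives at
\bean \label{ghostclosed}
\Ghost_s(x,z)=\sum_{j=1}^{s}\Ghost_{j-1}(x,z)\;\overline{\Mas}_{s-j}(x^{p^j},z^{p^j})\,\Big[\,\overline{\Mas}_1(x,z)^{p^{s}}-\overline{\Mas}_1(x^{p^j},z^{p^j})^{p^{\,s-j}}\,\Big],
\eean
which for $s=1$ is exactly the identity displayed just before the lemma.

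Next I would estimate the bracket. For any $f\in\matZ[x,z]$ one has $f(x^{p^j},z^{p^j})\equiv f(x,z)^{p^{j}}\pmod p$ by iterating the Frobenius congruence, and $g\equiv h\pmod p$ implies $g^{p^{k}}\equiv h^{p^{k}}\pmod{p^{k+1}}$, since raising to the $p$-th power raises the $p$-adic order of a difference by at least one. Taking $f=\overline{\Mas}_1$ and $k=s-j$ gives $\overline{\Mas}_1(x^{p^j},z^{p^j})^{p^{\,s-j}}\equiv\overline{\Mas}_1(x,z)^{p^{s}}\pmod{p^{\,s-j+1}}$, so the bracket in \eqref{ghostclosed} is divisible by $p^{\,s-j+1}$. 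By the inductive hypothesis $\Ghost_{j-1}\equiv 0\pmod{p^{\,j-1}}$ for $1\le j\le s$; hence the $j$-th summand in \eqref{ghostclosed} is divisible by $p^{\,j-1}\cdot p^{\,s-j+1}=p^{\,s}$, and therefore so is $\Ghost_s$, completing the induction.

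The only delicate point is the derivation of \eqref{ghostclosed}: one has to keep precise track of the exponents $1-p^{m}$, invoke Lemma \ref{lem 4.1} so that every factor being manipulated is an honest polynomial, and bookkeep the two substitutions $(x,z)\mapsto(x^{p^j},z^{p^j})$ correctly. Everything after that is the ``Frobenius defect'' induction familiar from \cite{VZ21,Var22b}, made especially short here by the product structure $\overline{\Mas}_m=\overline{\Mas}_1^{\,1+p+\dots+p^{m-1}}$; in particular, no Hasse--Witt matrices enter.
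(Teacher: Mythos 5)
Your proof is correct and is precisely the generalization the paper intends: the paper only displays the $s=1$ computation $\Ghost_1=\Delta\,\overline{\Mas}_1\bigl(\overline{\Mas}_1^{\,p}-\overline{\Mas}_1(x^p,z^p)\bigr)$ and asserts the lemma follows "easily by induction," leaving all details to the reader. Your closed formula for $\Ghost_s$ specializes at $s=1$ to that displayed identity, and the combination of the Frobenius-defect estimate on the bracket with the inductive divisibility of $\Ghost_{j-1}$ supplies exactly the termwise $p^s$-divisibility the paper is invoking.
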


Let us define the {\it ghost polynomials} $\G_s(z)\in \mathbb{Z}[z]$ by 
$$
\G_s(z)=\textrm{coeff}_{x^{d p^s-1}}\Big( \Ghost_{s-1}(z)  \Big),
$$
where $x^{d p^s-1}$ is the same monomial as in   \eqref{definitionofT}. The above lemma implies that
\bean \label{ghvanish}
\G_s(z) =0  \mod p^{s-1}.
\eean

\subsection{Ghosts expansions of $\T_s(z)$}

The group $\frak{S}_{k,n} =\frak{S}_{\bv_1} \times \frak{S}_{\bv_2}\times \dots \times  \frak{S}_{\bv_{n-1}}$ acts naturally on the set of variables $x_{i,j}$ (the symmetric group $\frak{S}_{\bv_i}$ acts by permutations of variables 
$x_{i,1},\dots, x_{i,\bv_{i}}$). 
By Lemma \ref{lem 4.1}, the polynomial  $\overline{\Mas}_{s}(x,z)$ is invariant under action of $\frak{S}_{k,n}$ 
while $\Mas_{s}(x,z)$ is skew-symmetric:
\bean \label{symtry}
\Mas_{s}(\sigma(x),z)=(-1)^{\sigma} \Mas_{s}(x,z), 
\qquad 
\overline{\Mas}_{s}(\sigma(x),z)=\overline{\Mas}_{s}(x,z)
\eean
where $(-1)^{\sigma}$ is the sign of a permutation $\sigma \in \frak{S}_{k,n}$. 
\vsk.2>

For this section, let us redefine:
\bean \label{redefT}
\T_s(z) =  \mathrm{coeff}_{x^{d p^s -1}} \Big(\Mas_s(x,z) \Big) 
\eean
which differs from our definition (\ref{definitionofT}) by the factor
 $(-1)^{\frac{(p^s-1)r}{q} N}$. It will be easier to prove  Theorem 
 \ref{dworkth} for this polynomial, and then show that the result does not depend on this sign.

\begin{lem} \label{tfrommbar}
In this notation we have the following equality:
\bean \label{Tdef}
\T_s(z) = \sum\limits_{\sigma \in \frak{S}_{k,n}}\, (-1)^\sigma\, \mathrm{coeff}_{x^{d p^s - \sigma(d)}} ( \overline{\Mas}_{s}(x,z) ) 
\eean
where $d$ is given by (\ref{defn}). 
\end{lem}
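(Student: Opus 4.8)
The plan is to expand $\Mas_s(x,z) = \Delta(x)\,\overline{\Mas}_s(x,z)$ using the explicit Vandermonde-type form of $\Delta(x)$, and then extract the coefficient of the fixed monomial $x^{dp^s-1}$. First I would recall that $\Delta(x) = \prod_{m=1}^{n-1}\prod_{1\le i<j\le \bv_m}(x_{m,j}-x_{m,i})$ factors as a product over the quiver vertices $m=1,\dots,n-1$ of Vandermonde determinants in the variables $x_{m,1},\dots,x_{m,\bv_m}$. Hence, by the standard Leibniz-formula expansion of each Vandermonde, $\Delta(x) = \sum_{\sigma\in\frak{S}_{k,n}} (-1)^\sigma\, x^{\,\rho-\sigma(\rho')}$ for a suitable pair of degree vectors; more precisely, writing the Vandermonde $\prod_{i<j}(x_{m,j}-x_{m,i})$ in $\bv_m$ variables as $\sum_{\tau\in\frak{S}_{\bv_m}}(-1)^\tau \prod_j x_{m,j}^{\tau(j)-1}$, the product over all $m$ gives precisely a sum over $\sigma=(\tau_1,\dots,\tau_{n-1})\in\frak{S}_{k,n}$ with sign $(-1)^\sigma$ and monomial exponent vector $\sigma(d)-\mathbf{1}$, where $d$ is the degree vector of \eqref{defn} (since $d^{(i)}_j=j$, the exponents $\tau(j)-1$ range over $0,\dots,\bv_m-1$, matching $\sigma(d)-\mathbf{1}$).

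Next I would substitute this expansion into $\Mas_s = \Delta\cdot\overline{\Mas}_s$ and read off the coefficient of $x^{dp^s-1}$:
\begin{align*}
\T_s(z) = \mathrm{coeff}_{x^{dp^s-1}}\bigl(\Mas_s(x,z)\bigr)
= \sum_{\sigma\in\frak{S}_{k,n}} (-1)^\sigma\, \mathrm{coeff}_{x^{dp^s-1}}\bigl(x^{\sigma(d)-\mathbf{1}}\,\overline{\Mas}_s(x,z)\bigr).
\end{align*}
Since multiplying by a monomial shifts the exponent, $\mathrm{coeff}_{x^{dp^s-1}}(x^{\sigma(d)-\mathbf{1}}\cdot F) = \mathrm{coeff}_{x^{(dp^s-1)-(\sigma(d)-\mathbf{1})}}(F) = \mathrm{coeff}_{x^{dp^s-\sigma(d)}}(F)$, because $(dp^s-\mathbf{1})-(\sigma(d)-\mathbf{1}) = dp^s-\sigma(d)$. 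Taking $F=\overline{\Mas}_s(x,z)$ yields exactly \eqref{Tdef}. One should check that the total $x$-degree of $\overline{\Mas}_s$ equals the degree of $dp^s-\sigma(d)$ so that the coefficient is meaningful (not forced to vanish for degree reasons), which follows from the degree bookkeeping already set up before Lemma \ref{lem 4.1}; alternatively this is automatic since $\Delta\cdot\overline{\Mas}_s=\Mas_s$ has the monomial $x^{dp^s-1}$ appearing.

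The only genuinely delicate point is the combinatorial identification of the exponent vectors: one must verify that the Leibniz expansion of the product of Vandermondes, indexed naturally by $\frak{S}_{k,n}$, produces the monomial exponents $\sigma(d)-\mathbf{1}$ with the correct sign $(-1)^\sigma$, using that the degree vector $d$ has $d^{(i)}_j=j$ so that $\{d^{(i)}_j-1 : j\}=\{0,1,\dots,\bv_i-1\}$ is precisely the exponent set appearing in the $i$-th Vandermonde. This is a routine unwinding of definitions, and I expect it to be the main (though minor) obstacle; everything else is the elementary observation that extracting a fixed Laurent/polynomial coefficient after multiplying by a monomial just translates the index.
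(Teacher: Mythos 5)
Your proposal is correct and follows essentially the same route as the paper: expand $\Delta(x)$ via the Leibniz formula into $\sum_{\sigma}(-1)^\sigma x^{\sigma(d)-1}$, then read off the coefficient of $x^{dp^s-1}$ in $\Delta(x)\,\overline{\Mas}_s(x,z)$ as a shifted coefficient of $\overline{\Mas}_s$. The extra remark about degree bookkeeping is unnecessary (the coefficient identity holds regardless of whether the extracted coefficient vanishes), but it does no harm.
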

\begin{proof}
Expanding the product $\Delta(x)$ we find
 $$
 \Delta(x)=\sum\limits_{\sigma \in \frak{S}_{k,n}}\, (-1)^{\sigma} \sigma\Big(\prod\limits_{m} \, x_{m,1}^{0} x_{m,2}^{1} \dots x_{m,\bv_m}^{\bv_m-1} \Big) = \sum\limits_{\sigma \in \frak{S}_{k,n}}\, (-1)^{\sigma} x^{\sigma(d)-1}
 $$
By definition  $\Mas_{s}(x,z)=\Delta(x) \overline{\Mas}_{s}(x,z)$. Combining this with definition (\ref{definitionofT}) we arrive at (\ref{Tdef}).
\end{proof}

\vspace{3mm}

\begin{lem} 
\label{lemmaforT}
 Let $u \in \mathbb{Z}^{D}$ be a vector of multi-degrees. 
The coefficient of $x^{u p^s -1}$ in the polynomial $\Mas_s(x)$ equals zero unless  $u$ is of the form
$$
u = \sigma (d), \ \ \sigma \in \frak{S}_{k,n}\,,
$$
where $d$ is defined by (\ref{defn}).   If $u = \sigma (d)$, then
\bean
\mathrm{coeff}_{x^{u p^s -1}} (\Mas_s(x,z)) = 
(-1)^{\sigma} \T_s(z).
\eean 
\end{lem}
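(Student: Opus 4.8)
The plan is to use the skew-symmetry of $\Mas_s(x,z)$ established in \eqref{symtry} together with the explicit product formula for $\Delta(x)$ from Lemma \ref{tfrommbar}. First I would observe that, since $\overline{\Mas}_s(x,z)$ is invariant under $\frak{S}_{k,n}$ and $\Mas_s(x,z)=\Delta(x)\,\overline{\Mas}_s(x,z)$, the expansion $\Delta(x)=\sum_{\tau\in\frak{S}_{k,n}}(-1)^\tau x^{\tau(d)-1}$ gives
$$
\Mas_s(x,z)=\sum_{\tau\in\frak{S}_{k,n}}(-1)^\tau x^{\tau(d)-1}\,\overline{\Mas}_s(x,z).
$$
Extracting the coefficient of $x^{u p^s-1}$ from both sides, the term indexed by $\tau$ contributes $(-1)^\tau\,\mathrm{coeff}_{x^{u p^s-\tau(d)}}(\overline{\Mas}_s(x,z))$, so
$$
\mathrm{coeff}_{x^{u p^s-1}}(\Mas_s(x,z))=\sum_{\tau\in\frak{S}_{k,n}}(-1)^\tau\,\mathrm{coeff}_{x^{u p^s-\tau(d)}}(\overline{\Mas}_s(x,z)).
$$

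Next I would exploit the skew-symmetry directly. For any $\sigma\in\frak{S}_{k,n}$ one has $\Mas_s(\sigma(x),z)=(-1)^\sigma\Mas_s(x,z)$; comparing the coefficient of $x^{u p^s-1}$ in both expressions and noting that applying $\sigma$ to the variables sends the monomial $x^{v}$ to $x^{\sigma^{-1}(v)}$ (with an appropriate bookkeeping of how $\sigma$ permutes the exponent vector within each block), one gets
$$
\mathrm{coeff}_{x^{u p^s-1}}(\Mas_s(x,z))=(-1)^\sigma\,\mathrm{coeff}_{x^{\sigma(u p^s-1)}}(\Mas_s(x,z))=(-1)^\sigma\,\mathrm{coeff}_{x^{\sigma(u)p^s-1}}(\Mas_s(x,z)),
$$
using that the all-ones vector is $\frak{S}_{k,n}$-invariant so $\sigma(u p^s-1)=\sigma(u)p^s-1$. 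Taking $u=d$ and $\sigma$ arbitrary shows $\mathrm{coeff}_{x^{\sigma(d)p^s-1}}(\Mas_s(x,z))=(-1)^\sigma\T_s(z)$, which is the second assertion of the lemma (recalling the redefinition \eqref{redefT} in force for this section).

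It remains to prove the vanishing claim: $\mathrm{coeff}_{x^{u p^s-1}}(\Mas_s(x,z))=0$ unless $u=\sigma(d)$ for some $\sigma\in\frak{S}_{k,n}$. For this I would argue by a degree/parity constraint on the exponents within each block. In the formula for $\Mas_s(x,z)=\sum_\tau(-1)^\tau x^{\tau(d)-1}\overline{\Mas}_s(x,z)$, fix a block index $a$ and look at the exponents of $x_{a,1},\dots,x_{a,\bv_a}$. Since $\overline{\Mas}_s$ is symmetric in these variables, its monomials in that block come in full symmetric-group orbits, while the factor $x^{\tau(d)-1}$ contributes exponents that are a permutation of $(0,1,\dots,\bv_a-1)$ shifted appropriately. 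For the total block-exponent to equal $(u^{(a)}_1 p^s-1,\dots,u^{(a)}_{\bv_a}p^s-1)$, reduction modulo $p^s$ forces the contribution of $x^{\tau(d)-1}$ in that block to already be congruent to $(-1,\dots,-1)\bmod p^s$, and—because $\overline{\Mas}_s$ is a polynomial whose monomials in the block have nonnegative exponents bounded by the total block-degree—a counting argument (each of the $\bv_a$ variables must "reach" a distinct target value $u^{(a)}_j p^s-1$ starting from the offsets $\{0,1,\dots,\bv_a-1\}$ plus a symmetric contribution) shows this is possible only when the multiset $\{u^{(a)}_j\}$ equals $\{1,2,\dots,\bv_a\}$, i.e. $u^{(a)}$ is a permutation of $d^{(a)}$. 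Running this across all blocks gives $u=\sigma(d)$. The main obstacle is making this last block-by-block counting argument fully rigorous: one must control precisely which total degrees in $x_{a,j}$ can occur in $\overline{\Mas}_s$, and combine the symmetry of $\overline{\Mas}_s$ with the antisymmetrizing factor $\Delta(x)$ to rule out all exponent vectors other than permutations of $d$. I expect this to follow from the same step-by-step degree bookkeeping (in the order \eqref{ordere}) already used in the proof of the lemma computing $\mathrm{coeff}_{x^{dp^s-1}}(\Mas_s(x,0))$, applied now with the target exponents $u^{(a)}_j p^s-1$ in place of $j\,p^s-1$.
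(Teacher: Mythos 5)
Your treatment of the second assertion is essentially the paper's: you use the skew-symmetry \eqref{symtry} to relate $\mathrm{coeff}_{x^{\sigma(d)p^s-1}}$ to $\mathrm{coeff}_{x^{dp^s-1}}$ and read off the sign, which is exactly what the paper does. Fine.

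Your argument for the vanishing claim, however, has a genuine gap, and the route you sketch does not close it. You try to deduce the constraint on $u$ by expanding $\Delta(x)=\sum_\tau(-1)^\tau x^{\tau(d)-1}$ against the symmetric factor $\overline{\Mas}_s$ and then reason ``block by block'' about which exponents of $x_{a,j}$ can appear. The step ``reduction modulo $p^s$ forces the contribution of $x^{\tau(d)-1}$ in that block to already be congruent to $(-1,\dots,-1)\bmod p^s$'' is simply false: the $\tau(d)-1$ exponents lie in $\{0,\dots,\bv_a-1\}$, which is tiny compared with $p^s$, so nothing forces them to be $\equiv -1$; the residue condition only constrains the $\overline{\Mas}_s$-exponents, and since many pairs $(\tau,\text{monomial of }\overline{\Mas}_s)$ can hit the same target exponent vector, your ``counting argument'' would need to control cancellations across $\tau$, which you do not do and which is genuinely delicate. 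The paper sidesteps all of this by two short observations applied to $\Mas_s$ directly: (i) skew-symmetry of $\Mas_s$ under $\frak S_{k,n}$ forces the coefficient of $x^{up^s-1}$ to vanish unless the entries $u^{(m)}_1,\dots,u^{(m)}_{\bv_m}$ are pairwise distinct for every block $m$ (a transposition that fixes the exponent vector changes the sign); and (ii) from formula \eqref{bps} one reads off $\deg_{x_{m,i}}\Mas_s(x,z)<(\bv_m+1)p^s-1$, so a nonzero exponent $u^{(m)}_ip^s-1\ge 0$ forces $1\le u^{(m)}_i\le\bv_m$. Now distinct integers in $\{1,\dots,\bv_m\}$, $\bv_m$ of them, form a permutation of $(1,\dots,\bv_m)$, i.e.\ $u=\sigma(d)$. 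You should replace your block-by-block counting sketch with this pigeonhole argument; without (i) and (ii) your proof of the vanishing part does not stand.
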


\begin{proof}
Recall that $\Mas_s(x)$ is skew-symmetric (\ref{symtry}).  Thus, the coefficient of $x^{u p^s -1}$ in $\Mas_s(x)$ 
may be nonzero only if the list  
\bean \label{mlist}
u^{(m)}=(u^{(m)}_1, \dots, u^{(m)}_{\bv_m})
\eean
consists of  pairwise distinct integers for all $m$. 

It follows from formula \eqref{bps} that
\bean \label{degM}
\deg_{x_{m,i}}(\Mas_s(x,z)) < (\bv_m+1) p^s-1.
\eean

\bigskip

Thus, a non-zero monomials of the form $x_{m,i}^{u^{(m)}_{i} p^s-1}$ can appear in the polynomial $\Mas_s(x,z)$ only for $u^{(m)}_{i}$ satisfying $1 \leq u^{(m)}_i \leq \bv_m$. Since the elements in the list (\ref{mlist}) are pairwise distinct and satisfy the the bound  $1 \leq u^{(m)}_i \leq \bv_m$, it must be of the form
$
u^{(m)}=\sigma((1,2,\dots, \bv_m))
$
for some permutation $\sigma \in \frak{S}_{\bv_m}$. This proves the first statement of 
the lemma. Now, assume that $u=\sigma(d)$, then from the skew-symmetry of $\Mas_s(x,z)$ we have
$$
\mathrm{coeff}_{x^{\sigma(d) p^s -1}} (\Mas_s(x,z)) =  (-1)^{\sigma} \mathrm{coeff}_{x^{d p^s -1}} (\Mas_s(x,z))  = (-1)^{\sigma} \T_s(z).
$$
where the last equality is the definition of $\T_s(z)$.
\end{proof}

\noindent
The same result holds for ghosts:

\begin{lem} \label{lemmaForG}
Let $u \in \mathbb{Z}^{D}$ then the coefficients of $x^{u p^s -1}$ in the polynomial $\Ghost_{s-1}(x,z)$ is equal to zero unless 
the degree vector $u$ is of the form
$$
u = \sigma (d), \qquad \sigma \in \frak{S}_{k,n}\,,
$$
where $d$ is defined by (\ref{defn}). If $u = \sigma (d)$, then
\bean \label{sksymG}
\mathrm{coeff}_{x^{\sigma(d) p^s -1}} (\Ghost_{s-1}(x,z)) = (-1)^{\sigma} \G_s(z).
\eean
\end{lem}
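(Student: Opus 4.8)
The plan is to mirror, almost verbatim, the proof of Lemma \ref{lemmaforT}, replacing $\Mas_s(x,z)$ by the ghost polynomial $\Ghost_{s-1}(x,z)$ throughout. The two ingredients that made the previous argument work were: (i) $\Mas_s(x,z)$ is skew-symmetric under $\frak{S}_{k,n}$, and (ii) the $x_{m,i}$-degree of $\Mas_s(x,z)$ is bounded by $(\bv_m+1)p^s-1$. So the first step is to establish the analogues of (i) and (ii) for $\Ghost_{s-1}(x,z)$. For (i): from the recursive definition \eqref{inddefgh}, $\Ghost_{s-1}(x,z)$ is built from $\Mas_{s}(x,z)$ (skew-symmetric by \eqref{symtry}) minus a sum of terms $\Ghost_{j-1}(x,z)\,\overline{\Mas}_{s-j}(x^{p^j},z^{p^j})$; each factor $\overline{\Mas}_{s-j}(x^{p^j},z^{p^j})$ is $\frak{S}_{k,n}$-invariant because $\overline{\Mas}$ is invariant (Lemma \ref{lem 4.1}) and the substitution $x_{m,i}\mapsto x_{m,i}^{p^j}$ commutes with permuting the $x_{m,i}$. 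By induction on $s$, each $\Ghost_{j-1}(x,z)$ with $j-1<s-1$ is skew-symmetric, hence every product $\Ghost_{j-1}\cdot\overline{\Mas}_{s-j}(x^{p^j},z^{p^j})$ is skew-symmetric, and so is $\Ghost_{s-1}(x,z)$.

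The second step is the degree bound. One needs $\deg_{x_{m,i}}\bigl(\Ghost_{s-1}(x,z)\bigr)<(\bv_m+1)p^s-1$. The term $\Mas_{s}(x,z)$ satisfies $\deg_{x_{m,i}}<(\bv_m+1)p^{s}-1$ by \eqref{degM}. For a product $\Ghost_{j-1}(x,z)\,\overline{\Mas}_{s-j}(x^{p^j},z^{p^j})$ with $1\le j\le s-1$, the $x_{m,i}$-degree is at most $\bigl[(\bv_m+1)p^{j}-1\bigr]+p^{j}\cdot\bigl[\deg_{x_{m,i}}\overline{\Mas}_{s-j}\bigr]$; since $\overline{\Mas}_{s-j}=\overline{\Mas}^{\,1-p^{s-j}}$ and $\Mas_{s-j}=\Delta(x)\overline{\Mas}_{s-j}$ has $x_{m,i}$-degree $<(\bv_m+1)p^{s-j}-1$ while $\Delta(x)$ contributes $x_{m,i}$-degree at most $\bv_m-1$, we get $\deg_{x_{m,i}}\overline{\Mas}_{s-j}<(\bv_m+1)p^{s-j}-\bv_m$, so the product has $x_{m,i}$-degree $<(\bv_m+1)p^{j}-1+p^{j}\bigl((\bv_m+1)p^{s-j}-\bv_m\bigr)=(\bv_m+1)p^{s}-1-\bv_m(p^j-1)\le(\bv_m+1)p^{s}-1$. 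Thus the bound holds for every summand and hence for $\Ghost_{s-1}(x,z)$.

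With (i) and (ii) in hand, the conclusion follows exactly as in Lemma \ref{lemmaforT}: skew-symmetry forces the coefficient of $x^{up^{s}-1}$ to vanish unless each list $u^{(m)}=(u^{(m)}_1,\dots,u^{(m)}_{\bv_m})$ consists of pairwise distinct positive integers; the degree bound forces $1\le u^{(m)}_i\le\bv_m$; together these pin $u^{(m)}$ down to a permutation of $(1,2,\dots,\bv_m)$, i.e.\ $u=\sigma(d)$ for some $\sigma\in\frak{S}_{k,n}$; and skew-symmetry then gives $\mathrm{coeff}_{x^{\sigma(d)p^{s}-1}}(\Ghost_{s-1}(x,z))=(-1)^{\sigma}\,\mathrm{coeff}_{x^{dp^{s}-1}}(\Ghost_{s-1}(x,z))=(-1)^{\sigma}\G_s(z)$ by the definition of $\G_s(z)$.

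I expect the main obstacle to be the bookkeeping in the degree bound for the recursive terms — one must be slightly careful that the rescaling $x\mapsto x^{p^j}$ combined with the induction hypothesis genuinely keeps the exponents below the threshold, and that the $\Delta(x)$ factor does not push anything over. But once the induction is set up so that the hypothesis includes \emph{both} the skew-symmetry and the sharpened degree bound $\deg_{x_{m,i}}\overline{\Mas}_{s-j}<(\bv_m+1)p^{s-j}-\bv_m$ (rather than just the crude one), the estimate closes with room to spare, and the rest is a direct transcription of the earlier lemma.
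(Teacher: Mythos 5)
Your plan — establish skew-symmetry and a degree bound for $\Ghost_{s-1}$, then transport the proof of Lemma \ref{lemmaforT} verbatim — is exactly the route the paper takes (the paper states the degree bound without proof). Your skew-symmetry induction is fine. However, the algebra in your degree estimate is wrong, and in a way that makes your claimed chain of inequalities fail to close.

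Setting $a=\bv_m+1$, the expression you manipulate is
\[
(\bv_m+1)p^{j}-1+p^{j}\bigl((\bv_m+1)p^{s-j}-\bv_m\bigr)
= ap^{j}-1+ap^{s}-(a-1)p^{j}
= (\bv_m+1)p^{s}+p^{j}-1,
\]
not $(\bv_m+1)p^{s}-1-\bv_m(p^j-1)$ as you wrote. Since $(\bv_m+1)p^{s}+p^{j}-1>(\bv_m+1)p^{s}-1$, the bound as stated does not imply the conclusion. The estimate is nevertheless salvageable by tracking the slack from strictness of your two integer inequalities: since degrees are integers, $\deg_{x_{m,i}}\Ghost_{j-1}<(\bv_m+1)p^j-1$ gives $\le(\bv_m+1)p^j-2$, and $\deg_{x_{m,i}}\overline{\Mas}_{s-j}<(\bv_m+1)p^{s-j}-\bv_m$ gives $\le(\bv_m+1)p^{s-j}-\bv_m-1$; then
\[
\deg_{x_{m,i}}\bigl(\Ghost_{j-1}\,\overline{\Mas}_{s-j}(x^{p^j},z^{p^j})\bigr)
\le (\bv_m+1)p^j-2+p^j\bigl[(\bv_m+1)p^{s-j}-\bv_m-1\bigr]
=(\bv_m+1)p^s-2,
\]
which is $<(\bv_m+1)p^s-1$. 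The two units you lost — $1$ from the first factor and $p^j$ from the second — are exactly what absorb the spurious $+p^j$. Alternatively, you can avoid all rounding by reading the exact degrees off \eqref{bps}: $\deg_{x_{m,i}}\overline{\Mas}_s=(p^s-1)(\bv_m+r/q)$ and $\deg_{x_{m,i}}\Mas_s=(\bv_m-1)+(p^s-1)(\bv_m+r/q)$, which satisfy the telescoping identity $\deg_{x_{m,i}}\Mas_j+p^j\deg_{x_{m,i}}\overline{\Mas}_{s-j}=\deg_{x_{m,i}}\Mas_s$, so the induction with hypothesis $\deg_{x_{m,i}}\Ghost_{j-1}\le\deg_{x_{m,i}}\Mas_j$ closes with no slack at all.
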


\begin{proof}
From the inductive definition  (\ref{inddefgh}) it is clear that they are skew-symmetric:
$$
\Ghost_{s-1}(\sigma(x),z)=(-1)^{\sigma } \Ghost_{s-1}(x,z)
$$
and the degree of a 
variable $x_{m,i}$ in the polynomial $\Ghost_{s-1}(x,z)$ has the same bound as in $\Mas_s(x,z)$.
 From (\ref{degM})  we find
$$
\deg_{x_{m,i}} (\Ghost_{s-1}(x,z)) < (\bv_m+1) p^s-1.
$$
Since this degree bound and the skew-symmetry are the only properties of $\Mas_s(x,z)$ used in the proof of Lemma \ref{lemmaforT}, the same logic applies to $\Ghost_{s-1}(x,z)$.
\end{proof}

\begin{thm}
The polynomial $\T_s(z)$ has the following expansion in ghosts:
\bean \label{Tghexp}
\T_s(z)= \sum\limits_{m=1}^{s}\, \G_{m}(z) \T_{s-m}(z^{p^m}),
\eean
where $\T_0(z)=1$. 
\end{thm}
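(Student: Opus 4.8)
The plan is to take the coefficient of $x^{d p^s - 1}$ in the recursive definition (\ref{inddefgh}) of the ghost polynomials and to rewrite each term using the symmetry lemmas just proved. Starting from
\[
\Mas_{s}(x,z) = \Ghost_{s-1}(x,z) + \sum_{j=1}^{s-1} \Ghost_{j-1}(x,z)\,\overline{\Mas}_{s-j}(x^{p^j},z^{p^j}),
\]
(which is just (\ref{inddefgh}) with $s$ replaced by $s-1$), I would extract $\mathrm{coeff}_{x^{d p^s - 1}}$ from both sides. The left side is $\T_s(z)$ by the redefinition (\ref{redefT}). The first term on the right contributes $\G_s(z)$ by definition of the ghost polynomials. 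The work is in the sum.

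For a fixed $j$, I would expand $\mathrm{coeff}_{x^{d p^s - 1}}\bigl(\Ghost_{j-1}(x,z)\,\overline{\Mas}_{s-j}(x^{p^j},z^{p^j})\bigr)$ by splitting the monomial $x^{d p^s-1}$ as a product of a monomial coming from $\Ghost_{j-1}$ and a monomial coming from $\overline{\Mas}_{s-j}(x^{p^j},\cdot)$. Because the second factor is a polynomial in $x^{p^j}$, it only contributes exponents divisible by $p^j$; writing the exponent of $x_{m,i}$ in $\Ghost_{j-1}$ as $a^{(m)}_i$ and in $\overline{\Mas}_{s-j}(x^{p^j},\cdot)$ as $p^j b^{(m)}_i$, one needs $a^{(m)}_i + p^j b^{(m)}_i = d^{(m)}_i p^s - 1$, i.e. $a^{(m)}_i \equiv -1 \pmod{p^j}$, so $a^{(m)}_i = c^{(m)}_i p^j - 1$ for some $c \geq 1$. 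Then Lemma \ref{lemmaForG} (applied with $s$ replaced by $j$) forces $c = \sigma(d)$ for some $\sigma \in \frak{S}_{k,n}$, and the coefficient of $x^{\sigma(d)p^j-1}$ in $\Ghost_{j-1}$ equals $(-1)^\sigma \G_j(z)$. Correspondingly $b^{(m)}_i = (d^{(m)}_i p^s - \sigma(d)^{(m)}_i p^j)/p^j = d^{(m)}_i p^{s-j} - \sigma(d)^{(m)}_i$, so the remaining factor is $\mathrm{coeff}_{x^{d p^{s-j} - \sigma(d)}}\bigl(\overline{\Mas}_{s-j}(x, z^{p^j})\bigr)$. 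Summing over $\sigma$ and using Lemma \ref{tfrommbar} (in the variable $z^{p^j}$, noting $\overline{\Mas}_{s-j}$ is $\frak{S}_{k,n}$-invariant so the signs $(-1)^\sigma$ combine exactly as in that lemma's proof of $\Delta(x)$ expansion) collapses the $\sigma$-sum to $\G_j(z)\,\T_{s-j}(z^{p^j})$. Reindexing $m = j$ gives the sum $\sum_{m=1}^{s-1}\G_m(z)\T_{s-m}(z^{p^m})$, and adding the $m=s$ term $\G_s(z) = \G_s(z)\T_0(z^{p^s})$ (using $\T_0 = 1$) yields (\ref{Tghexp}).

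The main obstacle is the bookkeeping in the middle step: one must be sure that the substitution $x \mapsto x^{p^j}$ in $\overline{\Mas}_{s-j}$ is compatible with the degree bound $\deg_{x_{m,i}}\overline{\Mas}_{s-j}(x,z) < \bv_m p^{s-j}$ — actually this is automatic from the analogue of (\ref{degM}) for $\overline{\Mas}$ — so that the decomposition $a^{(m)}_i = c^{(m)}_i p^j - 1$ with $c^{(m)}_i \geq 1$ is the only possibility, and that $\sigma$ can be chosen uniformly across all blocks $m$. The sign matching (that $(-1)^\sigma$ from Lemma \ref{lemmaForG} is the same sign appearing in the $\overline{\Mas}_{s-j}$ expansion of Lemma \ref{tfrommbar}) is the only place where a careless reader could go wrong, but it follows because both signs are just the sign of the same permutation $\sigma$ that reorders the multi-degree $d$.
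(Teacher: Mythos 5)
Your proposal is correct and follows essentially the same route as the paper's proof: expand $\Mas_s$ via the recursive ghost definition, take the coefficient of $x^{dp^s-1}$, split each product's contribution by exponent, observe that the $x^{p^j}$-substitution forces the ghost exponent to be $\equiv -1 \pmod{p^j}$, invoke Lemma \ref{lemmaForG} to restrict to $u = \sigma(d)$, and recombine via Lemma \ref{tfrommbar}. The only cosmetic difference is that you pull out the $m=s$ term $\Ghost_{s-1}(x,z)$ and reattach it at the end using $\T_0=1$, whereas the paper keeps it inside the sum by noting $\overline{\Mas}_0 = 1$; these are the same computation.
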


\begin{proof}
 From definition of ghosts (\ref{inddefgh}) we have:
 $$
 \Mas_s(x,z)=\sum\limits_{m=1}^{s}\, \Ghost_{m-1}(x,z) \overline{\Mas}_{s-m}(x^{p^m},z^{p^m}).
 $$
 By definition,   
 $ \T_s(z)=\mathrm{coeff}_{x^{d p^s -1}} \Big(  \Mas_s(x,z) \Big).$
Thus, it is enough to prove that
$$
\mathrm{coeff}_{x^{d p^s -1}} \Big(\Ghost_{m-1}(x,z) \overline{\Mas}_{s-m}(x^{p^m},z^{p^m}) \Big) = \G_{m}(z) \T_{s-m}(z^{p^m}).
$$
We compute 
$$
\mathrm{coeff}_{x^{d p^s -1}} \Big(\Ghost_{m-1}(x,z) \overline{\Mas}_{s-m}(x^{p^m},z^{p^m}) \Big) = 
$$
$$
\sum\limits_{{{\alpha \in N(\Ghost_{m-1}(x,z)),} \atop 
{\beta \in N(\bar{\Mas}_{s-m}(x,z)),}} \atop 
{\alpha + p^m \beta = d p^s -1}
}\, \mathrm{coeff}_{x^{\alpha}} ( \Ghost_{m-1}(x,z) )\, \mathrm{coeff}_{x^{\beta}} ( \overline{\Mas}_{s-m}(x,z^{p^m}) )
$$
where $N(A(x))$ denotes the  Newton polygon of a polynomial $A(x)$ in variables~$x$.  The condition $\alpha + p^m \beta = d p^s -1$ implies that
$
\alpha+1 = p^m(p^{s-m} d- \beta)
$
i.e. $\alpha+1 \in p^m \mathbb{Z}^{D}$. This means that
$\alpha$ is of the form
$$
\alpha = u p^{m }-1
$$
where $u \in \mathbb{Z}^D$ is some multi-degree vector. By Lemma \ref{lemmaForG} the coefficients $\mathrm{coeff}_{x^{\alpha}} ( \Ghost_{m-1}(x,z) )$ are non-zero only if $u=\sigma(d)$.  We conclude that $\alpha$ and $\beta$ must be of the form
$$
\alpha=\sigma(d) p^m -1, \ \ \ \beta=d p^{s-m} -\sigma(d), \ \ \sigma \in \frak{S}_{k,n} 
$$
Thus, the above sum takes the form:
$$
\sum\limits_{\sigma \in \frak{S}_{k,n} }\, \mathrm{coeff}_{x^{\sigma(d) p^m -1}} ( \Ghost_{m-1}(x,z) )\, \mathrm{coeff}_{x^{d p^{s-m} -\sigma(d)}} ( \overline{\Mas}_{s-m}(x,z^{p^m}) )
$$
Now by (\ref{sksymG}) we have
$$
\mathrm{coeff}_{x^{\sigma(d) p^m -1}} ( \Ghost_{m-1}(x,z) ) = (-1)^{\sigma} \G_m(z)
$$
and the above sum factors
$$
\G_m(z)\, \sum\limits_{\sigma \in \frak{S}_{k,n} }\, (-1) ^{\sigma} \mathrm{coeff}_{x^{d p^{s-m} -\sigma(d)}} ( \overline{\Mas}_{s-m}(x,z^{p^m}) ) = \G_{m}(z) \T_{s-m}(z^{p^m})
$$
where the last equality is by Lemma \ref{tfrommbar}.
\end{proof}

\subsection{Proof of Theorem \ref{dworkth}}

We prove  Theorem \ref{dworkth} by induction on $s$. 
Assume that the theorem is proved for all indices less than $s$, i.e., the identities:
$$
\dfrac{\T_{s-k}(z)}{\T_{s-k-1}(z^p)}=\dfrac{\T_{s-k-1}(z)}{\T_{s-k-2}(z^p)} \mod p^{s-k-1}
$$
hold for all $k=0,\dots,s-2$. Substituting $z\to z^{p^{k}}$ 
into $k$-th identity and multiplying first $m-2$ of them, after telescopic cancellation we obtain:
$$
\dfrac{\T_{s}(z)}{\T_{s-m+1}(z^{p^{m-1}})} = \dfrac{\T_{s-1}(z)}{\T_{s-m}(z^{p^{m-1}})} \mod p^{s-m+1}
$$
By substituting $z\to z^p$ and taking inverses of both sides we obtain:
\bean
 \label{modrel1}
\dfrac{\T_{s-m+1}(z^{p^{m}})}{\T_{s}(z^p)} = \dfrac{\T_{s-m}(z^{p^{m}})}{\T_{s-1}(z^p)} \mod p^{s-m+1}
\eean
Now, using (\ref{Tghexp}) we find:
\bean \label{gss}
\dfrac{\T_{s+1}(z)}{\T_s(z^p)} = \sum_{m=1}^{s+1} \G_m(z) \dfrac{\T_{s-m+1}(z^{p^m})}{\T_s(z^p)}
\eean
and
$$
\dfrac{\T_{s}(z)}{\T_{s-1}(z^p)} = \sum_{m=1}^{s} \G_m(z) \dfrac{\T_{s-m}(z^{p^m})}{\T_{s-1}(z^p)}
$$
Note that the last term in (\ref{gss}) vanishes modulo $p^s$, since $\G_{s+1}(z)=0 \mod p^s$ by (\ref{ghvanish}). Thus we obtain:
$$
\dfrac{\T_{s+1}(z)}{\T_s(z^p)} -\dfrac{\T_{s}(z)}{\T_{s-1}(z^p)} =\sum\limits_{m=1}^{s}\,\G_m(z)
\left(\dfrac{\T_{s-m+1}(z^{p^m})}{\T_s(z^p)} - \dfrac{\T_{s-m}(z^{p^m})}{\T_{s-1}(z^p)} \right) \mod p^s
$$
From (\ref{modrel1}) and from $\G_{m}(z)\equiv 0 \mod p^{m-1}$ we see that each term in the last sum is divisible by $p^{s}$, thus:
$$
\dfrac{\T_{s+1}(z)}{\T_s(z^p)} -\dfrac{\T_{s}(z)}{\T_{s-1}(z^p)} \equiv 0 \mod p^s
$$
Thus, we proved the theorem for the polynomials (\ref{redefT}), which differ from definition (\ref{definitionofT}) by a rescaling $\T_s(z)\to (-1)^{\frac{(p^s-1)r}{q} N} \T_s(z)$. Upon this rescaling, the right-hand side of (\ref{dworkid}) is multiplied by
\bea
\frac{(-1)^{\frac{(p^{s+1}-1)r}{q} N}}{(-1)^{\frac{(p^s-1)r}{q} N}} = (-1)^{\frac{p^s(p-1)r}{q} N}
= (-1)^{\frac{(p-1)r}{q} N},
\eea
while the left-hand side is multiplied by the same factor.
Thus, the theorem is also proved for polynomials (\ref{definitionofT}).

\section{Convergence and analytic continuations}

\subsection{$p$-adic discs} 
For $u\in \matF_p$ let $\tilde{u} \in \matZ_p$ be its Teichmuller lift, i.e., unique lift satisfying $\tilde{u}^p=\tilde{u}$. 
Let us denote
$$
D_{u}=\{a\in \matZ_p: |a-\tilde{u}|_p<1\}
$$
These $p$-adic discs give a partition
$$
\matZ_p=\bigcup_{u \in \matF_p}\, D_u.
$$
For a polynomial $B(z) \in \matZ_p[z]$, let us define
$$
D_{B}=\{a\in \matZ_p: |B(a)|_p=1\}
$$
Let $\bar{B}(z)$ denote the projection of $B(z)$ to $\matF_p[z]$, then $D_B$ is a union of discs:
$$
D_B= \bigcup_{{u \in \matF_p,}\atop {\bar{B}(u)\neq 0}}\, D_u.
$$
For any $k$ we have
\bean \label{diskp}
\{a\in \matZ_p: |B(a^{p^k})|_{p}=1\}=D_{B}.
\eean

\subsection{Domain of uniform convergence}
For $p$-adic approximations of vertex function $\T_s(z)$ we define
$$
\frak{D}=\{z\in \matZ_p: |\T_1(z)|_p=1\}
$$
\begin{lem} \label{moduludlem}
 For every $s\geq 1$ and $a\in \frak{D}$ we have
 $
 |\T_s(z)|_{p}= |\T_s(z^p)|_{p}=1.
 $
\end{lem}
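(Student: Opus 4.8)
The plan is to prove the two equalities $|\T_s(a)|_p = 1$ and $|\T_s(a^p)|_p = 1$ for $a \in \frak{D}$ by induction on $s$, using the Dwork-type congruences of Theorem \ref{dworkth} as the engine. The base case $s = 1$ is immediate: $|\T_1(a)|_p = 1$ holds by the very definition of $\frak D$, and $|\T_1(a^p)|_p = 1$ follows because $a \mapsto a^p$ preserves the residue disc containing $a$ (the Teichm\"uller lift is fixed by $z \mapsto z^p$), which is precisely the statement \eqref{diskp} applied to $B = \T_1$; that is, $\frak D = \{z : |\T_1(z^p)|_p = 1\}$ as well.

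For the inductive step, suppose $|\T_{s-1}(a)|_p = |\T_{s-1}(a^p)|_p = 1$ and also $|\T_s(a^p)|_p = 1$ (this last one can be folded into the induction, or handled by the same disc-preservation trick once $|\T_s(a)|_p = 1$ is known, applied to $B = \T_s$). I would evaluate the congruence \eqref{dworkid} at $z = a$. Since $a \in \matZ_p$ and all the $\T_j$ have integer coefficients, every quantity $\T_j(a)$, $\T_j(a^p)$ lies in $\matZ_p$. By the inductive hypothesis the denominators $\T_s(a^p)$ and $\T_{s-1}(a^p)$ are units, so the congruence \eqref{dworkid} makes sense as an honest congruence in $\matZ_p$ and gives
$$
\T_{s+1}(a)\,\T_{s-1}(a^p) \equiv \T_s(a)\,\T_s(a^p) \pmod{p^s}.
$$
Actually, to get $|\T_s(a)|_p = 1$ I should instead use the congruence with $s$ replaced by $s-1$, i.e. $\T_s(a)/\T_{s-1}(a^p) \equiv \T_{s-1}(a)/\T_{s-2}(a^p) \pmod{p^{s-1}}$, or more cleanly use Corollary \ref{corrprod}: taking $m = s-1$ there gives $\T_s(z) \equiv \T_1(z)\T_1(z^p)\cdots\T_1(z^{p^{s-1}}) \pmod p$. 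Evaluating at $z = a \in \frak D$, every factor $\T_1(a^{p^i})$ is a $p$-adic unit by \eqref{diskp}, hence the product is a unit mod $p$, hence $\T_s(a)$ is a unit, i.e. $|\T_s(a)|_p = 1$. Then $|\T_s(a^p)|_p = 1$ follows by applying \eqref{diskp} to $B = \T_s$, or simply because $a^p \in \frak D$ too (again by disc preservation), so the displayed argument applies verbatim to $a^p$ in place of $a$.

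So the cleanest route bypasses Theorem \ref{dworkth} entirely and rests on Corollary \ref{corrprod} together with the disc-stability identity \eqref{diskp}: the single congruence $\T_s \equiv \prod_{i=0}^{s-1}\T_1(z^{p^i}) \bmod p$ shows $\T_s(a)$ is a unit whenever each $\T_1(a^{p^i})$ is, and \eqref{diskp} says that is exactly the condition $a \in \frak D$. I expect the main (minor) obstacle is purely bookkeeping: making sure that "$\equiv \bmod p$" genuinely upgrades to "$|\cdot|_p = 1$", which requires knowing the product on the right is $\not\equiv 0 \bmod p$ — and this is where \eqref{diskp} is doing the real work, since a priori $a \in \frak D$ only controls $\T_1(a)$, not $\T_1(a^{p^i})$ for $i \geq 1$. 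Everything else is formal.
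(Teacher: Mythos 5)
Your ``cleanest route'' is exactly the paper's proof: apply Corollary \ref{corrprod} with $m = s-1$ to write $\T_s(z) \equiv \T_1(z)\T_1(z^p)\cdots\T_1(z^{p^{s-1}}) \pmod p$ and then invoke \eqref{diskp} to see each factor (and hence $\T_s(a)$, and likewise $\T_s(a^p)$) is a $p$-adic unit for $a \in \frak D$. You are also right that your initial Dwork-congruence induction is superfluous here and that the real content is the disc-stability fact \eqref{diskp}.
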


\begin{proof}
By Corollary \ref{corrprod} we have:
$$
\T_s(z)=\T_1(z) \T_1(z^p)\dots \T_1(z^{p^{s-1}}) \mod p
$$
The Lemma follows from (\ref{diskp}). 
\end{proof}

\begin{thm} \label{thmconverg}
 The sequence of function $I_s(z)=\T_{s+1}(z)/\T_s(z^p)$, $s=0,1,\dots$ converges uniformly
  on $\frak{D}$ to an analytic $\matZ_p$-valued  function. If $I(z)$ denotes this function, 
  then for any $a\in \frak D$ we have $|I(a)|_{p}=1$.
\end{thm}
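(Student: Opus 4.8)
The plan is to deduce Theorem \ref{thmconverg} almost directly from the Dwork-type congruences in Theorem \ref{dworkth} together with Lemma \ref{moduludlem}. First I would observe that the Dwork congruences \eqref{dworkid} say precisely that $I_s(z) \equiv I_{s-1}(z) \bmod p^s$ as formal power series (or as rational functions), where $I_s(z) = \T_{s+1}(z)/\T_s(z^p)$; here the division makes sense and the congruence makes sense pointwise on $\frak D$ because, by Lemma \ref{moduludlem}, $\T_s(a^p)$ is a $p$-adic unit for every $a \in \frak D$ and every $s \ge 1$ (the case $s=0$ being trivial since $\T_0 \equiv 1$). Thus for any fixed $a \in \frak D$ the sequence of values $I_s(a) \in \matZ_p$ satisfies $|I_s(a) - I_{s-1}(a)|_p \le p^{-s}$, so it is Cauchy with a rate of convergence independent of $a$; hence $I_s \to I$ uniformly on $\frak D$, and the limit $I$ is $\matZ_p$-valued.

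Next I would address the claim that $I$ is analytic on $\frak D$. Each $I_s$ is a rational function $\T_{s+1}(z)/\T_s(z^p)$ whose denominator is a unit on $\frak D$; on each Teichm\"uller disc $D_u \subset \frak D$ one can expand $1/\T_s(z^p)$ as a convergent power series in $(z - \tilde u)$ (using that $\T_s(\tilde u^{\,p})$ is a unit and $|z - \tilde u|_p < 1$), so $I_s$ is given by a convergent power series on $D_u$. A uniform limit of functions each analytic on $D_u$, with the convergence controlled coefficientwise — or more cleanly, by invoking that the congruences \eqref{dworkid} hold at the level of formal power series so that the Taylor coefficients of $I_s$ at $z=0$ stabilize $p$-adically, and analogously at each Teichm\"uller point — is again analytic on $D_u$. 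Taking the union over $u$ with $\bar\T_1(u) \ne 0$ gives analyticity on all of $\frak D$.

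Finally, the norm statement $|I(a)|_p = 1$ for $a \in \frak D$ follows from an ultrametric continuity argument: for each $s$ we have $|I_s(a)|_p = |\T_{s+1}(a)|_p / |\T_s(a^p)|_p = 1/1 = 1$ by Lemma \ref{moduludlem}, and since $|I(a) - I_s(a)|_p \to 0$, for $s$ large enough $|I(a) - I_s(a)|_p < 1 = |I_s(a)|_p$, whence by the ultrametric inequality $|I(a)|_p = |I_s(a)|_p = 1$.

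I do not expect a genuine obstacle here; the content is essentially bookkeeping. The one point demanding a little care is making the phrase ``converges uniformly to an analytic function'' precise — one must either work with the formal power series congruences and argue that coefficientwise $p$-adic limits of power series with $\matZ_p$ coefficients of bounded degree-growth remain analytic on the relevant disc, or argue locally on each $D_u$ as above. I would present the local-on-$D_u$ version, since the discs $D_u$ are exactly the pieces of $\frak D$ and Lemma \ref{moduludlem} is tailored to give unit denominators there.
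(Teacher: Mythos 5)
Your proposal is correct and takes essentially the same route as the paper: the Dwork congruences give the uniform Cauchy property, Lemma \ref{moduludlem} guarantees the denominators are units so that pointwise evaluation and division on $\frak{D}$ make sense, and the ultrametric inequality (together with $|I_s(a)|_p = 1$) gives $|I(a)|_p = 1$. The only difference is that you spell out the analyticity claim disc-by-disc, which the paper leaves implicit; this is a reasonable bit of added care rather than a different argument.
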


\begin{proof}
By Lemma \ref{moduludlem} and equality (\ref{diskp}), for any  $a\in \frak{D}$ we have $|I(a)|_p=1$. By (\ref{dworkid}), $I_s(z)$ is a Cauchy sequence on $\frak{D}$. The theorem follows. 
\end{proof}

\subsection{Analytic continuation}

Now, let us restore both parameters $z_{k,1}=z_1$ and $z_{n-k,1}=z_2$ in (\ref{Lfacts}) and define a polynomial $\hat{\T}_s(z_1,z_2) \in \matZ[z_1,z_2]$ by the same formula
(\ref{definitionofT}) with the same choice of sign. The polynomials $\hat{\T}_s(z_1,z_2)$ are homogeneous, symmetric,
 $\hat{\T}_s(z_1,z_2)=\hat{\T}_s(z_2,z_1)$,
  and have degree $(p^s-1) r k/q$. They are related to the polynomials $\T_s(z)$ which we considered before via
\bean \label{contt}
\hat{\T}_s(z_1,z_2) = z_1^{\frac{(p^s-1)r}{q}k} \, \T_s(z_2/z_1) = z_2^{\frac{(p^s-1)r}{q}k} \, \T_s(z_1/z_2)
\eean
which implies that $z^{\frac{(p^s-1)r}{q}k}\, \T_s(1/z)=\T_s(z)$.
From (\ref{contt}) we also have:
$$
z_1^{\frac{(p-1)}{q} r k}\, \dfrac{\T_{s+1}(z_2/z_1)}{\T_{s}((z_2/z_1)^p)} = z_2^{\frac{(p-1)}{q} r k}\, \dfrac{\T_{s+1}(z_1/z_2)}{\T_{s}((z_1/z_2)^p)}\,.
$$
Passing to the limit $s\to \infty$ and using Theorem  \ref{thmconverg}
we obtain:

\begin{thm} \label{ancontfunc}
  We have 
  $$
  z^{\frac{(p-1)}{q} r k} \, I(1/z)= I(z)$$ 
  if $z, 1/z \in \frak{D}.$
\end{thm}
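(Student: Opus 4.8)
The plan is to establish the identity first at finite level $s$, where it holds as an exact identity of rational functions, and then pass to the $p$-adic limit $s\to\infty$ using Theorem \ref{thmconverg}. The only structural input is the homogeneity and symmetry of $\hat\T_s(z_1,z_2)$ recorded in (\ref{contt}), or equivalently the functional equation $z^{(p^s-1)rk/q}\,\T_s(1/z)=\T_s(z)$ noted just below it.

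Concretely, I would substitute $z_1=1$, $z_2=z$ into the identity displayed immediately before the theorem,
\[
z_1^{\frac{(p-1)}{q}rk}\,\frac{\T_{s+1}(z_2/z_1)}{\T_{s}((z_2/z_1)^p)}=z_2^{\frac{(p-1)}{q}rk}\,\frac{\T_{s+1}(z_1/z_2)}{\T_{s}((z_1/z_2)^p)},
\]
which gives directly
\[
I_s(z)=z^{\frac{(p-1)}{q}rk}\,I_s(1/z),\qquad I_s(z)=\frac{\T_{s+1}(z)}{\T_s(z^p)},
\]
as an identity of rational functions in $z$. One can also see this by hand from $\T_s(1/w)=w^{-a_s}\T_s(w)$ with $a_s=(p^s-1)rk/q$: it yields $I_s(1/z)=z^{\,pa_s-a_{s+1}}I_s(z)$, and $pa_s-a_{s+1}=\frac{rk}{q}\big(p(p^s-1)-(p^{s+1}-1)\big)=-\frac{(p-1)rk}{q}$. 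Here I would record that by (\ref{pq form}) one has $q\mid p-1$, so $\frac{(p-1)}{q}rk=\ell rk\in\matN$ and $z^{\frac{(p-1)}{q}rk}$ is an honest monomial; in particular no $p$-adic fractional powers enter.

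It then remains to let $s\to\infty$. For $z\in\frak D$ and $1/z\in\frak D$, Lemma \ref{moduludlem} guarantees both sides of the finite-level identity are well defined (the denominators $\T_s(z^p)$ and $\T_s((1/z)^p)$ are $p$-adic units), and Theorem \ref{thmconverg} gives $I_s(z)\to I(z)$ and $I_s(1/z)\to I(1/z)$ in the $p$-adic norm. Since multiplication by the fixed integer power $z^{\frac{(p-1)}{q}rk}$ is continuous, passing to the limit yields $I(z)=z^{\frac{(p-1)}{q}rk}\,I(1/z)$, which is the assertion.

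I do not expect a genuine obstacle: the statement is a formal consequence of the homogeneity of $\hat\T_s$ together with the already-established uniform convergence of the $I_s$. The only points requiring care are bookkeeping ones — checking that the exponent $pa_s-a_{s+1}$ collapses to $-(p-1)rk/q$ and is a nonnegative integer, and observing that $\frak D$ is not a priori invariant under $z\mapsto 1/z$ (points of $\frak D\setminus\matZ_p^\times$ have no inverse in $\matZ_p$), which is exactly why the hypothesis $z,1/z\in\frak D$ is imposed so that both limits $I(z)$ and $I(1/z)$ are meaningful.
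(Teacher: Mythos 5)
Your proposal is correct and follows essentially the same route as the paper: you deduce the finite-level identity $z^{(p-1)rk/q}\,I_s(1/z)=I_s(z)$ from the homogeneity and symmetry of $\hat\T_s(z_1,z_2)$ recorded in (\ref{contt}) (equivalently, from $z^{(p^s-1)rk/q}\T_s(1/z)=\T_s(z)$), and then pass to the $p$-adic limit using Theorem \ref{thmconverg} and Lemma \ref{moduludlem}. Your explicit exponent check $pa_s-a_{s+1}=-(p-1)rk/q$ and the observation that the hypothesis $z,1/z\in\frak D$ is exactly what makes both limits meaningful are the same bookkeeping the paper's proof relies on, just spelled out a bit more.
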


 In particular, let $\tilde u$ be the Teichmuller lift of an element $u\in \matF_p$, then $I(\tilde u) = I(1/\tilde u)$ if $k$ is even or $k$ is odd and $u$ is a quadratic residue, or $I(\tilde u) = - I(1/\tilde u)$ if $k$ is odd and $u$ is a quadratic non-residue.

 \begin{example}
 Consider the  hypergeometric function
$$
F(z) = {}_{n-1}F_n\Big(\frac rq, \dots, \frac rq; 1,\dots,1;z\Big).
$$ 
According to our theorems, the function $I(z) =F(z)/F(z^p)$ defined in a neighborhood of $z=0$ as the
ratio of hypergeometric power series can be $p$-adically 
analytically  continued to the corresponding domain $\frak D$ and satisfies  there the identity
$I(z) = z^{(p-1)r k/q}I(1/z)$. Notice that the same function defined over
complex numbers does not have such a relation. 

 \end{example}

\appendix

 \section{Remarks on polynomial superpotential $\Phi_1(x,z)$ and $\F_p$-points}


\subsection{The case $k=1$ and $\om=1/2$} 

Recall that the vertex function in this case is given by the integral (\ref{tpninteg}), 
\bean \label{tpninteg}
\Ver(z)=\dfrac{\alpha}{(2 \pi i)^{n-1}}\oint_{\gamma}\, \dfrac{ dx_{1,1} \wedge \dots \wedge dx_{n-1,1}}{y} \,
\eean
where 
\bean \label{hypersurf}
 y^{2}=x_{1,1}\dots x_{n-1,1} (x_{2,1}-x_{1,1}) \dots (x_{n-1,1}-x_{n-2,1}) (1-x_{1,1}) (z-x_{n-1,1}).
 \eean

 Let $p$ be an odd prime and  $z \in \matF_p$\,.\, Denote  by $N(z)$ 
  the number of $\matF_p$-points on the (singular) affine  hypersurface (\ref{hypersurf}).
  Define $\T_1(z)$ as the coefficient of $\prod_{i=1}^{n-1}x_{i,1}^{p-1}$ in $\Phi_1(x,z)$, which differs by the
 factor $(-1)^{(n-1)(p-1)/2}$ from \eqref{definitionofT}.

\begin{thm}
 For  $z \in \matF_p$\,,\, we have
 $ N(z) =  (-1)^{n-1} \T_1(z) \mod p.$
\end{thm}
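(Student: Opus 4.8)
The plan is the classical modulo-$p$ point count of a quadratic cover, read off the quadratic character. Write $f(x,z)$ for the right-hand side of \eqref{hypersurf}, viewed as a polynomial in $x_{1,1},\dots,x_{n-1,1}$ with $z\in\matF_p$ fixed. The first step is to identify the superpotential with a power of $f$: for $k=1$ all dimension vectors equal $1$, so $\Delta(x)=1$, and an elementary computation from the explicit exponents in \eqref{bps} at $\om=1/2$ shows that $\Phi_1(x,z)=\bar\Phi(x,z)^{1-p}$ equals $f(x,z)^{(p-1)/2}$ up to the relabeling $x_{i,1}\mapsto x_{n-i,1}$ of the variables. Such a relabeling changes neither $N(z)$ nor the coefficient of the totally symmetric monomial $\prod_{i=1}^{n-1}x_{i,1}^{p-1}$, so I may assume outright that $\Phi_1(x,z)=f(x,z)^{(p-1)/2}$; hence, by the definition of $\T_1(z)$, we have $\T_1(z)=\mathrm{coeff}_{\prod_i x_{i,1}^{p-1}}\!\big(f(x,z)^{(p-1)/2}\big)$.

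Next I would count $\matF_p$-points fibrewise over $x$. For $a\in\matF_p$ the number of $y\in\matF_p$ with $y^2=a$ is $1+\psi(a)$, where $\psi$ denotes the quadratic character extended by $\psi(0)=0$; hence $N(z)=p^{n-1}+\sum_{x\in\matF_p^{n-1}}\psi(f(x,z))$. Since $p$ is odd, Euler's criterion $\psi(a)\equiv a^{(p-1)/2}\pmod p$ holds for every $a\in\matF_p$ (including $a=0$), so reducing modulo $p$ and discarding the term $p^{n-1}$ (which vanishes as $n\ge2$) gives $N(z)\equiv\sum_{x\in\matF_p^{n-1}}f(x,z)^{(p-1)/2}=\sum_{x}\Phi_1(x,z)\pmod p$. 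Expanding $\Phi_1(x,z)=\sum_u c_u(z)\,x^u$ in monomials and interchanging the two finite sums, I would invoke the elementary identity $\sum_{t\in\matF_p}t^m\equiv-1\pmod p$ when $m\ge1$ and $(p-1)\mid m$, and $\equiv0\pmod p$ otherwise. Consequently only those exponent vectors $u$ all of whose entries are positive multiples of $p-1$ contribute, each with an overall factor $(-1)^{n-1}$.

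The decisive step is the degree bound that pins this down to a single monomial: every variable $x_{i,1}$ occurs in exactly three of the linear factors of $f$, so $\deg_{x_{i,1}}f=3$ and hence $\deg_{x_{i,1}}\Phi_1\le 3(p-1)/2<2(p-1)$. Thus the only positive multiple of $p-1$ that can occur as an exponent of $x_{i,1}$ is $p-1$ itself, so the sole surviving term corresponds to $u=(p-1,\dots,p-1)$, yielding $N(z)\equiv(-1)^{n-1}c_{(p-1,\dots,p-1)}(z)=(-1)^{n-1}\T_1(z)\pmod p$, which is the assertion. I do not anticipate a genuine obstacle here — the argument is entirely classical — so the only points needing a little care are the exponent bookkeeping that identifies $\Phi_1(x,z)$ with $f(x,z)^{(p-1)/2}$ and the elementary verification that $\deg_{x_{i,1}}f=3$ for every $i$. (The degenerate case $n=1$ is excluded by the standing hypothesis $n\ge2k$.)
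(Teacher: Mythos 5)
Your proof is correct and follows essentially the same strategy as the paper's: count $\matF_p$-points fibrewise via the quadratic character, apply Euler's criterion to pass to $\Phi_1=f^{(p-1)/2}$, expand in monomials, and use $\sum_{t\in\matF_p}t^m$ to isolate the coefficient of $\prod_i x_{i,1}^{p-1}$. If anything, your explicit bound $\deg_{x_{i,1}}\Phi_1=3(p-1)/2<2(p-1)$ is a cleaner justification of the decisive step than the paper's appeal to Lemma~\ref{lemmaforT}, since that lemma only constrains coefficients of monomials whose exponents are $\equiv -1\pmod p$ and so does not by itself exclude an exponent $2(p-1)\equiv -2\pmod p$; your degree count closes that small gap directly.
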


\begin{proof}
Let $P(x,z)$ denote  the right-hand side of (\ref{hypersurf}).
If $P(x,z)=0$ for some $x\in \matF_p^{n-1}$,
then
we have one solution of (\ref{hypersurf}) given by the point $(x,0)$. 
If $P(x,z)\neq 0$, then $P(x,z)^{\frac{p-1}{2}} =\Phi_1(x,z) =\pm 1$. By Euler's criterion the 
equation has two solutions $(\pm y,t)$ if $\Phi_1(t,z) = 1$ and no solutions if  $\Phi_1(t,z) = -1$. 
We conclude that, the total number of solutions
is 
$$
N(z)=\sum\limits_{t \in \matF_p^{n-1}}\, (1+\Phi_1(t,z)) 
$$
which  modulo $p$ equals:
$$
N(z)=\sum\limits_{t \in \matF_p^{n-1}}\, \Phi_1 (t,z).
$$
Expanding $\Phi_1(x,z)$  into a sum of monomials we obtain
$$
\Phi_1(x,z) = \sum\limits_{m \in \matN^{n-1}}\, c_m(z) x^m 
$$
where  $x^m=x_{1,1}^{m_1}\dots x_{n-1,1}^{m_{n-1}}$.
Thus
$$
N(z)=\sum\limits_{m \in \matN^{n-1}}\, c_m(z)\, \Big( \sum\limits_{t \in \matF_p^{n-1}}\, t^m \Big).
$$
To compute the sum in the brackets we note that
$$
\sum\limits_{t \in \matF_p^{n-1}}\, t^m =
\left\{\begin{array}{ll}
(p-1)^{n-1}, & \textrm{if $(p-1)| m_i$ for all $i$}\\
0, & \textrm{otherwise}.
\end{array}\right. 
$$
By Lemma \ref{lemmaforT}, the only monomial $x^m$ in $\Phi_1(x,z)$,\,
 such that $(p-1)|m_i$ for all $i$,\, is the monomial
$$
x^{p-1} = \prod\limits_{i} x_{i,1}^{p-1} .
$$
Thus, we conclude that
$$
N(z)=(p-1)^{n-1} c_{p-1}(z) = (-1)^{n-1}\T_{1}(z) 
$$
where the last equality is again modulo $p$. 
 \end{proof}

\subsection{The case $k=1$, $n=2$  and $0<r<q$} 

Recall that $p=\ell q+1$. 
Denote by  $N(z)$ the number of $\F_p$-points on the affine curve
\bean 
\label{curve}
 y^{q}=x_{1,1}^{q-r}(1-x_{1,1})^r (z-x_{1,1})^r.
 \eean
 Denote by $P(x_{1,1},z)$ the right-hand side in \eqref{curve}. 
Recall the first polynomial superpotential
\bea
\Phi_1(x_{1,1},z) = P(x_{1,1},z)^{(p-1)/q}\,,
\eea
see \eqref{polsuper}.
Define $\T_1(z)$ as the coefficient of $x_{1,1}^{p-1}$ in $\Phi_1(x,z)$, which differs by the
 factor $(-1)^{(n-1)(p-1)r/q}$ from \eqref{definitionofT}.

\vsk.3>
We relate the number of points $N(z)$ to the constant term of the ``Fourier expansion'' of the element
$-T_1(z)\in \F_p$ as follows.

Define the nonnegative integer 
\bea
M(z) =\big\vert\{ t\in \F_p\mid \Phi_1(t,z)=1\}\big\vert\,.
\eea

\begin{lem}
\label{lem 1}
 For  $z \in \matF_p-\{0,1\}$\,,\, we have the following equality in $\Z$\,:
 \bean
\label{m}
N(z) = 3 +qM(z)\,.
\eean
\end{lem}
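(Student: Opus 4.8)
The plan is to count $\F_p$-points on the affine curve \eqref{curve} by separating the contribution of points where the right-hand side $P(t,z)$ vanishes from the contribution of points where it does not. First I would fix $z\in\F_p-\{0,1\}$ and observe that $P(t,z)=t^{q-r}(1-t)^r(z-t)^r$ vanishes in $\F_p$ exactly at the three distinct points $t=0$, $t=1$, $t=z$ (distinct because $z\neq 0,1$). For each such $t$, the equation $y^q=0$ has the unique solution $y=0$, contributing exactly $3$ points of the form $(t,0)$.

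Next I would analyze the points $t\in\F_p$ with $P(t,z)\neq 0$. Since $p=\ell q+1$, the group $\F_p^\times$ is cyclic of order $p-1=\ell q$, so the $q$-th power map on $\F_p^\times$ is $q$-to-$1$ onto the subgroup of $q$-th powers, which has index $q$. Thus for a nonzero value $v\in\F_p^\times$, the equation $y^q=v$ has exactly $q$ solutions if $v$ is a $q$-th power and $0$ solutions otherwise. By the analogue of Euler's criterion in this setting, $v$ is a $q$-th power in $\F_p^\times$ if and only if $v^{(p-1)/q}=1$. Applying this with $v=P(t,z)$ and recalling $\Phi_1(t,z)=P(t,z)^{(p-1)/q}$ from \eqref{polsuper}, the fiber over such a $t$ has exactly $q$ points when $\Phi_1(t,z)=1$ and $0$ points otherwise. (Note $\Phi_1(t,z)$ is a $\big((p-1)/q\big)$-th power raised to make it a root of unity of order dividing $q$ in $\F_p$; since we are restricting to $t$ with $P(t,z)\neq 0$, $\Phi_1(t,z)$ is a nonzero element whose $q$-th power is $P(t,z)^{p-1}=1$, so it lies in the group $\mu_q\subset\F_p^\times$, and equals $1$ precisely when $P(t,z)$ is a $q$-th power.)

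Combining the two contributions gives
\bea
N(z) = 3 + q\cdot\big\vert\{t\in\F_p : P(t,z)\neq 0,\ \Phi_1(t,z)=1\}\big\vert.
\eea
It remains only to identify the set in this cardinality with the set defining $M(z)$. Since $\Phi_1(t,z)=1$ already forces $\Phi_1(t,z)\neq 0$, hence $P(t,z)=\Phi_1(t,z)^{q}\cdot(\text{unit})\neq 0$ — more directly, if $P(t,z)=0$ then $\Phi_1(t,z)=0\neq 1$ — the condition $P(t,z)\neq 0$ is automatic whenever $\Phi_1(t,z)=1$. Therefore the set $\{t\in\F_p : P(t,z)\neq 0,\ \Phi_1(t,z)=1\}$ coincides with $\{t\in\F_p : \Phi_1(t,z)=1\}$, whose cardinality is $M(z)$ by definition. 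This yields $N(z)=3+qM(z)$, as claimed.

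The main obstacle, such as it is, is the careful bookkeeping at the degenerate points $t=0,1,z$: one must check that at these points $\Phi_1(t,z)=0$ (so they are not counted in $M(z)$, which keeps the formula clean) and that they genuinely contribute a single point each with $y=0$, and that the three points are distinct under the hypothesis $z\neq 0,1$. The only other point requiring a line of justification is the equivalence "$v$ is a $q$-th power in $\F_p^\times$ $\iff$ $v^{(p-1)/q}=1$," which follows from cyclicity of $\F_p^\times$ together with $q\mid p-1$; everything else is a direct count.
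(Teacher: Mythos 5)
Your proof is correct and follows essentially the same route as the paper's: separate the three zeros of $P(t,z)$, use the $q$-to-$1$ nature of the $q$-th power map on $\F_p^\times$ (the paper phrases this via a generator $\theta$, you via cyclicity and index), and identify $q$-th powers through $v^{(p-1)/q}=1$ so that $\Phi_1(t,z)=1$ detects the fibers of size $q$. Your observation that $\Phi_1(t,z)=1$ automatically excludes the three degenerate points (so the set is exactly the one defining $M(z)$) is a small bookkeeping refinement that the paper leaves implicit, but the argument is the same.
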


\begin{proof}
Let $\theta\in \F_p^\times $ be a generator.  An element $a\in \F_p^\times$ is a $q$th power, 
if and only if $a=\theta^{q m}$ for some $m$. If $a=\theta^{qm}$, then the equation 
$y^q=a$ has exactly $q$ distinct solutions  $y_i = \theta^{m + \ell i}$, $i =0, 1, \dots, q-1$.
In particular,  the $q$th roots of unity are the elements
\bean
\label{zeta}
\zeta^i=\theta ^{\ell i}, \qquad i=0,1,\dots,q-1.
\eean
These remarks show that  for $t\in\F-\{0,1,z\}$, the equation 
\bean
\label{1}
y^q= P(t,z)
\eean
has solutions  if and only if $\Phi_1(t,z) =1$. Moreover, if $\Phi_1(t,z) =1$, 
then equation \eqref{1} has $q$ distinct solutions.
This reasoning proves equation \eqref{m} in which the summand 3 corresponds to the three points
$\{0,1,z\}$.
\end{proof}

 We have the formula
\bean
\label{T}
- T_1(z) = \sum_{t\in\F_{p}} \Phi_1(t,z)   \pmod{p}.
\eean
Notice that for any  $t\in\F_p-\{0,1,z\}$, the value $\Phi_1(t,z)$ is a $q$th root of unity.
Replacing  each $\Phi_1(t,z)$ in \eqref{T}
by the corresponding root $\zeta^i$ we obtain
the equation
\bean
\label{zt}
- T_1(z) = \sum_{i=0}^{q-1} A_i \,\zeta^i \pmod{p},
\eean
where $A_i$ are non-negative integers with 
\bea
\sum_{i=0}^{q-1}A_i =p-3, \qquad A_0=M(z)\,.
\eea  

\begin{cor}
Consider this presentation \eqref{zt}
of $-T_1(z)$ as a sum of $q$th roots, then we have the following equality in $\Z\,:$
\bean
\label{TN}
N(z) = 3 + q A_0\,.
\eean
\end{cor}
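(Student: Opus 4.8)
The plan is to obtain the corollary immediately by combining Lemma \ref{lem 1} with the identification of the coefficient $A_0$ already recorded just above equation \eqref{zt}; there is essentially nothing left to do beyond a bookkeeping check. First I would recall that, for $z\in\F_p-\{0,1\}$, Lemma \ref{lem 1} asserts the integer equality $N(z)=3+qM(z)$, where $M(z)=|\{t\in\F_p : \Phi_1(t,z)=1\}|$. Next I would verify that the coefficient $A_0$ in the decomposition \eqref{zt} of $-T_1(z)$ as a $\Z$-linear combination of the $q$th roots of unity equals exactly $M(z)$. For $t\in\{0,1,z\}$ one has $P(t,z)=0$, hence $\Phi_1(t,z)=P(t,z)^{(p-1)/q}=0$ (the exponent $(p-1)/q=\ell$ is a positive integer), so these three points contribute nothing to the sum \eqref{T}. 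For each of the remaining $p-3$ values of $t$, Fermat's little theorem gives $\Phi_1(t,z)^q=P(t,z)^{p-1}=1$, so $\Phi_1(t,z)$ is one of the roots $\zeta^0,\dots,\zeta^{q-1}$; grouping the terms of \eqref{T} according to which root occurs produces \eqref{zt} with $A_i=|\{t\in\F_p : \Phi_1(t,z)=\zeta^i\}|$, and in particular $A_0=M(z)$ since $\zeta^0=1$. Substituting $A_0=M(z)$ into $N(z)=3+qM(z)$ gives \eqref{TN}.

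The only subtlety worth spelling out is that $A_0=M(z)$ is a genuine equality in $\Z$, not merely a congruence modulo $p$: both sides are, by construction, the cardinality of the one finite set $\{t\in\F_p : \Phi_1(t,z)=1\}$, from which the excluded points $0,1,z$ are automatically absent (as $\Phi_1$ vanishes there), so no ambiguity arises from the fact that \eqref{zt} holds only modulo $p$. Since Lemma \ref{lem 1} is likewise an identity in $\Z$, the resulting formula $N(z)=3+qA_0$ holds in $\Z$, as claimed. I do not foresee any real obstacle: the arithmetic input (counting $q$th powers via the generator $\theta$, equivalently Euler's criterion) has already been carried out in the proof of Lemma \ref{lem 1}, and the combinatorial meaning of the ``Fourier coefficients'' $A_i$ is precisely what makes the final step routine.
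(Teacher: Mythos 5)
Your proof is correct and matches the paper's intent exactly: the corollary is an immediate consequence of Lemma \ref{lem 1} together with the identification $A_0 = M(z)$, which the paper simply asserts in the sentence preceding \eqref{zt}. Your careful verification that $A_0 = M(z)$ holds as a genuine integer equality (and not merely a congruence), via the observation that $\Phi_1$ vanishes at $t\in\{0,1,z\}$ and takes $q$th-root-of-unity values elsewhere, is exactly the right bookkeeping.
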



\begin{rem}
The periods of the curve 
$y^q = x^{q-r}(x-1)^r(x-z)^r$
satisfy  the hypergeometric differential equation whose holomorphic solution at $z=0$ is
the hypergeometric function
\bean
{}_2F_1\left(\frac rq, \frac rq; 1;z\right).
\eean
Let $\on{Fr}(u)$ denote the $2\times 2$-matrix of the Frobenius structure for this differential equation.
For $z\in\F_p-\{0,1\}$, consider the  polynomial
\bea
\det\left(x+\on{Fr}(t_z)\right) = x^2+L_z x+A_z\,
\eea
where $t_z$ is the Teichmuller representative of $z\in\F_p$\, and $L_z, \,A_z\in\Z_p$\,.
\smallskip

For $i = 0,1,\dots,q-1$, denote by $\om^i\in\Z_p$ the Teichmuller representative of
the element  $ \zeta^i\in \F_p$ defined in \eqref{zeta}.
We expect the following identity in $\Z_p$\,:
\bean
\label{identity}
-L_z \,= \,\sum_{i=0}^{q-1} A_i\, \om^i\,
\eean
where the nonnegative integers $A_i$ are defined in formula 
\eqref{zt}.

Computer calculations support this statement for  $(r,q, p)$ equal to 
 \bea 
 (1,3,7),\quad (1,3,19), \quad(4,5,11), \quad (5,6,13) \,.
\eea

\end{rem}

\bigskip

\end{document}